\newtheorem{theorem}{Theorem}
\newtheorem{lemma}[theorem]{Lemma}
\newtheorem{remark}{Remark}
\newtheorem{example}{Example}
\newcommand{\ie}{i.e.,\ }
\newcommand{\eg}{e.g.}
\newcommand{\resp}{resp.\ }
\newcommand{\N}{\ensuremath{\mathbb{N}}}
\newcommand\tactic[1]{\texttt{#1}}
\newtheorem*{coqrmk}{Coq Remark}
\newcommand\sep{\qquad}
\newlength{\offset}
\newenvironment{listdef*}{
  \begin{trivlist}
  \item[]
    \begin{tabular}{@{\hspace{\offset}}l@{\qquad}r@{}r@{}ll@{}}
}{
    \end{tabular}
  \end{trivlist}
}
\def\Longarrow#1#2{\setbox1=\hbox{$\ \scriptstyle{#1}\ $}
                   \setbox2=\hbox{$\ \scriptstyle{#2}\ $}
                   \ifnum\wd1>\wd2
                      \mathop{\hbox to
                        \wd1{\rightarrowfill}}\limits_{\box1}^{\box2}
                   \else
                     \mathop{\hbox to
                        \wd2{\rightarrowfill}}\limits_{\box1}^{\box2}
                   \fi
                  }
\def\Longconstrarrow#1#2{\setbox1=\hbox{$\ \scriptstyle{#1}\ $}
                   \setbox2=\hbox{$\ \scriptstyle{#2}\ $}
                   \ifnum\wd1>\wd2
                      \lhook\mathrel{\mkern-10mu}\mathop{\hbox to
                        \wd1{\rightarrowfill}}\limits_{\box1}^{\box2}
                   \else
                     \lhook\mathrel{\mkern-10mu}\mathop{\hbox to
                        \wd2{\rightarrowfill}}\limits_{\box1}^{\box2}
                   \fi
                  }
\newcommand{\SetProgSymbols}{\catcode`\#=12\catcode`\%=12\catcode`\_=12}
\newenvironment{newprogram}
{\begin{quote}\tt\obeylines\obeyspaces\SetProgSymbols%
\addtolength{\parskip}{0pt}}
{\end{quote}\par\noindent}
\obeyspaces\gdef {\ }}
\def\picture #1 by #2 (#3){
  \makebox[#1][l]{
    \special{pictfile #3} % this is the low-level interface
    \rule{0pt}{#2}
    }
  }
\def\scaledpicture #1 by #2 (#3 scaled #4){{
  \dimen0=#1 \dimen1=#2
  \divide\dimen0 by 1000 \multiply\dimen0 by #4
  \divide\dimen1 by 1000 \multiply\dimen1 by #4
  \picture \dimen0 by \dimen1 (#3 scaled #4)}
  }
\newcommand{\Drawing}[4]{
          \ifnum \DrawingMode=0
                 \scaledpicture #1 by #2 (#3 scaled #4)
          \else
                 {\epsfig{file={#3.ps}, 
                          width=\ifdim#1>\textwidth 
                                      \textwidth 
                                 \else 
                                      #1 
                                 \fi}}
          \fi}
\newcommand{\id}[1]{{\tt #1}} % GB as in the book
\newcommand{\Tnothing}{\id{nothing}}
\newcommand{\Tpause}{\id{pause}}
\newcommand{\Tgotopause}{\id{gotopause}}
\newcommand{\Tgoto}{\id{goto}}
\newcommand{\Texit}[1]{\id{exit}~{#1}}
\newcommand{\Tawait}[1]{\id{await}~{#1}}
\newcommand{\Tawimm}[1]{\id{await}\ \id{immediate}~{#1}}
\newcommand{\Temit}[1]{\id{emit}~{#1}}
\newcommand{\Tif}[3]{\id{if}~#1~\id{then}~#2~\id{else}~#3~\id{end}}
\newcommand{\Tsequence}[2]{{#1}\,;\,{#2}}
\newcommand{\Tloop}[1]{\id{loop}~#1~\id{end}}
\newcommand{\Tparallel}[2]{#1\, \id{||}\, #2}
\newcommand{\Ttrap}[2]{\id{trap}~#1~\id{in}~#2~\id{end}}
\newcommand{\Tsuspend}[2]{\id{suspend}~#2~\id{when}~#1}
\newcommand{\Tsignaldecl}[2]{\id{signal}~#1~\id{in}~#2~\id{end}}
\newcommand{\Sk}[1]{{#1}}
\newcommand{\Snothing}{\Sk 0}
\newcommand{\Spause}{\Sk 1}
\newcommand{\Sawimm}[1]{\id{@} {#1}}
\newcommand{\Semit}[1]{!{#1}}
\newcommand{\Sif}[3]{{#1}\,?\,{#2}\,,\,{#3}}
\newcommand{\Ssequence}[2]{{#1}\,;\,{#2}}
\newcommand{\Sloop}[1]{{#1}\mathop{*}}
\newcommand{\Sparallel}[2]{{#1} \mathbin{|} {#2}}
\newcommand{\Strap}[1]{\left\{{#1}\right\}}
\newcommand{\Sexit}[1]{{#1}}
\newcommand{\Sshift}[1]{\operatorname{\uparrow}{#1}}
\newcommand{\Ssuspend}[2]{{#1} \operatorname{\mathord{\supset}} {#2}}
\newcommand{\Ssignaldecl}[2]{{#2} \backslash {#1}}
\newcommand{\Tpar}[2]{\Tparallel{#1}{#2}}
\newcommand{\Sseq}[2]{\Ssequence{#1}{#2}}
\newcommand{\Spar}[2]{\Sparallel{#1}{#2}}
\newcommand\state[1]{\widehat{#1}}
\newcommand{\activatedpause}{\state{\Spause}}
\newcommand\term[1]{\overline{#1}}
\definecolor{darkgreen}{rgb}{0,0.7,0}
\newcommand\Sin[1][]{\mathop{#1 \text{\scriptsize $\squarelrblack$}}}
\newcommand\SWin[1][]{\mathop{#1 \text{\scriptsize $\square$}}}
\newcommand\SBin[1][]{\mathop{#1 \text{\scriptsize $\blacksquare$}}}
\newcommand\go{\SWin[\color{darkgreen}]\!^+}
\newcommand\nogo{\SWin[\color{red}]\!^-}
\newcommand\res{\SBin[\color{darkgreen}]\!^+}
\newcommand\nores{\SBin[\color{red}]\!^-}
\newcommand\gores{\Sin[\color{darkgreen}]\!^+}
\newcommand\nogores{\Sin[\color{red}]\!^-}
\DeclareMathOperator\inCsymbol{in}
\newcommand\inC[1]{\inCsymbol(#1)}
\newcommand\WIRE[3]{\operatorname{\text{WIRE}}(#1, #2, #3)}
\newcommand\PAUSE[3]{\operatorname{\text{PAUSE}}(#1, #2, #3)}
\newcommand\band{\mathbin{\&}}
\newcommand\Sout{\operatorname{\text{\scriptsize $\circlerighthalfblack$}}}
\newcommand\black[1]{\operatorname{\text{\scriptsize $\mdlgblkcircle_{#1}$}}}
\newcommand\white[1]{\operatorname{\text{\scriptsize $\mdwhtcircle_{#1}$}}}
\newcommand\whiteAll{\white{∅}}
\DeclareMathOperator\outCsymbol{out}
\newcommand\outC[1]{\outCsymbol(#1)}
\DeclareMathOperator\outtoCsymbol{\text{out2C}}
\newcommand\outtoC[1]{\outtoCsymbol(#1)}
\newcommand\VC[2]{\ensuremath{\operatorname{\text{\texttt{valid\_coloring}}}(#1, #2)}}
\newcommand\VD[2]{\ensuremath{\operatorname{\text{\texttt{valid\_dom}}}(#1, #2)}}
\newcommand\changeI[2]{[#1] #2}
\newcommand\fromstmtName{from\_stmt}
\newcommand\fromstmt[1]{\operatorname{\text{\fromstmtName}}(#1)}
\newcommand\fromstateName{from\_state}
\newcommand\fromstate[1]{\operatorname{\text{\fromstateName}}(#1)}
\newcommand\totermName{to\_term}
\newcommand\toterm[2]{\operatorname{\text{\totermName}}(#1, #2)}
\newcommand\toeventName{to\_event}
\newcommand\toevent[2]{\operatorname{\text{\toeventName}}(#1, #2)}
\newcommand{\SingletonEvent}[1]{\{{#1}^+\}}
\newcommand{\addEvent}[3]{#3 * #1^{#2}}
\newcommand{\restrictEvent}[2]{{#1 \setminus #2}}
\newcommand\CtoKName{\id{C2K}}
\newcommand\TotalName{\id{Total}}
\newcommand\CtoK[1]{\CtoKName({#1})}
\newcommand\Total[1]{\TotalName({#1})}
\DeclareMathOperator\dom{dom}
\newlength{\rulevspace}
\newlength{\rulehspace}
\newcommand\SigP{sig${}^+$}
\newcommand\SigM{sig${}^-$}
\newcommand\CSigP{C-sig${}^+$}
\newcommand\CSigM{C-sig${}^-$}
\newcommand\deltafun[2]{\delta(#1, #2)}
\newcommand{\xdoubleheadrightarrow}[3][]{%
  #2\joinrel
  \ext@arrow 0359\rightarrowfill@ {#1}{#3}%
  \mathrel{\mspace{-20mu}}\rightarrow
}
\newcommand\LBS[6][]{#2 \, \xrightarrow[#3]{\; #4, \, #5 \;}_{\text{\tiny #1}} #6}
\newcommand\CBS[5]{#1 \lhook\,\joinrel\xrightarrow[#2]{\; #3, \, #4 \;} #5}
\newcommand\CSS[6][]{#2 \xdoubleheadrightarrow[#3]{\lhook\,}{\; #4, \, #5 \;}_{\text{\tiny #1}} #6}
\newcommand\CSSs[6][s]{#2 \xdoubleheadrightarrow[#3]{\lhook\,}{\; #4, \, #5 \;}_{\text{\tiny #1}} #6}
\newcommand\CSSr[6][r]{#2 \xdoubleheadrightarrow[#3]{\lhook\,}{\; #4, \, #5 \;}_{\text{\tiny #1}} #6}
\newcommand\micro[4][]{#3 \, \xrightarrow[#2\,]{}_{\text{\tiny #1}} #4}
\newcommand\microexp[5][]{#4 \, \xrightarrow[#3\,]{}_{\text{\tiny #1}}^{#2} #5}
\newcommand\micros[4][]{\microexp[#1]{*}{#2}{#3}{#4}}
\newcommand\MAXFirstName{Max}
\DeclareMathOperator*{\MAXop}{\MAXFirstName}
\newcommand\MAX[2]{\MAXop({#1},{#2})}
\newcommand{\Kdown}[1]{\operatorname{\downarrow}{#1}}
\newcommand{\Kup}[1]{\operatorname{\uparrow}{#1}}
\newcommand{\SuspNow}[2]{\operatorname{SuspNow}({#1},{#2})}
\newcommand{\synchronize}[4]{\operatorname{synchronizer}_{{#1},{#2}}(#3, #4)}
\newcommand\baseName{\ensuremath{\cal B}}
\newcommand\base[1]{\mathop{\baseName}({#1})}
\newcommand\expandName{\ensuremath{\cal E}}
\newcommand\expand[1]{\mathop{\expandName}({#1})}
\newcommand{\CanName}{{\textit{Can}}}
\newcommand{\MustName}{{\textit{Must}}}
\newcommand{\CanAux}[6]{\CanName_{#1}^{#2}{#5}{#3},{#4}{#6}}
\newcommand{\Can}[3]{\CanAux{}{#1}{#2}{#3}{(}{)}}
\newcommand{\CanK}[3]{\CanAux{k}{#1}{#2}{#3}{(}{)}}
\newcommand{\CanS}[3]{\CanAux{s}{#1}{#2}{#3}{(}{)}}
\newcommand{\MustAux}[5]{\MustName_{#1}{#4}{#2},{#3}{#5}}
\newcommand{\Must}[2]{\MustAux{}{#1}{#2}{(}{)}}
\newcommand{\MustK}[2]{\MustAux{k}{#1}{#2}{(}{)}}
\newcommand{\MustS}[2]{\MustAux{s}{#1}{#2}{(}{)}}
\newcommand\linkSymbol{\includegraphics[width=8mm]{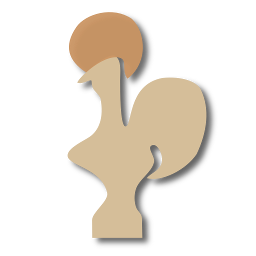}}
\newcommand\link[2][]{\href{\baseNetAddress#2.html#1}{\linkSymbol}}
\newcommand{\marginlink}[3][]%
           {\marginnote{\href{\baseNetAddress#2.html#1}{\linkSymbol}}[#3]}
\title{Towards a Coq-verified Chain of Esterel Semantics}
\author{Lionel~\textsc{Rieg} \and Gérard~\textsc{Berry}}
\date{\today}
\begin{document}
\maketitle

%\todoLR{Correct the signal rules with restrictions}
\begin{abstract}
  This paper focuses on formally specifying and verifying the
  chain of formal semantics of the Esterel synchronous programming
  language using the Coq proof assistant. In particular, in addition
  to the standard logical (LBS) semantics, constructive semantics
  (CBS) and constructive state semantics (CSS), we introduce a novel
  microstep semantics that gets rid of the Must/Can potential function
  pair of the constructive semantics and can be viewed as an abstract
  version of Esterel's circuit semantics used by compilers to generate
  software code and hardware designs. Excluding the loop construct from
  Esterel, the paper also provides formal proofs in Coq of the
  equivalence between the CBS and CSS semantics and of the refinement of
  the CSS by the microstep semantics.
\end{abstract}

\section{Introduction}
\label{sec:introduction}

The long-term goal of the research presented here is twofold: first,
formally validate the chain of semantics of the synchronous reactive
language Esterel~\cite{Berry:FoundationsOfEsterel} that leads from its
definition by formal semantics to its implementation; second, build a
formally proven compiler from Esterel to clocked digital circuits or C
code, in the spirit of the CompCert verified C
compiler~\cite{Leroy-Compcert-Coq}.  The tools we use for this is the
chain of Plotkin-style Structural Operational Semantics (SOS) semantics
developed for Esterel between 1984 and 2000 and published in the
web book~\cite{Berry:ConstructiveSemanticsOfPureEsterel}, which
was not submitted to an editor because it was
lacking formal proofs.  The subject being quite tricky, the second
author thought that the proofs should be done on computer --- which is what
we contribute to here.

We also introduce here a new operational semantics that gets closer to
the circuit semantics of
~\cite{Berry:ConstructiveSemanticsOfPureEsterel}, on which the Esterel
v5 compiler and later the industrial v7 compiler were based. We use
the Coq proof assistant~\cite{Coq-8.15.2} as the formal verification
environment.

This work is still partial : we limit ourselves to the loop-free part
of Kernel Esterel~\cite{Berry:FoundationsOfEsterel}, the core language
that only deals with pure signaling but still gathers all the
technical difficulties. The full Esterel language has loops and
involves data-valued signals and variables. We already know how to
efficiently handle loops with a method presented
in~\cite{Berry:ConstructiveSemanticsOfPureEsterel}, used in the
Esterel v5 and v7 compilers, or alternatively with Oliver Tardieu's
method published in~\cite{TardieuDeSimone:LoopsInEsterel}.
Data handling should not add extra difficulties since the only
addition in the semantics and compilers is the insertion of additional
dependencies in the control flow, which can be technically handled
almost in the same way as Kernel Esterel's simpler pure signal
dependencies.

\subsection{Synchrony, determinism, concurrency: a short history of Esterel}

Esterel, born in 1983~\cite{Esterel:BerryMoisanRigault}, is the oldest
member of the family of synchronous and deterministic concurrent
programming languages. It was initially dedicated to programming
discrete control systems found in industry such as airplanes,
automotive, process control in factories, etc., for which determinism
is a key feature.  While being concurrent, Esterel strongly differs
from classical asynchronous and non-deterministic concurrent languages
because it is synchronous and fully deterministic. It was introduced
just before Lustre~\cite{Halbwachs:IEEE:Lustre} and Signal~\cite{Signal}, two other
synchronous deterministic languages also dedicated to industrial
process control but with a different style : they are data-flow
oriented functions languages dedicated to controlling processes with
fairly constant control laws, while Esterel was dedicated to
control-intensive processes whose behavior keeps changing over
time. All these synchronous languages are based on well-defined and
formal semantics.

The Esterel language and formal semantics were developed together in a
series of steps, from a first semantics used in 1984 for the
Esterel~v2 compiler~\cite{Esterel:These:Cosserat} to the final
constructive semantics in 2000 that lead to the academic Esterel
v5~compiler~\cite{EsterelV591}. This compiler was used in academic or
industrial research for safety-critical embedded systems in a wider
variety of domains compared to the initial goal:
avionics~\cite{Berry:avionics},
robotics~\cite{Esterel:EspiauCosteManiere}, communication
protocols~\cite{Sethi:Terminal}, and the synthesis of efficient
synchronous digital
circuits~\cite{Esterel:TouatiBerry,BerryKishinevskySingh,SentovichTomaBerry}.
Another very efficient but more limited compiler was
developed for Esterel by Stephen Edwards at Columbia
University~\cite{Edwards-compiler}.  All this work is described in
detail in the \emph{Compiling Esterel} book~\cite{CompilingEsterel}.

From 2001 to 2009, an enriched v7 version was developed by the
second author and his team at the Esterel Technologies company, with
the very same kernel semantics but many language extensions and a compiler
able to generate both C programs for software applications and Verilog/VHDL
circuits for hardware applications. Esterel v7 was used in R\&D and
production by companies such as Intel, Texas Instruments, ST
microelectronics, NXP, etc. But the 2008 crisis unfortunately canceled
its development and usage.

The Esterel v5 academic compiler and debugging environment as well
as the Edwards Columbia compiler are still freely available.\footnote{Respectively
from \url{https://esterel.org/files/Html/Downloads/Downloads.htm} and
\url{http://www1.cs.columbia.edu/~sedwards/cec/}.}
Furthermore, the Esterel v5 semantics and compiling technology has
been transported to HipHop~\cite{HipHop}, a reactive extension of
Serrano's Hop language~\cite{Hop} which is itself an extension of
JavaScript that specifies client/server computations in a single file
and simplifies the design of web pages. HipHop is dedicated to program complex
behaviors of autonomous devices and man/machine Web-based interfaces.

In parallel, Lustre was industrialized by the French company Verilog
under a graphical form called SCADE\footnote{For Safety Critical
Applications Development Environment}, used for avionics by Airbus,
for the Lyon automatic subway control, for the French nuclear plants
safety systems, etc.  In 2006, SCADE and Esterel (with some
restrictions) were unified by Esterel Technologies into the SCADE~6
language and toolset~\cite{Scade-6}; the SCADE~6 compiler was certified at level DO-178C, the highest certification level for avionics.
Esterel Technologies was bought in
2012 by Ansys, and SCADE~6 is now widely used by several hundred
industrial customers for certified embedded systems and other
safety-critical applications.

\subsection{Signals in Esterel}

Let us first focus on how and why concurrency and determinism are
reconciled by Esterel. The shared objects between concurrent
statements are called \emph{signals}. A signal carries a presence~/
absence \emph{status} and optionally a unique \emph{data value}. The
execution of a program consists of a series of discrete
\emph{reactions}, where each reaction handles an input signal vector
from the environment and generates an output signal vector to the
environment in a deterministic way and conceptually in zero time. The
program can also use an arbitrary number of local signals declared by
a specific ``\Tsignaldecl{S}{\dots}'' scoping statement.

In this paper, as in~\cite{Berry:ConstructiveSemanticsOfPureEsterel},
we only deal with \emph{pure signals} that only carry a Boolean
\emph{present~/~absent status}. In each reaction, a pure signal
\id{S} is \emph{absent} by default; it is set \emph{present} only if
the environment says so for an input signal or if it is explicitly
emitted within the program by an ``\Temit{S}'' statement that
broadcasts this status in the signal scope. Valued signals would introduce no
major difficulties in the semantics and compiling technology since
they simply add data dependencies that can be handled in almost the
same way as signal dependencies, see~\cite{CompilingEsterel}, but they
would make the formalization bigger.

\subsection{Causality issues}

Synchrony immediately leads to causality issues we can explain
with the following three examples:

\begin{newprogram}
P1: if S then emit S end
P2: if S then nothing else emit S end
P3: if S then emit S else emit S end
\end{newprogram}%
\id{P1} looks non-deterministic : if \id{S} is present, it is emitted,
which is logically fine; if \id{S} is absent, it is not emitted,
logically fine as well.  \id{P2} is clearly nonsensical: if \id{S} is
present, it is not emitted, contradicting the aforementioned basic
rule; if \id{S} is absent, it is emitted, which sets it present, again
a contradiction.  But \id{P3} is more problematic: in classical
logic, one would say that \id{S} present is not contradictory since it
makes it emitted, while \id{S} absent is contradictory since it makes
it emitted as well. Thus, classical logic would accept \id{S} present
as the unique solution, by the excluded middle rule. But we don't
accept this viewpoint for Esterel because it is neither natural nor
understandable in big programs. We want \emph{causal and constructive}
reasoning.

An easy way to reject all three cases above is to perform a topological
sorting of signals, forbidding a signal to depend on itself even by a
long chain of other signals that ends up with itself. This was adopted
by the first formal semantics of Esterel in
1984~\cite{Esterel:BerryCosserat} and it remains the rule for many
aforementioned synchronous languages. But our main industrial user
in the 1990's, a Dassault Aviation
team, complained because it was developing large modular Esterel
program for safety-critical avionics that sometimes led for
good reasons to the following dependency structure:

\begin{newprogram}
if S1 then
   if S2 then emit S3 end
else
   if S3 then emit S2 end
end
\end{newprogram}%
Here, the \id{S2} and \id{S3} signals
cannot be topologically ordered since the two \id{if} statements
generate a trivially cyclic dependency graph, but there is no
causality problem since the \id{then} and \id{else} branches cannot be
executed together in a single reaction.

Gonthier proposed a clever but partial
solution~\cite{Esterel:These:Gonthier} that was implemented by the Esterel team
together
with many other improvements in the Esterel v3 compiler he designed
with the second author.  Gonthier's solution did handle many cases but
not all of them, sometimes not being able to tell the programmer
whether her program is correct or incorrect --- this
definitely had to be improved for industrial usage.

Finally, the
causality problem has been fully solved by the \emph{constructive
semantics}, first presented in
\cite{Berry:ConstructiveSemanticsOfPureEsterel}, the one detailed and
proven correct here. It is used by the Esterel~v5 and v7 compilers.

\begin{remark}
  One could think that \id{P2} above provokes some sort of deadlock,
  as found when programming in asynchronous concurrent languages.
  Asynchronous deadlock are relatively easy to make and hard to detect
  at runtime, especially if there are partial, letting other threads
  of the program run normally.
  Therefore, in the asynchronous domain, it is highly recommended to
  stick to provably deadlock-free asynchronous protocols.
  The situation is different and much simpler in synchronous
  languages, since deadlocks are always detected and forbidden at
  compile-time.
  But the goals of both kinds of languages are quite different.
\end{remark}

\subsection{Switching to the circuit semantics and implementation}

In 1989, the second author was invited by Jean Vuillemin in the
Digital Equipment Paris (DEC) Research Lab to work on control parts of
computation-oriented circuits implemented on the first Xilinx FPGAs
(rewirable-by-software hardware circuits). They realized that
Esterel's synchrony hypothesis was well-known for acyclic synchronous
hardware circuits: electrical fronts propagate in wires in an asynchronous way,
but with a deterministic result computed in a predictable and very
short time; at the end of each clock cycle, all wires are electrically
stable to voltages that exactly correspond to the solution of the
circuit's Boolean equations.

Since this is exactly Esterel's viewpoint, they could directly implement
Esterel on synchronous circuits or FPGAs, each reaction lasting one
clock cycle with data computations attached to particular gates.
Furthermore, the generated circuits could be easily and
reasonably efficiently simulated to build a software code usable
either as a circuit simulator or directly in software
applications. The Esterel~v4 compiler followed, with the benefit of
definitively solving the generated code size explosion problem often
encountered with Esterel~v3. The software generation could now scale
up to very large problems.  Then, very efficient optimizers and
verification methods for the generated circuits were developed based on
Boolean Decision Diagrams (BDDs)~\cite{BDD}, the optimized circuits
turned out to be often smaller and more efficient than
when designed as usual in Verilog or VHDL, and the generated of
software code was improved in the same way.

Esterel v4 was limited to the acyclic case, a natural condition in
the hardware field. However, a seminal paper by
Malik~\cite{MalikCyclic} studied cyclic hardware circuits and showed
that some of them did compute deterministically and in bounded time,
as for acyclic ones.  In that case, they could even be much more
symmetric and natural than any equivalent acyclic circuit.

The second author then realized that the Esterel cycle analysis
problem could also be stated on cyclic circuits's Boolean equations
and that it could be solved at that level by replacing classical
Boolean logic by \emph{constructive Boolean logic}, which is a logic
that only propagates known Boolean values through logic operators,
without ever using excluded middle reasoning.  The adaptation of this
constructivity idea to Esterel led him to the final \emph{constructive
semantics} of Esterel described in the web
book~\cite{Berry:ConstructiveSemanticsOfPureEsterel}.  A new circuit
generation backend able to generate correct cyclic circuits was built
for the Esterel~v4 compiler, then renamed Esterel~v5, and a BDD-based
tool for statically determining if the cyclic circuit is constructive
and optionally to generate an acyclic equivalent (but possibly much
bigger) was added to it. The circuit implementation was later used for
the Esterel~v7 industrial compiler to Verilog and VHDL.

In 2012, Michael Mendler fully justified our viewpoint by proving the
reciprocal : a cyclic circuit yields the expected result in a finite
time for all gates and wires delay \emph{if and only if} this result is
computable from the Boolean equations by using only constructive
Boolean logic~\cite{MendlerShipleBerry:ConstructiveCircuits}, a fact
conjectured for long by the second author; in our opinion, this definitely
shows that constructive circuits exactly correspond to Esterel's
constructive semantics.

\subsection{Summary of the final Esterel semantics chain}
\label{sec:esterel-semantics-chain}

Each of the following formal semantics we analyze in this book is presented
in Plotkin's SOS (Structural Operational Semantics) style, which is
based on logical rules that are very versatile and quite naturally
fitted to formal analysis with systems such as Coq. All of them but
the last one are detailed
in~\cite{Berry:ConstructiveSemanticsOfPureEsterel}.  The last
\emph{microstep semantics} is new and was created by the first author
to mimic the behavior of circuits in a SOS-style. It gets very close
to the level of synchronous circuits, which serve as the final
implementation of Esterel, but without the difficulty that circuits
are graphs, a structure more difficult to manipulate in Coq. All Coq
analysis and proofs in this paper are due to the first author.

The \emph{behavioral semantics}, introduced
in~\cite{Esterel:BerryCosserat}, defines logical constraints that
ensure reactivity, i.e., existence of a solution, but not determinism.
It is presented as a set of SOS rules, where each SOS transition
computes the output signals and transforms a program text into a new
program text ready for the next reaction.  This semantics is based on
natural constraints, but still too loose in the sense that it accepts
some programs that happen to be deterministic ``just by chance'', \ie
in a non-causal way such as for \id{P3} above. (A slightly different
version due to Tardieu~\cite{phd-tardieu} does ensure determinism, but
adding some technical complexity. We do not find it relevant for this
work, for which the logical semantics is just a historical starting step).
Nevertheless, it was enough to build first an interpreter and then the
first Esterel~v2 academic compiler~\cite{Esterel:These:Couronne} by
adding a partial dependency ordering on signals that ensures
determinism in quite strict a way.  This v2 compiler was based on the
transformation of a program into a finite automaton, following
Brzozowski derivative-based construction of finite automata from
regular expressions~\cite{BrzozowskiDerivatives}, later improved in
\cite{BerrySethi}.

The \emph{potentials-based semantics}
\cite{Esterel:BerryGonthier}, again due to G. Gonthier,
adds sufficient
conditions to accept many more programs, including cyclic ones, but
not all those that should be considered as correct --- which was a problem for users. The \emph{state
semantics}, also due to Gonthier, refines it by replacing rewriting
the full program text done in the potential-based semantics by simply
moving simple state markers in the program's source code. It led to a much
faster implementation with the Esterel v4 compiler, the first one to be
industrialized. But users rightly complained that when a program was
rejected, the compiler could not tell whether it was the program's
insufficiency or a real programming error.
These semantics are now obsolete.

The final solution came with the \emph{constructive semantics}
\cite{Berry:ConstructiveSemanticsOfPureEsterel}, which refines the logical
semantics by imposing proof-theoretic constructivity constraints that
ensure a causal and disciplined flow of information in the program;
this implies determinism in a natural way.  It is presented as a
richer set of SOS rules, and has become the true reference semantics
of the language. It was similarly joined with a \emph{state constructive semantics} that moves markers in the source code.

The \emph{constructive circuit semantics} goes further by transforming
a Pure~Esterel program into a flat Boolean circuit (i.e. system of
equations) with the very same behavior, provided one uses constructive
logic when evaluating the circuit's equations (no excluded middle). As
said before, using constructive Boolean logic is equivalent to letting
electricity find the right solution even in the presence of cycles, as
proved in \cite{MendlerShipleBerry:ConstructiveCircuits}. The state
constructive semantics is the basis of the industrial v5 and v7
compilers~\cite{EsterelV7} to hardware circuit designs and software code, by
implementing the circuits's Boolean equations in C for software.  Furthermore,
the circuit and C code can be heavily optimized using modern circuit
CAD tools.

Finally, the \emph{microstep semantics} presented here and due to the
first author precisely describes the fine-grain
step-by-step propagation of information in an Esterel program during
one reaction, according to the current state computed from the last
reaction and the current input.
A first version was published in~\cite{CompilingEsterel}, but
the version presented here is technically much closer to the circuit semantics.
It is presented in a SOS-style amenable to Coq proofs,
while circuits are graphs, yet harder to manipulate in Coq.

This paper only deals with the behavioral, constructive, state
constructive, and new microstep semantics. We prove the expected
refinement or equivalence relations between these semantics, but only
on Kernel Esterel without loops, since loops create a schizophrenia
problem (see Section~\ref{sec:reincarnation}) that was solved in a
way we have not yet transported to Coq. This work is enough to deduce
a correct-by-construction interpreter for loop-free Kernel Esterel
(actually one for each semantics). But going to an efficient compiler
will require handling the circuit semantics, which will be the goal of
a subsequent paper.

\subsection{Related work}

After 1985, research inspired by Esterel took place for other
synchronous languages in several labs: Reactive~C~\cite{Reactive-C} by
F. Boussinot, in the same lab as Esterel; Argos~\cite{Argos} by
F. Maraninchi's, a graphical formalism developed in the Lustre group
and inspired by both Esterel and D. Harel’s Statecharts
(which were not quite synchronous in our sense); SyncChart by C. André
at Nice University, an Argos-inspired graphical language that
generated Esterel code~\cite{SyncCharts}; Reactive
ML~\cite{MandelPouzet-PPDP-2005} by L. Mandel and M. Pouzet at Ecole
Normale Supérieure Paris, which adds Esterel-like statements to the
functional OCaml language; Quartz~\cite{Schneider:Quartz} by
K. Schneider's team in Germany, now at the core of
the Averest System\footnote{https;//www.averest.org.};
SCL~\cite{SCL} and
SCCharts~\cite{vonHanxledenDM+14} by R. von Hanxleden and M. Mendler
also in Germany; finally, some Ptolemy II domains~\cite{PtolemyII} and
more recently Lingua Franca~\cite{LinguaFranca} by E.~A.~Lee and his
team at Berkeley University.

Considering that synchronous languages target in particular
safety-critical systems, formal verification has been carried out
since their beginning.  The first formal verification efforts were
targeted toward verifying program properties, to ensure the absence of
bugs in the program but not in the compiler. For instance,
verification of properties of Esterel programs was done with the
Auto/Autograph verifier~\cite{AutoAutograph}, and for Lustre programs
with the Kind2 model-checker~\cite{Kind2};
the Averest toolset for
Quartz supports formal verification with BDD and SAT techniques.

%% Compiler verification traditionally follows one of
%% two approaches: either direct proof or translation validation.In the
%% first case, one proves that the compiler itself, as a program, is
%% correct.  In the second case, one writes and proves correct a
%% validator that, for each input, checks if the result of the compiler
%% is correct, without looking at the compiler itself. Both approaches
%% have advantages: the direct proof approach has no run-time overhead
%% when using the compiler and the proof can follow the structure of the
%% compiler, but it is more difficult and must be updated when the
%% compiler is modified ; the translation validation approach is faster
%% to verify and insensitive to internal changes of the compiler, but it
%% requires running the validator at each compilation.
Verification of the compilation from Signal to C was done by Van Chan Ngô \emph{et al.}~\cite{Signal-translation-validation-1,
Signal-translation-validation-2} using translation validation, whereas
the compilation from Lustre to C was verified by Bourke \emph{et
al.}~\cite{Bourke-BDLPR-2017} using a direct proof.  Closer to our
purpose, the first preliminary attempt~\cite{kaplan2000} at
formalizing the Esterel v5 compiler in Coq only considered the first
instant.  It also disregarded loops because of schizophrenia, just
like we do here (see future work in Section~\ref{sec:reincarnation}).
Nevertheless, it was enough to uncover bugs in the initial constructive
semantics attempt.  A
more complete formal proof of a compiler of the  Quartz language
to circuits was done in the proof assistant HOL4 by Schneider \emph{et
al.}~\cite{Schneider:Quartz}.  Their semantics is not defined
in Plotkin's SOS-style used by Esterel but is
instead based on logical predicates, which are closer to automata
and circuits.  They also handle data, which we do not.

Two microstep semantics have previously been defined for Esterel in the literature.
The first one is in the ``Compiling Esterel'' book~\cite{CompilingEsterel}.
It is slightly higher-level than our microstep semantics as it only propagates control and does not perform a detailed computation of completion codes.
It handles data but still uses the \CanName{} function for the local signal rules.
Overall, it may be more suited for reasoning and explaining reactions but it is less suited to a translation to digital circuits.
The other one~\cite{Quartz-microstep} is defined on the Quartz language, a variant of Esterel.
It relates the original Quartz semantics given in term of logical predicates to a new SOS semantics for Quartz, in the spirit of the SOS semantics of Esterel.
Its objective is to explain the reactions of Quartz programs, not to be closer to the circuit semantics.
Considering its focus and level of details, we do not consider it to be a microstep semantics, as it is much closer to the state semantics of Section~\ref{sec:state-semantics} than to our microstep semantics or the one in~\cite{CompilingEsterel}.

%^^^^^^^^^^^^^^^^^^^^^^^^^^^^^^^^^^
\subsection{The contents of this paper}

This paper proves the correctness of the Esterel semantics chain
presented in~\cite{Berry:ConstructiveSemanticsOfPureEsterel} and described above
for
the loop-free case, using the
Coq proof assistant.  This is essentially the work of the first
author.  After describing Kernel Esterel and its informal semantics in
Section~\ref{sec:kernel-esterel} and the Coq proof assistant and
high-level features of the formalization in
Section~\ref{sec:coq-repr}, the following semantics are described:
first the behavioral semantics, namely the logical one
(Section~\ref{sec:logical-semantics}) and the constructive one
(Section~\ref{sec:constructive-semantics}); then the constructive
state semantics (Section~\ref{sec:state-semantics}), and finally a new
fine-grained microstep semantics
(Section~\ref{sec:microstep-semantics}).
We also prove the relations between these semantics (simulation, equivalence), see Figure~\ref{fig:esterel-semantics-chain}.
Finally, we detail some aspects of the Coq formalization in Section~\ref{sec:coq-proofs} before concluding in Section~\ref{sec:conclusion}.

The final operational microstep semantics presented at the end of the
chain is not yet the original circuit semantics, but it is technically
very close to it.
As said before, the difference lies in the difficulty of dealing with
graph manipulations with Coq.
This very last step remains to be completed.
Once this final simulation to the circuit semantics is completed, this
chain of simulations will prove that the translation of Esterel
programs to circuit
of~\cite{Berry:ConstructiveSemanticsOfPureEsterel}, that is, the
compiler, is correct: it preserves the semantics of Esterel programs
down to the generated digital circuits.
Nevertheless, this paper is a major step towards the final
justification and publication of the
book~\cite{Berry:ConstructiveSemanticsOfPureEsterel} with formal
proofs included as Coq programs executable by anybody who wants to
check them.

\begin{figure}
  \begin{center}
    \begin{tikzpicture}
      \node (LBS) at (-3, 0) {LBS};
      \node (CBS) at (0, 0) {CBS};
%      \node (LSS) at (3, 2) {LSS};
      \node (CSS) at (3.5, 0) {CSS};
      \node (micro) at (8, 0) {micro};
%      \draw[<->] (LBS) -- (LSS);
      \draw[thick, <->] (CBS) -- (CSS)
        node[midway, above] {\href{{\baseNetAddress}Proofs.CBS_CSS.html#sCSS_CBS}{\texttt{s{\small /}rCSS\_CBS}}}
        node[midway, below] {\href{{\baseNetAddress}Proofs.CBS_CSS.html#CBS_sCSS}{\texttt{CBS\_s{\small /}rCSS}}};
      \draw[thick, <-] (LBS) -- (CBS)
        node[midway, above] {\href{{\baseNetAddress}Proofs.CBS_LBS.html#CBS_LBS}{\texttt{CBS\_LBS}}};
%      \draw[->] (LSS) -- (CSS);
      \draw[thick, ->] (CSS) -- (micro)
        node[midway, above] {\href{{\baseNetAddress}Proofs.CSS_Micro.html#sCSS_microsteps}{\texttt{s{\small /}rCSS\_microsteps}}};
    \end{tikzpicture}
  \end{center}
  \caption{The chain of Kernel Esterel semantics. Arrows represent simulation.}
  \label{fig:esterel-semantics-chain}
\end{figure}

%%% Local Variables: ***
%%% tex-main-file: "EsterelCoq.tex"  ***
%%% End: ***

% LocalWords:  Esterel's asynchrony Lustre SyncCharts Statecharts
% LocalWords:  SCADE combinational microstate homotopy Verilog VHDL
% LocalWords:  backend optimizers Vuillemin Xilinx FPGAs Mendler BDDs
% LocalWords:  rewirable Gonthier Gonthier's acyclic Tardieu's Ansys
% LocalWords:  toolset Tardieu validator automata

\section{Kernel Esterel} %LR
%%%%%%%%%%%%%%%%%%%%%%%%
\label{sec:kernel-esterel}

\subsection{Signals, instants, and synchrony}

Kernel Esterel (also called Pure Esterel) is a concurrent reactive
language that deals with \emph{pure signals} denoted by identifiers,
abbreviated into \emph{signals} in this paper.
An \emph{event} is a set of present signals.
A Kernel
Esterel program $P$ is defined by an \emph{interface}, which defines
the set $\cal I$ of its \emph{input signals} and the set $\cal O$ of
its \emph{output} signals, and a \emph{body} which is an executable
statement that may declare locally scoped signals.

The execution of a Kernel Esterel program consists of successive
\emph{reactions} to \emph{input events} $E$ that generate \emph{output
events} $E'$ according to the behavior of the body. The
\emph{synchrony hypothesis} stipulates that reactions have
conceptually no duration: inputs do not vary and outputs are produced
instantly in a deterministic way, local signals bearing a unique
status within their scope. To stress that point, we call reactions
\emph{instants}. The synchrony hypothesis is the key to reconciling
concurrency and determinism.
 
At each instant,
each signal \id{S} carries a unique \emph{present} or
\emph{absent} \emph{status}.  A signal \id{S} is set \emph{present} either if
it is an input set present by the program environment for the current instant
or if it is a local or output signal emitted by some ``\id{emit~S}''
statement executed during the current instant. If it is neither a \emph{present}
input nor emitted by some ``\id{emit~S}'' executed statement, 
a signal \id{S} is declared \emph{absent} for the instant; thus, a pure
signal behaves as a Boolean hardware \emph{or} gate that takes value 1
when any of its inputs has value 1 or value 0 where all its inputs
take value 0.
Each signal status in the input event is instantaneously broadcast to
all active statements, which means that all of them see the same
status for the signal at this instant, be they in sequence, in
parallel, or embedded in any active statement anywhere in the program.

The status of a signal can be tested by a
``\Tif{S}{$p$}{$q$}'' statement (the keyword is
\id{present} in \cite{Berry:ConstructiveSemanticsOfPureEsterel} and
Esterel~v5, but we use \id{if} here as in Esterel~v7). In this paper,
we shall also allow test for signal absence, which simply exchanges
the \id{if} branches; this is not indispensable but simplifies the
presentation.

In actual implementations, instants are externally determined by the
environment when it sends input; what is important is that the real time of a
reaction is sufficiently small w.r.t. the timing constraints of the
controlled process for the semantics to be preserved. Implementations
may rely on atomic calls of a C function implementing the global
program behavior (as for Esterel v5), on circuit clock cycles for
translation to hardware (as for Esterel v7), on execution of
atomic-by-construction JavaScript code (as for HipHop), or even on
appropriately synchronized execution of a distributed program provided
interference between I/O and execution is avoided, as in Lingua
Franca~\cite{LinguaFranca}. Since they do not interfere with
the semantics, these practical implementations are outside the scope
of this paper.

\begin{remark}
  In Full Esterel, signal can carry data values that are not available
  in the kernel.
  This is not really a limitation: the development of Esterel
  semantics and compilers has shown that dealing with shared data
  values can be done in much the same way as dealing with the pure
  signals that carry them~\cite{CompilingEsterel}.
  The only difference is that knowing the value of a valued signal
  requires the resolution of all emitters to combine the individual
  values emitted by the active emitters using a user-specified
  associative and commutative function.
  This means that the value of a valued signal depends on the status
  of all emitters, which is exactly the same thing as for determining
  the absence of a pure signal.
  Of course, one also need to add data dependencies, but this is as
  for any classical language.
  Nevertheless, as said before, we have not yet handled valued signals
  in the Coq proof.
\end{remark}

\subsection{Kernel statements}
\label{sec:kernel-statements}

Kernel Esterel contains a small number of statements, from
which one can easily define the richer statement set of the
user-friendly full language~\cite{EsterelV591,EsterelV7}.
A statement starts at some instant and may execute either instantly,
\ie entirely within the instant, or up to some further instant where reactions
produce empty results from then on, or even indefinitely.
Its starting and ending (if any) instants define its \emph{lifetime}.

The statements can be presented in two equivalent forms: with
keywords, which make reading easier, or with mathematical symbols,
a much shorter writing for semantic rules and proofs. We use the
latter symbolic form in all technical developments.  Here are the
kernel textual statements, where~$s$ is a signal,~$T$ is a trap name,~$k$ is an
integer and $p$, $q$ are kernel statement. The statement order is not
the same as in the original Esterel papers, but it will be technically
more convenient here:

\setlength{\arraycolsep}{1cm} \[
\begin{array}{ll} \textbf{Textual form} & \textbf{Symbolic form} \\
\Tnothing & \Snothing \\
\Tpause & \Spause \\
\Temit s & \Semit s \\
\Tawimm s & \Sawimm s \\
\Tif s p q & \Sif s p q \\
%\Tsuspendi p s & \Ssuspendi s p \\
\marginlink[\#stmt]{Esterel.Definitions}{0mm}
\Tsuspend p s & \Ssuspend s p \\
\Ttrap T p & \Strap p \\
\Texit {T^k} & \Sexit k \qquad\qquad (\text{with } k\ge 2) \\
& \Sshift p \\
\Tsequence p q & \Ssequence p q \\
\Tloop p & \Sloop p \\
\Tparallel p q & \Sparallel p q \\
\Tsignaldecl s p & \Ssignaldecl s p
\end{array}
\]
We slightly depart
from~\cite{Berry:ConstructiveSemanticsOfPureEsterel} in two ways:
first, we rename the \id{present} test for signals of
\cite{EsterelV591} and Esterel~v5 into $\Tif{\_}{\_}{\_}$ as in Esterel
v7~\cite{EsterelV7}; second, we add an ``\id{await~immediate}'' wait
statement to the kernel, written as $\Sawimm s$, which can be defined
from the other primitives as follows:
\begin{newprogram}
trap T in
   loop
      if $s$ then exit T else pause end
   end loop
end
\end{newprogram}%
The reason to include ``\Tawimm{$s$}'' here if that the correctness
proof of the microstep semantics
(Section~\ref{sec:microstep-semantics}) does not handle loops yet,
because they lead to the signal instantaneous reincarnation problem,
explained and solved in the last chapter
of~\cite{Berry:ConstructiveSemanticsOfPureEsterel}.  But, in order to
express suspension ``$\Tsuspend s p$'' in the constructive semantics
(Section~\ref{sec:constructive-semantics}), we need ``\Tawimm{$s$}''
to wait until~$s$ becomes absent before performing another step
of~$p$, current instant included, which is what \id{await~immediate} does
(by default, ``\Tawait{s}'' ignore $s$  when it starts).

\subsection{Completion codes and the trap encoding}
\label{sec:completion codes}

The \id{trap}-\id{exit} lexically scoped statement is unique to
Esterel (although a fairly similar construct exists in David Harel's
graphical Statecharts, but with a different semantics) and need more explanations.
It is
akin to an explicit user-driven error-handling statement in a
sequential language, as for \id{try}--\id{throw} in Java or
\id{try}--\id{except} in Python, but used positively to structure
programs. Furthermore, it is fully compatible with synchronous
concurrency and fully deterministic: executing ``\id{exit~T}''
anywhere means asking for terminating the body of the ``\id{trap~T}''
statement as soon as this body has been completely evaluated in the
instant, even if it has many parallel components, but respecting a
scope order for traps if the body is concurrent: if several traps are
simultaneously exited by concurrent statements, only the outermost one
matters and the other ones are discarded. If no trap is exited, the trap
terminates or pauses as its body does. Such a statement is a key
for constructing derived statement from the kernel and programming
large applications; it cannot really exist in asynchronous concurrent
languages.

In the symbolic forms, this seemingly complex behavior and more
generally the whole control propagation is realized using a simple
\emph{completion code} integer encoding due to Gonthier and used in
all compilers.  Code $0$ means termination, thus \id{nothing} simply
becomes code $0$ in the symbolic form; code $1$ means pausing for the
instant and waiting to be resumed at the next instant, thus \id{pause}
becomes $1$. A trap is just a marker, written $\{.\}$, and an
``\id{exit~T}'' becomes code $k$ with $k \ge 2$, which encodes exiting
the enclosing \id{trap} reached by traversing $k-2$ nested traps on
the way. This encoding greatly simplifies the semantic rules.
Although this is not necessary, we add the code as an exponent to the
trap name in our textual \id{trap}--\id{exit} examples for more clarity.

Here is how this encoding works: at each instant, each executed
statement returns such a completion code, and the composition of these
codes determines the control flow of the program in a deterministic
way. Concurrent traps are handled in the following way: each parallel statement
waits for the completion codes returned by all branches and returns as
its code the \emph{maximum} of these codes, discarding lower ones.
This means that the parallel terminates if and only if all
its branches terminate, pauses if at least one branch pauses and no
branch exits a trap, and exits the outermost trap exited by
branches if at least one exits a trap, termination, pausing, or
other traps exits being ignored. Determinism is therefore enforced. 

For example, consider the statement

\begin{newprogram}
trap T in
   exit T$^2$;
   trap U in
      exit T$^3$
   ||
      exit U$^2$
   end
end
\end{newprogram}
where ``\id{exit~T}'' is successively decorated by 2 and 3
because the second one is enclosed in ``\id{trap~U}''.
This textual code simply becomes $\Strap{\Ssequence{\Sexit{2}}{\Strap{\Spar{\Sexit{3}}{\Sexit{2}}}}}$ in the symbolic form.

\begin{remark}
  This encoding of traps is akin to the De Bruijn encoding of bound
  variables in the $\lambda$-calculus~\cite{DeBruijn:Indices}, but in
  a concurrent setting and adding 2 to account for termination (0) and
  pausing (1).
\end{remark}

\subsection{A running example}
\label{sec:running example}

Here a slight variant called \id{ABROi} of the \id{ABRO} program,
which can be considered as the ``Hello World!'' of Esterel that starts
most Esterel
descriptions~\cite{Berry:ConstructiveSemanticsOfPureEsterel}. It
features the most striking specificities of Kernel Esterel:
deterministic parallelism, instantaneous reactions to presence/absence
of signals, simultaneous reception of signals (and emission, not illustrated here),
and using a trap statement capable of instantaneously canceling
parallel activities. The reader can check that its behavior is much
harder to write in sequential languages (including Javascript), and
even with asynchronous threads or asynchronous parallel languages.

\newcommand\myspace[1]{\\ \mbox{~} \hspace{#1em}}
\begin{example}[ABROi program] \label{ex:ABROi}

The \id{ABROi} program
involves three input signals \id{A}, \id{B}, and \id{R} and one output
signal~\id{O}.  Its behavior can be specified as follows: as soon as
signals~\id{A} and~\id{B} have been both received, either in succession or
simultaneously, emit~\id{O} and do nothing from then on; restart the same
behavior afresh whenever~\id{R} is received.

Here is \id{ABROi} in plain Esterel:

\begin{newprogram}
loop
    [ await immediate A || await immediate B ];
    emit O
each R
\end{newprogram}%
The difference with classical \id{ABRO} is that
\id{O} is also emitted if \id{A} and \id{B} are simultaneously received when \id{R} occurs. This behavior would be obtained
by adding a \id{pause} in sequence right before the parallel.

Expanding with kernel statements the \id{loop}-{each~R} loop that
restarts its body each time \id{R} occurs, as first done by compilers, one
gets:

\begin{newprogram}
loop
   trap T in
      loop
         pause;
         if R then exit T$^2$ else pause end
      end
   ||
      [ await immediate A || await immediate B ];
       emit O;
      loop pause end;
      exit T$^2$
   end
end
\end{newprogram}%
The last ``$\id{exit}~\id{T}^2$'' is generated by the macro-expansion
of $\id{loop}-\id{each}$ but is unreachable here,
which is detected by compilers. We keep it in the sequel anyway.
The much more compact symbolic form will be useful for semantics rules:
$    \Sloop{
      \Strap{
        \Sparallel{\big(
          \Ssequence{\Spause}
                    {\Sloop{(\Sif{R}{\Sexit{2}}{\Spause})}}\big)}
                  {\big(\Ssequence{
                      \Ssequence{
                        \Ssequence{(
                          \Sparallel{\Sawimm{A}}
                                    {\Sawimm{B}}
                          \,)}
                          {\Semit{O}}
                        }
                        {(\Sloop{\Spause})}
                     }
                     {\Sexit{2}}
                  \big)}
      }
    }$

\noindent
noting that $(\Sloop{\Spause})$ is simply called \id{halt} in Esterel.
The reader is referred to~\cite{EsterelV591, Berry:ConstructiveSemanticsOfPureEsterel} for additional examples
\end{example}

\subsection{Intuitive Semantics of Kernel Esterel}
%^^^^^^^^^^^^^^^^^^^^^^^^^^^^^^^^^^^^^^^^^^^^^^^^^

The intuitive semantics is defined by cases over the statements:
\begin{itemize}
\item
  The ``\Tnothing'' or $\Snothing$ statement instantly terminates, returning  completion code~$0$.
\item
  The ``\Tpause'' or $\Spause$ statement waits for the next instant: at
  its starting instant, it stops control propagation and returns
  completion code $1$; at its next execution instant, it
  terminates and returns code $0$. This is not necessarily the instant that
  follows the starting instant since the $\Tpause$ statement could be
  preempted or suspended.
\item
  At a given instant, the status of a signal $s$ is shared by all
  program components within its scope. By default $s$ is absent in a reaction.
  Any executed
  ``$\Temit s$'' or $\Semit s$ statement sets $s$ present for the instant.
  The $\Temit{}$ statement terminates instantly.
\item
  The ``\Tawimm{$s$}'' statement blocks control propagation until~$s$ is present.
  At every instant including the starting one, it terminates instantly if~$s$
  is present, or if pauses and continues waiting if~$s$ is absent.
\item
  The presence test statement ``$\Tif s p q$'' or $\Sif s p q$ instantly tests the
  status of $s$. According to the presence/absence of $s$, it selects $p$ or $q$ for
  immediate execution and behaves as it from then on. Note that the
  test is only performed at the instant in which the statement is started.
%% \item
%%   At each instant during the lifetime of $p$, ``$\Tsuspendi p s$'' or
%%   $\Ssuspendi s p$ executes $p$ whenever $s$ is absent in the instant
%%   and freezes the state of $p$ until the next instant if $s$ is
%%   present. Notice that the status of $s$ is tested at all instants. The completion
%%   code of the instant is that of $p$ if $s$ is absent and $1$ if $s$ is
%%   present. Thus, termination and trap exits of $p$ are propagated if $s$
%%   is absent.
\item
  The delayed suspension ``$\Tsuspend s p$'' or $\Ssuspend s p$ statement behaves as~$p$ in its starting instant.
  In all subsequent instants during the lifetime of~$p$, it executes~$p$ whenever~$s$ is absent in the instant
  or freezes the state of~$p$ until the next instant if~$s$ is present.
  Notice that the status of~$s$ is tested at all instants except for the starting one.
  The completion code at first instant is that of $p$; for subsequent instants (if any)
  it is that of~$p$ if~$s$ is absent or~$1$ if~$s$ is present.
  Thus, termination and trap exits of~$p$ are only propagated at the first instant or if~$s$ is absent.
\item
  A loop ``\Tloop{$p$}'' or $\Sloop p$ statement instantly restarts its body~$p$ when this body terminates, and it
  propagates traps. The body $p$ is not allowed to terminate instantly when started.
  Notice that traps are the only way to exit loops (the $\it{abort}$
  statement of the full language is definable in the kernel language using traps).
\item
At each instant, the ``$\Ttrap {T} p$'' or $\Strap p$ statement executes $p$ for the instant; it
terminates or pauses if $p$ does, terminates if $p$ returns completion
code $2$ (\ie catches its exits), or returns completion
code $k-1$ if the completion code of $p$ is $k > 2$ (thus propagating
exits to the appropriate outer trap).
\item
  The textual ``\Texit {$T^k$}'' or symbolic $\Sexit k$ statement with $k\ge 2$ simply
  returns completion code $k$.
\item
  The $\Sshift p$ symbolic statement is necessary to define the macro-based
  full-language statements that place $p$ in a ``$\Strap{\_}$'' trap context.
  It simply adds $1$ to any trap completion code
  $k \ge 2$ returned by $p$. It is not needed in the textual
  form since traps are named.
\item
  For a sequence ``$\Ssequence p q$'', the statement~$q$ instantly starts if and when~$p$
  terminates.
  Trap exits by~$p$ or~$q$ are propagated. Note that both $p$ and $q$ are executed in the instant when $p$ terminates.
\item
  The textual ``$\Tparallel p q$'' or symbolic $\Sparallel p q$ parallel statement terminates
  instantly as soon as both branches have terminated; it pauses if at
  least one branch pauses and no branch raises an exit; it exits
  a trap instantly if one of its branches does; if several
  branches concurrently exit traps, it only exits the outermost exited
  trap.  In the symbolic formal
  semantics, this behavior simply reduces to the parallel statement
  $\Sparallel p q$ returning as completion code the maximum
  $\max(k_p,k_q)$ of the completion codes of its branches at each
  instant.
\item
  Finally, a local signal declaration ``\Tsignaldecl{$s$}{$p$}'' or $\Ssignaldecl s p$ declares a signal~$s$ local to its
  body~$p$, with the usual lexical binding and shadowing.
\end{itemize}

\begin{example}[Intuitive execution of the ABROi program]
  Consider the execution of the ABROi program (Example~\ref{ex:ABROi})
  during five consecutive instants where the following inputs are received, recalling that \id{halt} is simply {1*}:
  \begin{enumerate}
    \item $\{\id{B}\}$: only \id{B} is received, so ABROi keeps waiting on \id{A};
    \item $\{\id{A,B}\}$: as \id{A} is received, \id{O} is instantly emitted and ABROi reaches the \id{halt} statement --- a natural name for the ``\id{loop~pause~end}'' statement in the code of ABROi;
    \item $\{\id{B}\}$: the \id{halt} statement stalls execution, as the outer loop has not yet been restarted; \id{B} is ignored;
    \item $\{\id{R}\}$: receiving~\id{R} kills the \id{halt} statement and restarts the loop so that ABROi again waits for~\id{A},~\id{B}, and \id{R};
    \item $\{\id{A,B,R}\}$: receiving~\id{R} kills and restarts the whole body of the loop, but \id{O} is emitted since \id{A} and \id{B} are both present.
  \end{enumerate}
\end{example}

%% Note that immediate suspension can be defined from delayed suspension,
%% by initially waiting for the absence of~$s$:
%% \[
%% \Ssuspendi s p = (\Ssequence{\Strap{\Sloop{(\Sif s \Spause 2)}}}{\Ssuspend s p})
%% \]
%% or, in textual form:

%% \begin{newprogram}
%% trap T in
%%    loop
%%       present s then pause else exit T$^2$ end
%%    end
%% end ;
%% suspend p when s
%% \end{newprogram}
%% \noindent
%% The reverse macro-expansion from $\Ssuspend{}{}$ to $\Ssuspendi{}{}$ is not possible.
%% The reason to include $\Ssuspendi{}{}$ in the kernel is that
%% executing the above macro with~$s$ present in our SOS semantics
%% would generate a spurious residue $\Ssequence \Snothing \dots$ when executing the loop.
%% This would force us to use either bisimilarity comparisons or
%% bisimilarity-proven simplification rules in theorem statements and
%% proofs, which we found technically more cumbersome and quite useless in Coq.
%% \todoLR{Talk about the representation of events}

% LocalWords:  propagated

\section{Esterel semantics in Coq}
%%%%%%%%%%%%%%%%%%%%%%%%%%%%%%%%%%
\label{sec:coq-repr}

\subsection{The Coq proof assistant}
%^^^^^^^^^^^^^^^^^^^^^^^^^^^^^^^^^^^
\label{sec:coq}

A proof assistant is software that fully automates the verification of
proofs, thus building increased confidence in their correctness.  This
is of particular importance and interest for critical system software
as well as domains where proofs are notoriously difficult, such as
distributed systems.  Unlike for other formal methods such as model
checking, writing a proof is not fully automated but assistance is
usually provided in the form of tactics (``proofs steps'') and
decision procedures for some domains.  Proof assistants are also a lot
more expressive than more automated techniques and can readily work
with arbitrarily complex mathematical formulas.

%The Coq proof assistant is based on the Curry-Howard correspondence, which identifies types and programs with formulas and proofs, so that verifying a proof amounts to checking a type.
Quoting the description on the Coq website (hyperlinks replaced with footnotes):
\begin{quote}
  Coq is a formal proof management system. It provides a formal language to write mathematical definitions, executable algorithms and theorems together with an environment for semi-interactive development of machine-checked proofs.
  Typical applications include the certification of properties of programming languages\footnote{\url{https://coq.inria.fr/cocorico/List\%20of\%20Coq\%20PL\%20Projects}} (e.g. the CompCert\footnote{\url{http://compcert.inria.fr/}} compiler certification project, the Verified Software Toolchain\footnote{\url{http://vst.cs.princeton.edu/}} for verification of C programs, or the Iris\footnote{\url{https://iris-project.org/}} framework for concurrent separation logic), the formalization of mathematics\footnote{\url{https://coq.inria.fr/cocorico/List\%20of\%20Coq\%20Math\%20Projects}} (e.g. the full formalization of the Feit-Thompson theorem\footnote{\url{https://hal.inria.fr/hal-00816699}}, or homotopy type theory\footnote{\url{http://homotopytypetheory.org/coq/}}), and teaching\footnote{\url{https://coq.inria.fr/cocorico/CoqInTheClassroom}}.
\end{quote}
%It has been successfully used to prove both mathematical results as involved as the odd-order theorem~\cite{Gonthier2013-OddOrderTheorem} and verify software such as the CompCert C compiler~\cite{Leroy-Compcert-Coq}.

The logical foundation of the Coq proof assistant is the Calculus of Inductive Construction~\cite{coq-paulin}, a powerful extension of the higher-order typed $\lambda$-calculus in which types and logical formulas are unified, so that writing a proof amounts to providing a $\lambda$-term of a given type.

The computational part of these terms can be extracted to functional programming languages (currently, Haskell, OCaml, and Scheme) and can be integrated as verified components in bigger programs.
For instance, the CompCert compiler can be extracted from its Coq code, thus ensuring that the executed code is indeed the verified one.
Note that the compiler of the functional language and its runtime environment must be trusted.

The trust one may have in proofs accepted by such software is limited by the trust in the software itself.
In order to reduce the trusted code base, Coq and other proof assistants are built around the so-called \emph{de Bruijn principle}, in which only a reasonably small kernel checking proof validity must be trusted, while the rest (tactics, automation, etc.) need not be.
Other ways to increase confidence in the proof assistant are to provide an independent proof checker or even to aim at building formal proofs in the tool itself\footnote{Notice that this only increases confidence and is not a definitive answer because of Gödel's second incompleteness theorem.} of the correctness of its logic and its software implementation~\cite{coqcoqcorrect}.

\subsection{The formalization of Kernel Esterel in Coq}
%^^^^^^^^^^^^^^^^^^^^^^^^^^^^^^^^^^^^^^^^^^^^^^^^^^^^^^

\subsubsection*{Restriction to loop-free Esterel}
The Coq formalization is currently restricted to Kernel Esterel without the looping construct $\Sloop p$.
Indeed, loops can interact with local signals in a subtle way, requiring a specific treatment for the microstep semantics of Section~\ref{sec:microstep-semantics} which we do not handle yet, see \cite[chap.~12]{Berry:ConstructiveSemanticsOfPureEsterel} and Section~\ref{sec:reincarnation} for details.
In order to keep a coherent body of proofs, loops were entirely removed from the whole Coq formalization, although they are easy to handle in the other semantics.

\subsubsection*{High-level view of the formalization}
The syntax of Kernel Esterel is represented in Coq as an inductive
type, making it possible to perform proofs by induction on the syntax
tree of a program.
Similarly, each of our SOS semantics will be represented by an
inductive type, making it possible to perform proof by induction on
derivation trees.
In this setting, proving simulation (\resp equivalence) between two
semantics amounts to proving that the existence of a derivation tree
in the source semantics implies (\resp is equivalent to) the existence
of a derivation tree in the target semantics.
Therefore, a typical proof goes by induction on the derivation tree
in the source semantics and builds an adequate derivation tree in the
target semantics.

The full Coq code for the syntax, semantics, and proofs presented in this paper is available as supplementary material to this article.\footnote{Note to reviewers: the link will be added later but you should already have a copy of this material.}%~\cite{TODO}.

A fair number of proofs have no technical difficulty and rely mostly on a straightforward induction.
They will not be detailed here, where we will only concentrate on the interesting points.
We refer the interested reader to the Coq development.
This paper contains links to the html documentation generated from the Coq development, permitting a more comfortable browsing of the formalization.
If you put the file corresponding to this paper in the root directory of the Coq development, you will be able to directly use these links, represented as \linkSymbol in the margin; to access the relevant documentation.

\subsection{Representation choices in Coq}
%^^^^^^^^^^^^^^^^^^^^^^^^^^^^^^^^^^^^^^^^^
\label{sec:coq-representation}

The Coq formalization tries to be faithful to the symbolic notation of Kernel Esterel, in particular to~\cite{Berry:ConstructiveSemanticsOfPureEsterel}, but there are still two syntactic differences.
First, the parallel composition $\Sparallel p q$ is written with the textual notation $p \mathbin{||} q$ for technical reasons: the single pipe~$|$ is used in Coq for pattern matching.
Second, borrowing standard notation for lists, extending an event~$E$ by mapping the signal~$s$ to status~$b$ (with possible overshadowing of a previous signal~$s$) is written $s^b \; +\!+ \; E$ rather than $\addEvent s b E$. %, again because $E * s$ is already used for Cartesian product and multiplication.
%Similarly, the star is used to type pairs and we use instead $\id{p^o}$ for loops. (Furthermore, the star notation was inspired from Kleene's star which is written as a superscript and our version looks more natural in this respect.)

There are also less superficial changes compared to the literature that we describe now.

\paragraph*{Inputs and outputs events}
Signal inputs and outputs are described with events (written $\cal I$ and $\cal O$ respectively) that may range over different sets of signals. (We usually have $\cal O \subseteq \cal I$ to represent the fact that any emitted signal can be instantaneously read.)
In this paper, inputs and outputs events~$E$ and~$E'$ are maps from the set of signals in scope to statuses, which makes it possible to prove that the set of visible signals is not modified during execution (see property ``Domain invariance'' in Section~\ref{sec:Esterel-semantics-properties}).
This is different from the presentation of this paper and of~\cite{Berry:ConstructiveSemanticsOfPureEsterel} where~$E'$ is a set of emitted signals, which amounts to the set of signals mapped to~$+$ in our representation.

\paragraph*{Setoid of signals}
Signals are represented by a setoid, that is, a type equipped with a dedicated equality.
In particular, signals do not use the equality of Coq and may use any coarser equivalence.
The downside of this choice is that every semantics must have an extra rule \texttt{compat} ensuring compatibility with this equivalence, that is, a rule stating that one can replace a signal by any equivalent one, both in programs and in events.
This is the last rule in all the semantics in the formalization.
Furthermore, the \tactic{inversion} tactic is no longer useful as this compatibility case always applies.
Nevertheless, it is rather straightforward to prove dedicated inversion lemmas and write a tactic mimicking \tactic{inversion}.
The benefit of this choice is that we can use extensional equality for events and we can optimize signal expressions (once we introduce them in the formalization): instead of structural equality, we can choose propositional equality, so that for instance $s \band s'$ becomes equal to $s' \band s$.
Another solution would have been to use Coq equality for everything (signals, events, statements and so on) and either add axioms ensuring that this equality is extensional on events or to use an implementation of events featuring a unique canonical representative (for instance, ordered associative lists without duplicate keys).

\paragraph*{Delta function}
Whenever the completion code~$k$ of a statement is not~$1$, there is nothing left to execute, either because we finished execution ($k = 0$) or because a trap killed the remaining state ($k \geq 2$), hence the derivative should be~$\Snothing$.
The usual rules for Esterel in the literature do not have this property: for instance the trap rule from Figure~\ref{fig:LBS} always keep the trap.
This makes rules easier to write and read but is sometimes inconvenient.
Although we stick to this tradition in this paper, the Coq formalization performs this normalization of the derivative to~$\Snothing$ whenever $k \neq 1$
by introducing a function~$\delta$ defined as follows:
\marginlink[\#delta_stmt]{Esterel.Util.SemanticsCommon}{8mm}
\[
  \deltafun k p := \begin{cases}
    p & \text{if } k = 1 \\
    \Snothing & \text{otherwise} \\
  \end{cases}
\]
Then we replace each derivative~$p'$ with~$\deltafun k {p'}$
(some rules do not require this, namely the ones for $\Sk k$, $\Semit s$, $\Sif s p q$, $\Sawimm s$, and $\Ssequence p q$ when $k_p = 0$).
%\todoLR{Remove these sentences if the $\delta$ version makes it to the paper}
Although it is not mandatory, this normalization technically and conceptually simplifies the calculations by setting all terminated terms to $\Snothing$ which makes termination checks simpler.
The same choice is made for the state semantics, the only difference being that the derivative is not $\Snothing$ but the inert form of the state.

\paragraph*{Well-formed programs\marginlink[\#valid_dom]{Esterel.Util.SemanticsCommon}{-2mm}}
\label{sec:valid-dom}

When writing a Kernel Esterel program, there are implicit well-formation rules.
For instance, it is not allowed to emit or read a signal that is not in scope.
In compiler implementations, this is ensured by typing checks.
Here, all semantic rules using signals ensure this property with a premise $s ∈ E$ which implies that $s$ is in scope.

In Coq, we also define the $\VD E p$ predicate expressing that all signals free in~$p$ are in the domain of~$E$, thus making sure that the evaluation of~$p$ is properly defined.
The definition is made by a straightforward case analysis and amounts to requiring $s ∈ E$ every time a signal is explicitly used, that is, for statements $\Semit s$, $\Sawimm s$, $\Ssuspend s p$, and $\Sif s p q$, but not for local signal declaration in which it is added to the domain of~$E$ instead.
See the Coq code for details.

\paragraph*{Other remarks related to the Coq formalization}
Other remarks or specificities of the Coq formalization will be signaled in the paper as \texttt{Coq Remark}s.
If the reader is not interested in the technical details of the formalization, they can be safely ignored.

% LocalWords:  microstate microstep setoid CompCert Feit Bruijn
 % LR

\section{The logical semantics
\texorpdfstring{\marginlink{Esterel.Semantics.LBS}{0mm}}{}} %GB
%%%%%%%%%%%%%%%%%%%%%%%%%%%%%%
\label{sec:logical-semantics}

The SOS-style logical (behavioral) semantics (LBS) defines constraints
that ensure reactivity of a program to a given set of inputs,
\ie absence of deadlock and presence/absence of all signals.
In the version given here, it does not guarantee determinism;
Tardieu~\cite{phd-tardieu} proposed a version that also guarantees determinism,
but we find it too heavy since it amounts to verifying that all non-deterministic
executions yield the same result, which may require considering an exponential number of executions with nested local signals.
Determinism will be guaranteed for much more compelling conceptual reasons
by the constructive semantics detailed in the next section.

%Les us write~$s^+$ when~$s$ is present and~$s^-$ when~$s$ is absent in the instant.
The logical semantics defines reactions as behavioral transitions represented by SOS rules of the form
\[
  \LBS p E {E'} k {p'},
\]
where $E, E'$ are \emph{events}.
The event~$E$ denotes the \emph{inputs} of~$p$ while~$E'$ denotes its \emph{outputs}, that is, the signals emitted by~$p$. %and
%maps the other signals to~$-$.  The domains of $E$ and $E'$ are the
%same and they denote the set of \emph{visible signals}.
The integer~$k$ is the \emph{completion code} of~$p$ for the instant,
and~$p'$ is the \emph{derivative} of~$p$ by the transition, \ie the
statement replacing~$p$ for the next instant (this terminology is
inspired by the work of Brzozowski and Seger on translating regular
expressions to automata~\cite{Brzozowski-Seger}).
The rules are given in Figure~\ref{fig:LBS}.

\begin{figure}[tp]
  \begin{prooftree}
    \AXC{}
    \RightLabel{k}
    \UIC{$\LBS {\Sk{k}} E {∅} k \Snothing$}
    \DP
    \\[\rulevspace]

    \AXC{}
    \RightLabel{\Temit{}}
    \UIC{$\LBS{\Semit s} E {\SingletonEvent s} 0 \Snothing$}
    \DP
    \\[\rulevspace]

    \AXC{$s^+ ∈ E$}
    \RightLabel{awimm$^+$}
    \UIC{$\LBS{\Sawimm s} E {∅} 0 {\Snothing}$}
    \DP
    \hspace{\rulehspace}
    \RightLabel{awimm$^-$}
    \AXC{$s^- ∈ E$}
    \UIC{$\LBS{\Sawimm s} E {∅} 1 {\Sawimm s}$}
    \DP
    \\[\rulevspace]

    \AXC{$s^+ ∈ E$}
    \AXC{$\LBS p E {E'} k {p'}$}
    \RightLabel{then}
    \BIC{$\LBS{\Sif s p q} E {E'} k {p'}$}
    \DP
    \hspace{\rulehspace}
    \AXC{$s^- ∈ E$}
    \AXC{$\LBS q E {E'} k {q'}$}
    \RightLabel{else}
    \BIC{$\LBS{\Sif s p q} E {E'} k {q'}$}
    \DP
    \\[\rulevspace]

    \AXC{$\LBS p E {E'} k {p'}$}
    \RightLabel{suspend}
    \UIC{$\LBS{\Ssuspend s p} E {E'} k {\Ssequence{\Sawimm{\lnot s}}{\Ssuspend s {p'}}}$}
    \DP
    \\[\rulevspace]

    \AXC{$k \neq 0$}
    \AXC{$\LBS p E {E'} k {p'}$}
    \RightLabel{loop}
    \BIC{$\LBS{\Sloop p} E {E'} k {\Ssequence{p'}{\Sloop p}}$}
    \DP \\[\rulevspace]

    \AXC{$\LBS p E {E'} k {p'}$}
    \RightLabel{trap}
    \UIC{$\LBS{\Strap{\strut p}} E {E'}{↓k}{\Strap{p'}}$}
    \DP
    \hspace{\rulehspace}
    \AXC{$\LBS p E {E'} k {p'}$}
    \RightLabel{shift}
    \UIC{$\LBS{\Sshift p} E {E'}{\Kup k}{\Sshift{p'}}$}
    \DP
    \\[\rulevspace]

    \AXC{$k \neq 0$}
    \AXC{$\LBS p E {E'} k {p'}$}
    \RightLabel{seq$_k$}
    \BIC{$\LBS{\Ssequence p q} E {E'} k {\Ssequence{p'} q}$}
    \DP
    \hspace{\rulehspace}
    \AXC{$\LBS p E {E_p'} 0 {p'}$}
    \AXC{$\LBS q E {E_q'} k {q'}$}
    \RightLabel{seq$_0$}
    \BIC{$\LBS{\Ssequence p q} E {E_p' ∪ E_q'} k {q'}$}
    \DP
    \\[\rulevspace]

    \AXC{$\LBS p E {E_p'}{k_p}{p'}$}
    \AXC{$\LBS q E {E_q'}{k_q}{q'}$}
    \RightLabel{par}
    \BIC{$\LBS{\Sparallel p q} E {E_p' ∪ E_q'}{\max(k_p,k_q)}{\Sparallel{p'}{q'}}$}
    \DP \\[\rulevspace]

    \AXC{$\LBS p {\addEvent s + E}{E'} k {p'}$}
    \AXC{$s^+ ∈ E'$}
    \RightLabel{\SigP}
    \BIC{$\LBS{\Ssignaldecl s p} E {\restrictEvent{E'} s} k {\Ssignaldecl s {p'}}$}
    \DP
    \hspace{\rulehspace}
    \AXC{$\LBS p {\addEvent s - E}{E'} k {p'}$}
    \AXC{$s^+ \not\in E'$}
    \RightLabel{\SigM}
    \BIC{$\LBS{\Ssignaldecl s p} E {\restrictEvent{E'} s} k {\Ssignaldecl s {p'}}$}
  \end{prooftree}
  \caption{Logical (Behavioral) Semantics (LBS) rules.}
  \label{fig:LBS}
\end{figure}

Note that the rule $\LBS {\Sk{k}} E {∅} k \Snothing$ covers three
statements: first, the trivial termination of~$0$ ($\Tnothing$);
second, the pausing case for~$1$ ($\Tpause$) that returns completion
code~$1$ and has derivative~$0$ ($\Tnothing$) that will
terminate instantly at the next reaction; third, the~$k$ ($\Texit
{T^k}$) exit statement which returns completion code~$k$ and
has derivative~$\Snothing$ ($\Tnothing$).
%, unimportant since the rule for $\Strap p$ ($\Ttrap {T} p$) will make it vanish.

Except for the rules {\SigP} and {\SigM} that deal with signal declaration, our rules are a straightforward implementation of the intuitive semantics.
The trap statement catches the innermost exit (the one with code 2) and turns it into termination (code 0).
This is the meaning of the $\Kdown{}$ function, with (pseudo-)inverse $\Kup{}$ used in shift.

\begin{minipage}{0.4\linewidth}
  \[
  \Kdown k :=
  \begin{cases}
    0 & \mapsto 0 \\
    1 & \mapsto 1 \\
    2 & \mapsto 0 \\
    n \quad (n \geq 3) & \mapsto n - 1 \\
  \end{cases}
  \]
\end{minipage}
\begin{minipage}{0.4\linewidth}
  \[
  \Kup k :=
  \begin{cases}
    0 & \mapsto 0 \\
    1 & \mapsto 1 \\
    n \quad (n \geq 2) & \mapsto n+1 \\
  \end{cases}
  \]
\end{minipage} \\
We have $\Kdown{(\Kup n)} = n$ for any integer $n$ but not $\Kup{(\Kdown n)} = n$ because $\Kup{(\Kdown 2)} = \Kup 0 = 0$.

Note an unusual aspect: an instantaneous but compound reaction is
intentionally defined by a single transition $\LBS p E {E'} k {p'}$, not
by a sequence of microsteps.
For instance, the sequence operator $\Ssequence p q$ chains execution
of its components in the same rule if~$p$ terminates (rule seq$_0$),
instead of chaining microsteps of the form $\LBS p E {E_p'} 0
{\Ssequence \Snothing q}$ followed by $\LBS q E {E_q'} k {q'}$ as
would a standard SOS semantics.
The union $E_p'∪ E_q'$ in the seq$_0$ rule conclusion directly
expresses the synchrony hypothesis within the reaction: control passes
from~$p$ to~$q$ in the very same reaction.
The same holds for the parallel statement (rule par).

The {\SigP} and {\SigM} rules for local signals first extend the input
event~$E$ with~$s$ mapped to a status~$b$ (possibly shadowing any
upper-scoped signal~$s$), written $\addEvent s b E$, then execute~$p$
using this new input event, and finally either restore the status of the shadowed signal~$s$ if any or remove~$s$ from the output event~$E'$, before returning it.
This restoration or removal is written $\restrictEvent{E'} s$.
The completion code and derivative statement comes from the execution of~$p$.
The side condition $s^+ \in E'$ or $s^+ \not\in E'$ ensures the correct status for~$s$ in $p$: in rule {\SigP} we assume that $s$ is received, so we check that it is indeed emitted ($s^+ \in E'$), and conversely for rule {\SigM} with~$s$ not received and not emitted.
This is called the \emph{(logical) coherence law}~\cite[chap.~3]{Berry:ConstructiveSemanticsOfPureEsterel}: a signal~$s$ is present in an instant if and only if an ``\Temit{$s$}'' statement is executed in this instant.
%TODO: check if this is clear enough, maybe add forward ref to discussion

As an illustration of the LBS, the execution of the ABROi program is given in Figure~\ref{fig:ABROi-LBS}.

\begin{figure}
  \[
  \begin{array}{r@{\;}c@{\;}l}
    ABROi & = &
    \Sloop{
      \Strap{
        \Sparallel{\big(\Ssequence{\Spause}{\Sloop{(\Sif{R}{\Sexit{2}}{\Spause})}}\big)}
                  {\big(\Ssequence{
                      \Ssequence{
                        \Ssequence{(
                          \Sparallel{\Sawimm{A}}
                                    {\Sawimm{B}}
                          \,)}
                          {\Semit{O}}
                        }
                        {(\Sloop{\Spause})}
                     }
                     {\Sexit{2}}
                  \big)}
      }
    } \\
    & \LBS{}{\{ B \}}{\emptyset}{1}{} &
    \Ssequence{
      \Strap{
        \Sparallel{\big(\Ssequence{\Snothing}{\Sloop{(\Sif{R}{\Sexit{2}}{\Spause})}}\big)}
                  {\big(\Ssequence{
                      \Ssequence{
                        \Ssequence{(
                          \Sparallel{\Sawimm{A}}
                                    {\Snothing}
                          )}
                          {\Semit{O}}
                        }
                        {(\Sloop{\Spause})}
                    }
                    {\Sexit{2}}}
                  \big)}
      }{ABROi} \\
    & \LBS{}{\{ A, B \}}{\{ O \}}{1}{} &
    \Ssequence{
      \Strap{
        \Sparallel{\big(\Ssequence{\Snothing}{\Sloop{(\Sif{R}{\Sexit{2}}{\Spause})}}\big)}
                  {\big(\Ssequence{\Snothing}{\Ssequence{(\Sloop{\Spause})}
                               {\Sexit{2}}}\big)}
      }
    }{ABROi} \\
    & \LBS{}{\{ B \}}{\emptyset}{1}{} &
    \Ssequence{
      \Strap{
        \Sparallel{\big(\Ssequence{\Snothing}{\Sloop{(\Sif{R}{\Sexit{2}}{\Spause})}}\big)}
                  {\big(\Ssequence{\Snothing}{\Ssequence{(\Sloop{\Spause})}
                      {\Sexit{2}}}\big)}
      }
    }{ABROi} \\
    & \LBS{}{\{ R \}}{\emptyset}{1}{} &
    \Ssequence{
      \Strap{
        \Sparallel{\big(\Ssequence{\Snothing}{\Sloop{(\Sif{R}{\Sexit{2}}{\Spause})}}\big)}
                  {\big(\Ssequence{
                      \Ssequence{
                        \Ssequence{(
                          \Sparallel{\Sawimm{A}}
                                    {\Sawimm{B}}
                          )}
                          {\Semit{O}}
                        }
                        {(\Sloop{\Spause})}
                    }
                    {\Sexit{2}}
                  \big)}
      }
    }{ABROi} \\
    & \LBS{}{\{ A, B, R \}}{\{ O \}}{1}{} &
    \Ssequence{
      \Strap{
        \Sparallel{\big(\Ssequence{\Snothing}{\Sloop{(\Sif{R}{\Sexit{2}}{\Spause})}}\big)}
                  {\big(\Ssequence{\Snothing}{\Ssequence{(\Sloop{\Spause})}
                      {\Sexit{2}}}\big)}
      }
    }{ABROi} \\
  \end{array}
  \]
  \caption{Execution of ABROi in the LBS.}
  \label{fig:ABROi-LBS}
\end{figure}

\section{The Constructive Semantics
\texorpdfstring{\marginlink{Esterel.Semantics.CBS}{0mm}}{}} %LR
%%%%%%%%%%%%%%%%%%%%%%%%%%%%%%%%%%%
\label{sec:constructive-semantics}

The introduction showed that synchrony comes with some causality issues, in particular the program ``$\Sif{S}{\Semit S}{\Semit S}$''
(or \Tif{S}{\Temit S}{\Temit S}) is only valid in a classical setting as S is present because it is emitted but also emitted (\id{then} branch) because is it present.
In order to remove such causality loops as well as non-determinism, the constructive semantics was introduced in~\cite{Berry:ConstructiveSemanticsOfPureEsterel}.
We now focus on this semantics, which has become the reference for the language because it solves these issues but also because it provides a much better match with the circuit semantics~\cite{MendlerShipleBerry:ConstructiveCircuits}.

Altogether, the constructive (behavioral) semantics (CBS) we now present differs from the logical one on two main aspects: the statuses of signals and the signal rules.
Signals may take a third status~$⊥$ representing the absence of information: we do not yet know whether a signal is emitted or not.
This means in particular that no rule can apply to ``$\Sif s p q$'' (\Tif{$s$}{$p$}{$q$}) if $s$'s status is~$⊥$: the execution is blocked on the test.
Only the signal declaration rule deals with the~$⊥$ status.
For this, it uses two auxiliary mutually recursive functions \MustName{} and \CanName{} \marginlink{Esterel.Semantics.MustCan}{-2mm}
to compute respectively the set of signals that \emph{must} and \emph{can} be emitted within an instant using only the information contained in the event~$E$.
They intuitively represent the information we can gather from the body of~$p$ \emph{by making no assumptions on the status of the declared signal~$s$}, that is, by setting its status to~$⊥$ (unknown).
This restriction ensures that the justification of the status of signal~$s$ does not rely on itself.

The union of events is now defined pointwise using Scott's disjunction on $\{⊥,+,-\}$, that is: $+ \vee x = x \vee + = +$,  $- \vee ⊥ = ⊥ \vee - = ⊥ \vee ⊥ = ⊥$, and $- \vee - = -$ ; in essence, non emission must be agreed upon everywhere.

As most of the rules of the logical semantics do not deal with signal statuses, we do not need to modify them.
Moreover, the rules for ``$\Semit s$'', ``$\Sawimm s$'' and ``$\Sif s p q$'' can also be kept unmodified, as they only refer to statuses~$+$ and~$-$.
In fact, only the two rules for signal declaration (rules {\SigP} and {\SigM} rules on Figure~\ref{fig:LBS}) need to be reworked,
where handling $⊥$ is deferred to two auxiliary functions, $\MustName$ and $\CanName$.
Thus, the CBS contains exactly the same rules as the LBS, except for the {\SigP} and {\SigM} rules which are replaced by the following {\CSigP} and {\CSigM} rules:

\begin{prooftree}
  \AXC{$\CBS p {\addEvent s + E} {E'} k {p'}$}
  \AXC{$s ∈ \Must p {\addEvent s {⊥} E}$}
  \RightLabel{\CSigP}
  \BIC{$\CBS{\Ssignaldecl s p} E {\restrictEvent{E'} s} k {\deltafun k {\Ssignaldecl s {p'}}}$}
  \DP
  \hfill
  \AXC{$\CBS p {\addEvent s - E} {E'} k {p'}$}
  \AXC{$s ∉ \Can + p {\addEvent s {⊥} E}$}
  \RightLabel{\CSigM}
  \BIC{$\CBS{\Ssignaldecl s p} E {\restrictEvent{E'} s} k {\deltafun k {\Ssignaldecl s {p'}}}$}
\end{prooftree}

\subsection{The \CanName{} and \MustName{} functions}
%^^^^^^^^^^^^^^^^^^^^^^^^^^^^^^^^^^^^^^^^^^^^^^^^^^^^
\label{sec:can-must}

The auxiliary functions \MustName{} and \CanName{} are respectively an under- and an over-approximation of the final set of emitted signals.
They coincide on most statements, differing only on four of them: ``$\Sif s p q$'', ``$\Sawimm s$'', ``$\Ssequence p q$'', and ``$\Ssignaldecl s p$''.
For instance, when the status of~$s$ is~$⊥$ in a presence test $\Sif s p q$, we do not know which branch to execute so that nothing \emph{must} be done; on the contrary, both branches \emph{can} be executed, hence the difference between \MustName{} and \CanName.

Technically, we need to compute approximations for both signals and completion codes at the same time because the sequential composition~$\Ssequence p q$ needs to know if~$p$ must/can terminate to decide if~$q$ has to be considered.
The computed signal and completion code sets are denoted by adding a subscript to the \MustName{} and \CanName{} functions: $s$ for signals and $k$ for completion codes, as in $\MustS p E$ for signals and $\MustK p E$ for completion codes.
We usually drop the subscript as the ambiguity is easily resolved from the context. %: the type of~$s$ or~$k$ selects which component we need.
Furthermore, in the case of \MustName, the set of completion codes is either empty or a singleton because of determinism, whereas it can be an arbitrary non-empty set for \CanName.

When $\Must{\Ssignaldecl s p} E$ concludes that the status of the local signal~$s$ must be~$+$, we want to propagate~$s^+$ inside the evaluation of~$p$ to get more precise information.
This can only be done when we are sure that $\Ssignaldecl s p$ is executed, not merely that it \emph{could} be executed as \CanName{} expresses.
Therefore, \CanName{} needs to carry a $+$ tag telling whether the statement must be executed or a $⊥$ tag if we do not have this information.

The $\Kup{}$ and $\Kdown{}$ functions, used for computing the completion codes of statements $\Strap p$ and $\Sshift p$, are extended pointwise to sets of completion codes, that is: $\Kdown K = \left\{ \Kdown k \;\middle|\; k \in K \right\}$ and $\Kup K = \left\{ \Kup k \;\middle|\; k \in K \right\}$.
The maximum of two completion code sets $K_1$ and $K_2$, written $\MAX{K_1}{K_2}$ and used in the $\Sparallel p q$ case, is the set of pointwise maxima: $\left\{ \max(k_1, k_2) ~\middle|~ k_1 \in K_1 \land k_2 \in K_2\right\}$.
We abbreviate by $\Ssignaldecl e X$ the removal of element~$e$ from set~$X$, instead of the more standard but more cumbersome $X \setminus \{e\}$.
%This is used on signal sets for local signal declaration $\Ssignaldecl s p$, which is where we borrow the notation from, but also on completion code sets for $\Sseq p q$.
Operations on sets are extended pointwise to pairs of sets %computed by \MustName{} and \CanName{}
by applying them on each component where meaningful.
For example:
\[
  \begin{array}{r@{\quad=\quad}l}
    \Can{⊥} p E \cup \Can{⊥} q E & \left(\CanS{⊥} p E \cup \CanS{⊥} q E, \CanK{⊥} p E \cup \CanK{⊥} q E\right) \\
    \Ssignaldecl 0 {\Can m p E} & \left(\CanS m p E, \Ssignaldecl 0 {\CanK m p E} \right) \\
    \Ssignaldecl s {\Must p E} & \left(\Ssignaldecl s {(\MustS p E)}, \MustK p E\right) \\
  \end{array}
\]
Here, removing the integer $0$ from a set of signals does not make sense, hence the second example only applies it to the second component of the pair; similarly, removing a signal~$s$ from a set a completion codes does not make sense, hence the third example only applies it to the first component of the pair.
We also extend inclusion pointwise to these pairs by writing $\Must p E \subseteq \Can + p E$ to mean both $\MustS p E \subseteq \CanS + p E$ and $\MustK p E \subseteq \CanK + p E$.

$\MustName$ and $\CanName$ are defined by mutual induction (because of the signal declaration case) over the statement~$p$ (see Figure~\ref{fig:MustCanDef}).

\begin{figure}
  \begin{align*}
    \Must{\Sk k} E &= \Can m {\Sk k} E = (∅, \{ k \}) \\
    \Must{\Semit s} E &= \Can m {\Semit s} E = (\{ s \}, \{ 0 \}) \\
    \Must{\Sawimm s} E &=
    \begin{cases}
      (∅, \{ 0 \}) & \text{if } s^+ ∈ E \\
      (∅, \{ 1 \}) & \text{if } s^- ∈ E \\
      (∅, ∅) & \text{otherwise} \\
    \end{cases} &
    \Can m {\Sawimm s} E &=
    \begin{cases}
      (∅, \{ 0 \}) & \text{if } s^+ ∈ E \\
      (∅, \{ 1 \}) & \text{if } s^- ∈ E \\
      (∅, \{ 0, 1 \}) & \text{otherwise} \\
    \end{cases} \\
    \Must{\Sif s p q} E &=
    \begin{cases}
      \Must p E & \text{if } s^+ ∈ E \\
      \Must q E & \text{if } s^- ∈ E \\
      (∅, ∅) & \text{otherwise} \\
    \end{cases} &
    \Can m {\Sif s p q} E &=
    \begin{cases}
      \Can m p E & \text{if } s^+ ∈ E \\
      \Can m q E & \text{if } s^- ∈ E \\
      \Can{⊥} p E ∪ \Can{⊥} q E & \text{otherwise} \\
    \end{cases} \\
    \Must{\Ssuspend s p} E &= \Must p E &
    \Can m {\Ssuspend s p} E &= \Can m p E \\
    \Must{\Sloop p} E &= \Must p E &
    \Can m {\Sloop p} E &= \Can m p E \\
    \Must{\Strap p} E &= (\MustS p E, \Kdown{(\MustK p E)}) &
    \Can m {\Strap p} E &= (\CanS m p E, \Kdown{(\CanK m p E)}) \\
    \Must{\Sshift p} E &= (\MustS p E, \Kup{(\MustK p E)}) &
    \Can m {\Sshift p} E &= (\CanS m p E, \Kup{(\CanK m p E)}) \\
%
%%     \Must{\Ssuspendi s p} E &=
%%     \begin{cases}
%%       (∅, \{ 1 \}) & \text{if } s^+ ∈ E \\
%%       \Must p E & \text{if } s^- ∈ E \\
%%       (∅, ∅) & \text{otherwise} \\
%%     \end{cases} \\
%% %
%%     \Can m {\Ssuspendi s p} E &=
%%     \begin{cases}
%%       (∅, \{ 1 \}) & \text{if } s^+ ∈ E \\
%%       \Can m p E & \text{if } s^- ∈ E \\
%%       (\CanS - p E, \CanK - p E ∪ \{ 1 \}) & \text{otherwise} \\
%%     \end{cases} \\
%
  \end{align*}
  \vspace{-4em}
  \begin{align*}
    \Must{\Ssequence p q } E &=
%    \left( \Ssignaldecl 0 {\Must p E} \right) ∪ \Must q E \\ &=
    \begin{cases}
      \Must p E & \text{if } 0 ∉ \MustK p E \\
      (\MustS p E ∪ \MustS q E, \MustK q E) & \text{if } 0 ∈ \MustK p E \\
    \end{cases} \\
    \Can m {\Ssequence p q} E &=
    \begin{cases}
      \Can m p E & \text{if } 0 ∉ \CanK m p E \\
      %% (\CanS m p E ∪ \CanS + q E, \CanK m p E \setminus \{ 0 \} ∪ \CanK + q E) & \text{if } 0 ∈ \MustK p E \text{ and } m = + \\
      %% (\CanS m p E ∪ \CanS{⊥} q E, \CanK m p E \setminus \{ 0 \} ∪ \CanK{⊥} q E) & \text{otherwise} \\
      \left( \Ssignaldecl 0 {\Can m p E} \right) ∪ \Can + q E & \text{if } 0 ∈ \MustK p E \text{ and } m = + \\
      \left( \Ssignaldecl 0 {\Can m p E} \right) ∪ \Can{⊥} q E & \text{otherwise} \\
    \end{cases} \\
    \Must{\Sparallel p q} E &= (\MustS p E ∪ \MustS q E, \MAX{\MustK p E}{\MustK q E}) \\
    \Can m {\Sparallel p q} E &= (\CanS m p E ∪ \CanS m q E, \MAX{\CanK m p E}{\CanK m q E}) \\
    \Must{\Ssignaldecl s p} E &=
    \begin{cases}
      \Ssignaldecl s {\Must p {\addEvent s + E}} & \text{if } s ∈ \MustS p {\addEvent s {⊥} E} \\
      \Ssignaldecl s {\Must p {\addEvent s - E}} & \text{if } s ∉ \CanS + p {\addEvent s {⊥} E} \\
      \Ssignaldecl s {\Must p {\addEvent s {⊥} E}} & \text{otherwise} \\
    \end{cases} \\
    \Can m {\Ssignaldecl s p} E &=
    \begin{cases}
      \Ssignaldecl s {\Can m p {\addEvent s + E}} & \text{if } s ∈ \MustS p {\addEvent s {⊥} E} \text{ and } m = + \\
      \Ssignaldecl s {\Can m p {\addEvent s - E}} & \text{if } s ∉ \CanS m p {\addEvent s {⊥} E} \\
      \Ssignaldecl s {\Can m p {\addEvent s {⊥} E}} & \text{otherwise} \\
    \end{cases}
  \end{align*}
  \caption{Definitions of the functions \MustName{} and \CanName.
    The first component is for signals, the second one for completion codes.}
  \label{fig:MustCanDef}
\end{figure}

\begin{coqrmk}[The \MustName{} and \CanName{} functions in Coq]
There are two small differences between the presentation of \MustName{} and \CanName{} above and their Coq description.
First, as we know that $\MustK p E$ is either a singleton or the empty set, it is more convenient to use an option type to enforce this invariant: in Coq, $\MustK p E$ is either \texttt{Some} $k$ for some $k \in \N$ or \texttt{None}.
Second, the tag~$m$ carried by $\Can m p E$ to express whether~$p$ must be evaluated or not can take values~$+$ (surely executed) and~$\bot$ (maybe executed).
In Coq, we use a Boolean instead with true representing~$+$ and false representing~$\bot$.
%% (Notice that we can also define~$-$ (surely not executed) but this is useless in practice).
\end{coqrmk}

\subsection{Properties of \MustName{}/\CanName{} and of the constructive behavioral semantics}
%^^^^^^^^^^^^^^^^^^^^^^^^^^^^^^^^^^^^^^^^^^^^^^^^^^^^^^^^^^^^^^^^^^^^^^^^^^^^^^^^^^^^^^^^^^^^^^^^^^^

The constructive restriction ensures that the result is deterministic and that undefined statuses cannot be created during execution: they are only temporary statuses, meant to be used for local signals.
More precisely, if we define a \emph{total event} as a constructive event where no signal is mapped to~$⊥$, we have the following lemmas:
\marginlink[\#CBS_Total]{Esterel.Semantics.CBS}{10mm}
\begin{lemma}[Output events are total]
  \label{thm:preserving-totality}
  For all $p$, $E$, $E'$, $k$, and $p'$, if $\CBS p E {E'} k {p'}$, then~$E'$ is total.
  (Notice that we make no assumption on~$E$).
\end{lemma}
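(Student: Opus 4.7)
The plan is to prove Lemma~\ref{thm:preserving-totality} by structural induction on the derivation tree of $\CBS p E {E'} k {p'}$. The key observation is that, unlike in the \MustName{} and \CanName{} auxiliary functions, the status $⊥$ is never introduced into the output event~$E'$ by any CBS rule: axioms produce only $∅$ or $\SingletonEvent s$, compound rules only combine or restrict existing outputs, and the signal declaration rules invoke the sub-derivation with~$s$ set to $+$ or $-$ (not~$⊥$), so~$⊥$ remains confined to the side conditions on \MustName{} and \CanName{}.

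First, I would record the two elementary facts about the Coq representation of events that the induction will rely on. Namely, (i)~$∅$ denotes the constant map to $-$ on the ambient set of visible signals and hence is total, $\SingletonEvent s$ is total, and any union of total events is total; and~(ii)~if~$E'$ is total then the restriction $\restrictEvent{E'} s$ (which either removes~$s$ when it was not in scope before, or resets it to its outer status~$-$) is again total. These two facts are simple computations on the underlying finite maps and follow from the definitions of $∅$, $∪$, and $\restrictEvent{\cdot} s$ given in the previous section.

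The induction then proceeds case by case on the last rule used. For the axioms $\Sk k$, $\Semit s$ and the two $\Tawimm s$ rules, the output event is $∅$ or $\SingletonEvent s$ and totality is immediate from~(i). For $\Sif s p q$, $\Strap p$, $\Sshift p$, and both $\Ssuspend s p$ and $\Sloop p$, the output event is exactly that of the unique premise and totality follows directly from the induction hypothesis. For $\Ssequence p q$ (both sub-cases) and $\Sparallel p q$, the output event is either a single premise's output or a union of two premises' outputs; each premise's output is total by the induction hypothesis, and unions of total events are total by~(i). Finally, for $\CSigP$ and $\CSigM$, the premise $\CBS p {\_} {E'} k {p'}$ uses $\addEvent s + E$ or $\addEvent s - E$ (so the hypothesis applies, giving that~$E'$ is total), and the conclusion's output $\restrictEvent{E'} s$ is total by~(ii).

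I do not expect any real obstacle: the proof is a straightforward traversal of the rules, and the only slightly delicate step is bookkeeping the Coq representation of events in the signal declaration case so that $\restrictEvent{E'} s$ is shown to map~$s$ back to~$-$ (rather than to~$⊥$) when restoring the outer scope. This relies on the convention for~$∅$ already fixed in the discussion before Lemma~\ref{thm:domain-invariance}, and once that convention is in place the case closes cleanly.
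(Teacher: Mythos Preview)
Your proposal is correct and follows exactly the approach the paper implicitly relies on: the paper does not spell out a proof for this lemma (it falls under the ``straightforward induction'' proofs delegated to the Coq development), but your structural induction on the CBS derivation tree, with the observation that no rule ever introduces~$\bot$ into the output event and that $\emptyset$, $\SingletonEvent{s}$, unions, and restrictions preserve totality, is precisely the expected argument. One minor remark: in the signal case you need not justify that the induction hypothesis ``applies'' because of the shape of the input event---the lemma places no totality assumption on~$E$, so the hypothesis is available for the premise regardless.
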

\marginlink[\#CBS_deterministic]{Esterel.Semantics.CBS}{10mm}
\begin{lemma}[Determinism of the constructive semantics]
  For all~$p$,~$E$,~$E_1'$,~$k_1$,~$p_1'$,~$E_2'$,~$k_2$, and~$p_2'$, if $\CBS p E {E_1'}{k_1}{p_1'}$ and $\CBS p E {E_2'}{k_2}{p_2'}$, then $E_1' = E_2'$, $k_1 = k_2$, and $p_1' = p_2'$.
\end{lemma}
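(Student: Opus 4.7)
The plan is to proceed by induction on the derivation of $\CBS p E {E_1'}{k_1}{p_1'}$, followed by a case analysis on the last rule applied in the second derivation $\CBS p E {E_2'}{k_2}{p_2'}$. For each top-level construct in $p$, I need to show that both derivations are forced to use the same rule, after which the induction hypothesis on the subderivations yields that the components coincide, and the functional operations on them ($\restrictEvent{\cdot}{s}$, $\deltafun{k}{\cdot}$, union of emission sets, maximum of codes) propagate these equalities to the conclusion.

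Most rules are syntax-directed, so the top-level constructor of $p$ determines the rule uniquely and there is nothing to do beyond invoking the induction hypothesis. The cases that need a real argument are the presence test $\Sif s p q$ and the immediate await $\Tawimm s$ (and similarly $\Ssuspendi s p$): two rules exist, gated by premises $s^+ \in E$ and $s^- \in E$, but since an event assigns a single status to each signal these premises are mutually exclusive, forcing both derivations to take the same branch. The truly delicate case is signal declaration $\Ssignaldecl s p$: the rules \CSigP and \CSigM share the same conclusion shape but are guarded by side conditions $s \in \MustS p {\addEvent s \bot E}$ and $s \notin \CanS + p {\addEvent s \bot E}$, and I need to rule out that both hold simultaneously.

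The key auxiliary step is a \textbf{Must--Can inclusion} lemma: for every $p$ and every $E$, $\MustS p E \subseteq \CanS + p E$ (and analogously $\MustK p E \subseteq \CanK + p E$). This is proved by induction on $p$ using the mutual definitions in Figure~\ref{fig:MustCanDef}; most constructors yield the inclusion immediately, the delicate cases being $\Sif s p q$ at an undefined $s$ (Must yields $\emptyset$ while Can yields $\CanS - p E \cup \CanS - q E$), $\Ssequence p q$ (where the parameter $b$ must be tracked carefully across the $0 \in \MustK p E$ split), and $\Ssignaldecl s p$ itself, which requires the induction to be stated simultaneously for Must and Can and to use the monotonicity of these functions in the status assigned to $s$.

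With this inclusion, a simultaneous application of \CSigP and \CSigM would yield both $s \in \MustS p {\addEvent s \bot E}$ and $s \notin \CanS + p {\addEvent s \bot E}$, an immediate contradiction. Hence both derivations must use the same signal rule, and the induction hypothesis on the subderivation closes this case. The main obstacle I expect is establishing the Must--Can inclusion uniformly, particularly in the $\Ssignaldecl s p$ case where the inductive hypothesis on the subprogram must be composed with three potentially different augmented events ($\addEvent s + E$, $\addEvent s - E$, $\addEvent s \bot E$) depending on the branch taken by the definitions of Must and Can, forcing a careful argument that the outcomes remain comparable.
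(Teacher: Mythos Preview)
Your proposal is correct. The paper does not spell out a proof for this lemma (it is among the results deferred to the Coq development as a ``straightforward induction''), but your argument is the expected one: the only genuinely non-syntax-directed case is the local signal declaration, and the mutual exclusivity of the \CSigP{} and \CSigM{} side conditions follows from the inclusion $\MustS p {\addEvent s \bot E} \subseteq \CanS + p {\addEvent s \bot E}$, which is provable by induction on~$p$ alongside the monotonicity/antimonotonicity facts the paper does state explicitly.
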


As \MustName{} and \CanName{} represents under- and over-approximation of the behavior of statements under the CBS semantics, we have the following inclusions:
\begin{lemma}[Properties of \MustName{} and \CanName]
  For all $m$, $p$, and $E$, we have:
  \begin{itemize}
%  \item Any statement can produce a completion code: $\CanK m p E \neq ∅$;
  \item When~$p$ is known to be executed, \CanName{} is more precise: $\Can + p E \subseteq \Can{⊥} p E$;
    \marginlink[\#Can_true_false_fst]{Esterel.Semantics.MustCan}{-2mm}
  \item Everything that \emph{must} be done \emph{can} be done: $\Must p E \subseteq \Can m p E$;
    \marginlink[\#Must_Can]{Esterel.Semantics.MustCan}{2mm}
  %% \item When~$p$ must return completion code~$k$, we have enough information to deduce that this is
  %%   the only execution that \emph{can} happen: if $\MustK p E = \{ k\}$ then $\CanK + p E = \{ k \}$.
  \end{itemize}
\end{lemma}

How large is the gap between the under- and over-approximations described by \MustName{} and \CanName{}?
The answer is that they coincide on constructive statements, that is, on statements that can reduce under the constructive semantics.
%% \marginlink[\#CBS_Must]{Esterel.Semantics.CBS}{10mm}
%% \begin{lemma}[Join correctness of \MustName{} and \CanName{}]
%%   \label{thm:correctness-must-can}
%%   For all $p$, $E$, $E'$, $k$, and $p'$, if $\CBS p E {E'} k {p'}$ then we have $\Must p E = \Can + p E = (E', \{ k \})$.
%% \end{lemma}
%% \begin{proof}
%%   See Section~\ref{sec:logical-constructive}.
%%   \todoLR{Update or remove the proof}
%% \end{proof}
Conversely, whenever they are equal, the statement is constructive.
\marginlink[\#CBS_iff_Must_Can]{Esterel.Semantics.CBS}{10mm}
\begin{theorem}
  For all $p$, $E$, there exists a transition $\CBS p E {E'\!} k {p'}$ for some~$E'$,~$k$ and~$p'$ if and only if we have $\Must p E = \Can + p E$.
\end{theorem}
\begin{remark}
  This is only true for loop-safe~$p$~\cite{Berry:ConstructiveSemanticsOfPureEsterel}, that is, for statements in which the Can function on all loop bodies does not contain 0.
  This notion could be refined more, but it does not seem very useful.
\end{remark}

What about non-constructive statements?
The gap can be arbitrarily large in general, for instance if~$s$ does not appear in~$p$, then for any event~$E$, $\MustS{\Ssignaldecl s {(\Sif s p p)}} E = ∅$ whereas $\CanS + {\Ssignaldecl s {(\Sif s p p)}} E = \CanS + p E$.
Nevertheless, the \MustName{} and \CanName{} functions are well-named, as they indeed describe what must/can be observed, even on non-constructive statements:
\marginlink[\#Must_Can_LBS]{Esterel.Proofs.CBS_LBS}{10mm}
\begin{lemma}[Inclusion between LBS and \MustName{} and \CanName{}]
  \label{thm:MustCanLBSLemma}
  For all $p$, $E$, $E'$, $k$, and $p'$, if $\LBS p E {E'} k {p'}$ then we have $\Must p E \subseteq (E', \{ k \}) \subseteq \Can + p E$.
\end{lemma}
Notice that we use the logical behavioral semantics (LBS) here.
If we have $\Must p E \neq \emptyset$, then the statement is actually constructive and these inclusions become equalities.
% Trop compliqué
%Thus, the gap between \MustName{} and \CanName{} identifies which signals are not used constructively inside the statement~$p$.
%\todoGB{OK pour toi, Gérard?}

At this point, the reader may wonder how the non-causal programs P$_1$ and P$_3$ mentioned in the introduction, are rejected.
The technical answer is quite simple: no constructive rule can be applied to them.
Why is that?
Because of the definitions of \CanName{} and \MustName.
For instance, to evaluate P$_1 = \Ssignaldecl s {(\Sif s {\Semit s} {\Snothing})}$ we have to check whether we can apply the rule for local signals, namely \CSigP\! or \CSigM\!.
  To apply \CSigP, we have to compute $\Must{(\Sif s {\Semit s} \Snothing)}{\addEvent s {⊥} E}$, which is $(\emptyset, \emptyset)$ as signal~$s$ has status~$⊥$ and thus, rule \CSigP{} cannot be applied.
To apply \CSigM, we need to compute $\Can + {(\Sif s {\Semit s} \Snothing)}{\addEvent s {⊥} E} = (\{s\}, \{0\})$ and rule \CSigM{} cannot be applied either.
The exact same reasoning applies to P$_3 = \Ssignaldecl s {(\Sif s {\Semit s} {\Semit s})}$.

The reader may also wonder whether these definitions of \MustName{} and \CanName{} accurately represent constructiveness.
In our opinion, what justifies it is the direct correspondence with the constructiveness
of sets of Boolean equations logically defining digital circuits.
More precisely, \cite{MendlerShipleBerry:ConstructiveCircuits} proves that Boolean constructiveness is equivalent to
electrical stability of the result computed by a physical mapping of the equations,
under a reasonable gate and wire delay model.

\subsection{Refinement between the logical and constructive semantics}
%^^^^^^^^^^^^^^^^^^^^^^^^^^^^^^^^^^^^^^^^^^^^^^^^^^^^^^^^^^^^^^^^^^^^^
\marginlink{Esterel.Semantics.CBS_LBS}{-10mm}
\label{sec:refinement-constructive-coherent}

The constructive semantics is a refinement of the logical one that precludes some unwanted behaviors due to non-causality.
Therefore, we expect to have a theorem stating that any valid transition in the constructive semantics is also valid in the logical one.

In order to state this theorem in Coq, we need to convert constructive events into classical ones.
We name \CtoKName{} the function performing this conversion.
Because of the status~$⊥$ (unknown), there is no canonical way to do so.
Therefore, we choose to restrict the equivalence theorem to \emph{total (constructive) events}, that is, to constructive events~$E$ in which no signal is mapped to~$⊥$.
Thus, the precise statement we prove is the following:

\begin{theorem}
  \label{thm:constructive-refine-coherent}
  \marginlink[\#CBS_LBS]{Esterel.Semantics.CBS_LBS}{-2mm}
  For all $p$, $E$, $E'$, $k$, $p'$, if $\Total E$ and $\CBS p E {E'} k {p'}$ then $\LBS p {\CtoK E} {\CtoK{E'}} k {p'}$.
\end{theorem}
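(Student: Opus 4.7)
The plan is to induct on the derivation of $\CBS p E {E'} k {p'}$ and, for each CBS rule, build an LBS derivation from the inductive hypotheses applied to the premises. For nearly every rule—the $\Sk k$ rule, \Temit{}, \Tif{}{}{}, \Tawimm{}, $\Ssuspend{}{}$, sequence, trap, loop, parallel, and $\Sshift{}$—the CBS and LBS rules are syntactically identical; Section~\ref{sec:constructive-semantics} emphasizes that only the two signal declaration rules are altered. In each such case the induction hypothesis produces exactly the LBS premises required, and the matching LBS rule closes the case. Converting through \CtoKName{} is transparent on total events because it preserves the $+$ and $-$ markings, so side-conditions of the form $s^+ \in E$ or $s^- \in E$ translate verbatim. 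The \Total{} assumption is preserved into subderivations since the environment is unchanged by these rules.

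The interesting cases are \CSigP{} and \CSigM{}. For \CSigP{}, the induction hypothesis applied to the CBS premise for $p$ yields an LBS derivation for $p$ in the corresponding classical environment (extended by $s^+$, which remains total). To apply the LBS rule \SigP{} it then remains to establish $s \in \CtoK{E'}$. This we obtain by combining the side-condition $s \in \MustS p {\addEvent s \bot E}$ with Lemma~\ref{thm:correctness-must-can}, which provides $\MustS p {E_p} \subseteq E'$ for the environment $E_p$ in which the CBS derivation of $p$ is actually conducted. The dual case \CSigM{} uses the upper bound $E' \subseteq \CanS + p {E_p}$ from the same lemma, together with the hypothesis $s \notin \CanS + p {\addEvent s \bot E}$, to conclude $s \notin \CtoK{E'}$.

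The main obstacle is precisely bridging the two environments that appear in the signal declaration cases: the CBS side-conditions refer to $\addEvent s \bot E$, whereas Lemma~\ref{thm:correctness-must-can} constrains \MustName{} and \CanName{} under the more informative environment used inside the derivation of $p$. One must therefore establish a Scott-monotonicity property of these auxiliary functions, to the effect that raising $s$ from $\bot$ to $+$ or $-$ can neither remove signals from \MustName{} nor add signals to $\CanName^+$. This lemma is itself proved by mutual induction on the statement, with the delicate cases being sequential composition (where membership in the return-code set triggers an extension of the signals component) and nested signal declarations (where \MustName{} and \CanName{} are mutually recursive). Once this monotonicity is in place, the remainder of the argument is routine bookkeeping: verifying that \CtoKName{} commutes with event extensions, that totality propagates through $\addEvent s + \cdot$ and $\addEvent s - \cdot$, and that the $\delta$ function behaves identically on both sides of the equivalence in the trap, parallel, sequence, and suspend cases.
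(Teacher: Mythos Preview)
Your proposal is correct and follows the same approach as the paper: induction on the CBS derivation, with all cases except signal declaration being immediate, and the \CSigP{}/\CSigM{} cases handled via Lemma~\ref{thm:correctness-must-can} together with the monotonicity of \MustName{} and antimonotonicity of \CanName{} to bridge the environments $\addEvent s \bot E$ and $\addEvent s \pm E$. You are in fact more explicit than the paper's terse proof sketch about where and why the monotonicity lemma is needed in the signal cases.
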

\begin{coqrmk}
  As we already know that the constructive semantics produces total output events, we could replace $\CtoK{E'}$ with $E'$.
  In Coq, it is still required for typing reasons.
  %: in the constructive setting, $E'$ could map a signal to~$\bot$ whereas this is impossible in the classical setting.
\end{coqrmk}
\begin{proof}[Proof of Theorem~\ref{thm:constructive-refine-coherent}]
  By induction on the proof of $\CBS p E {E'} k {p'}$. \\
  As both semantics are the same except for signal declaration, only the \CSigP{} and \CSigM{} rules are interesting; the other ones are handled using expected properties of \CtoKName.
  These two cases rely on Lemma~\ref{thm:MustCanLBSLemma} above which expresses that the functions \MustName{} and \CanName{} indeed compute what their names imply: any signal (\resp completion code) computed by $\Must p E$ is indeed emitted (\resp reached) and any emitted signal or reached completion code indeed belongs to the corresponding set computed by $\Can + p E$.
\end{proof}

\section{The Constructive State Semantics
\texorpdfstring{\marginlink{Esterel.Semantics.CSS}{-2mm}}{}} %LR
%%%%%%%%%%%%%%%%%%%%%%%%%%%%%%%%%%%%%%%%%
\label{sec:state-semantics}

Even though the constructive semantics is well-suited as the starting semantics of Esterel, it is quite far from the circuit one of~\cite{Berry:ConstructiveSemanticsOfPureEsterel}, mostly because the models are very different.
On the Esterel side, the semantics transforms the source code into a derivative by keeping only the currently active parts of the program, \ie what is left to execute.
On the contrary, circuits are fixed hardware where only the state of registers can be modified between reactions.
To bridge this gap, one can present the constructive semantics in a different way, keeping the source code intact and adding tags on top of it to represent where the execution currently is:
these tags will correspond to a distributed program counter that precisely encodes the states of the hardware registers, and the program text will be preserved, exactly as the circuit wiring.
This leads us the \emph{constructive state semantics} we study now.

Let us consider the following program:
``$\Sseq{\Sseq{\Semit{o_1}}{\Spause}}{\Sif{s}{\Spar{(\Sseq{\Spause}{\Semit{o_2}})}{(\Sseq{\Semit{o_3}}{\Spause})}}{(\Sseq{\Spause}{\Semit{o_4}})}}$''.
In the constructive semantics, executing it in the environment $E = \{ s \}$ would lead to the following execution chain:

\begin{align*}
  \Sseq{\Sseq{\Semit{o_1}}{\Spause}}{\Sif{s}{\Spar{(\Sseq{\Spause}{\Semit{o_2}})}{(\Sseq{\Semit{o_3}}{\Spause})}}{(\Sseq{\Spause}{\Semit{o_4}})}}
  & \CBS{}{\{s\}}{\{o_1\}}{1}
       {\Sseq{\Snothing}{\Sif{s}{(\Spar{\Sseq{\Spause}{\Semit{o_2}}}{\Sseq{\Semit{o_3}}{\Spause}})}{(\Sseq{\Spause}{\Semit{o_4}})}}} \\
  & \CBS{}{\{s\}}{\{o_3\}}{1}
       {\Spar{(\Sseq{\Snothing}{\Semit{o_2}})}{\Snothing}}
  \CBS{}{\{s\}}{\{o_2\}}{0}{\Snothing}
\end{align*}

Here, executed statements keep disappearing until the whole program becomes $\Snothing$.
In the state semantics, we instead keep the program intact and move tags $\state{\;\cdot\;}$ to indicate where execution stops at each reaction.
One way to start the overall execution is to add an activated pause $\state{\Spause}$ in sequence before the program, called the \emph{boot} statement.
\begin{align*}
  \Sseq{\state{\Spause}}{\Sseq{\Sseq{\Semit{o_1}}{\Spause}}{\Sif{s}{\Spar{(\Sseq{\Spause}{\Semit{o_2}})}{(\Sseq{\Semit{o_3}}{\Spause})}}{(\Sseq{\Spause}{\Semit{o_4}})}}}
  &\CSS{}{\{s\}}{\{o_1\}}{1}
       {\Sseq{\Spause}{\Sseq{\Sseq{\Semit{o_1}}{\state{\Spause}}}{\Sif{s}{\Spar{(\Sseq{\Spause}{\Semit{o_2}})}{(\Sseq{\Semit{o_3}}{\Spause})}}{(\Sseq{\Spause}{\Semit{o_4}})}}}} \\
  &\CSS{}{\{s\}}{\{o_3\}}{1}
       {\Sseq{\Spause}{\Sseq{\Sseq{\Semit{o_1}}{\Spause}}{\Sif{s}{\Spar{(\Sseq{\state{\Spause}}{\Semit{o_2}})}{(\Sseq{\Semit{o_3}}{\state{\Spause}})}}{(\Sseq{\Spause}{\Semit{o_4}})}}}} \\
  &\CSS{}{\{s\}}{\{o_2\}}{0}
       {\Sseq{\Spause}{\Sseq{\Sseq{\Semit{o_1}}{\Spause}}{\Sif{s}{\Spar{(\Sseq{\Spause}{\Semit{o_2}})}{(\Sseq{\Semit{o_3}}{\Spause})}}{(\Sseq{\Spause}{\Semit{o_4}})}}}} \\
\end{align*}
In particular, one can directly observe two crucial invariants of the execution of Esterel statements:
\begin{enumerate*}[label=\textit{(\roman*)}]
  \item execution only stops on $\Spause$ and $\Sawimm s$ statements, which are the only statements that can carry a $\state{\;\cdot\;}$ tag;
  \item Consider the two branches of a presence test or of a sequential composition.
    Only one of them can be executed in a given reaction, unlike for parallel composition \Tpar{$p$}{$q$} that precisely triggers parallel threads.
\end{enumerate*}
These two properties are enforced by the very definition of states below.
%are not mere guesses but actual theorems (see for example Theorem~\ref{thm:microstep-invariants}). %, which are part of the Coq formalization.
%\marginlink[\#state]{Esterel.Definitions}{0mm}

\subsection{Formal definition of the state semantics}
%^^^^^^^^^^^^^^^^^^^^^^^^^^^^^^^^^^^^^^^^^^^^^^^^^^^^

Formally, states (written $\state p, \state q$) are defined by the following grammar:
\marginlink[\#state]{Esterel.Definitions}{0mm}
\begin{trivlist}
\item
  \begin{tabular}{@{\hspace{\offset}}l@{\qquad}r@{}r@{}l@{\sep}l@{\sep}l@{}}
    States & $\state p, \state q$ &
    $\quad := \quad$ & $\state 1$ \\
    &&& $\state{\,\Sawimm s\,}$ \\
    &&& $\Sif s {\state p} q$ & $\Sif s p {\state q}$ \\
    &&& $\Ssuspend s {\state p}$ \\
    &&& $\Strap{\state p}$ \\
    &&& $\Sshift{\state p}$ \\
    &&& $\Ssequence{\state p} q$ & $\Ssequence p {\state q}$ \\
    &&& $\Sloop{\state p}$ \\
    &&& $\Sparallel{\state p} q$ & $\Sparallel p {\state q}$ & $\Sparallel{\state p}{\state q}$ \\
    &&& $\Ssignaldecl s {\state p}$ \\
  \end{tabular}
\end{trivlist}
The only elementary states are the active $\Spause$ (\Tpause) and the active $\Sawimm s$ (\Tawimm{$s$}) statements, written $\state{\Spause}$ and $\state{\Sawimm s}$, and states propagate through all other compound statements of the language.
The decorated statements $\state 1$ and $\!\state{\,\Sawimm s\,}\!$ represent the points where execution should restart in the next instant, in other words, the aforementioned distributed program counter since execution always stops only on ``$1$'' or ``$\Sawimm s$'' statements in Kernel Esterel.
For example, the states of ``$\Ssequence{\Spause}{\Spause}$'' are ``$\Ssequence{\state{\Spause}}{\Spause}$'' and ``$\Ssequence{\Spause}{\state{\Spause}}$'', never ``$\Ssequence{\state{\Spause}}{\state{\Spause}}$'' as two states cannot be in sequence ; the only state of ``$\Ssequence{(\Sparallel{\Semit{s_1}}{\Spause})}{\Semit{s_2}}$'' is ``$\Ssequence{(\Sparallel{\Semit{s_1}}{\state{\Spause}})}{\Semit{s_2}}$'' as there is only one occurrence of $\Spause$ or $\Sawimm s$.
The ABROi program of Section~\ref{sec:kernel-statements} has four possible states: waiting on both signals~$A$ and~$B$, waiting on only one of them, waiting on the \id{halt} ($\Sloop{\Spause}$) statement.

The constructive state semantics is a rewording of the constructive semantics where we replace the derivative statement~$p'$ inside the rule $\CBS p E {E'} k {p'}$ by a derivative \emph{term}~$\term{p'}$.
  Such terms\marginlink[\#term]{Esterel.Definitions}{0mm} $\term p$, $\term q$ are either active states $\state p$, $\state q$ or inactive statements $p$, $q$.
  When a derivative term~$\term{p'}$ is a state~$\state{p'}$, it means that execution has paused within~$\state{p'}$ for the current reaction and can be resumed later.
  On the contrary, when this term~$\term{p'}$ is a statement~$p'$, it means that execution is over, either by normal termination or raising a trap.
Technically, the state semantics is still a rewriting semantics where the tags are moved around, but the core idea is that \emph{the underlying statement never changes}.
Execution of an instant starts with $p$ and yields a term~$\term p$.
From one reaction to the next, one resumes from $\state p$ but not from~$p$ itself, so that there is no implicit toplevel loop restarting~$p$.

Tagged statements (states) are called \emph{active}, and pauses are \emph{activated} in a reaction if they are executed and not killed until the end of the reaction.
Because Esterel has parallelism, there can be multiple active and activated statements inside a program: all active `$1$'s and `$\Sawimm s$'s are used to resume execution and some `$1$'s and `$\Sawimm s$'s are activated during the execution (possibly the same ones).
%A fundamental invariant of the Esterel semantics is that no two different statements on the same execution thread can be active at the same time (that is, in a \Tsequence{$p$}{$q$} statement, $p$ and $q$ cannot be  simultaneously active states,  and the same property holds for different branches of a $\Tif s p q$ statement.
%On the opposite, as , both $p$ and $q$ can be active at the same time.

There are two kinds of state semantics: the \emph{start} semantics for untagged statements used when we start execution and the \emph{resumption} semantics for active states used when we resume execution.
We distinguish them by subscripts.
\[
  \text{\emph{Start} semantics}  \quad \CSSs p E {E'} k {\term{p'}}
  \qquad \qquad
  \text{\emph{Resumption} semantics} \quad \CSSr{\state p} E {E'} k {\term{p'}}
\]
%% \begin{remark}
%%   These two semantics correspond respectively to the execution of the \emph{surface} and \emph{depth} of the corresponding circuit, as described in \cite{Berry:ConstructiveSemanticsOfPureEsterel}.
%% %  We could merge these two semantics into a single one which would take a term as input, but this seems to be of little practical interest.
%%   These notions are important in practice when handling loops: to get a working circuit, we need to duplicate the surface of the loop's body in a non-trivial way.
%% \end{remark}

To define these constructive state semantics, we need to extend
\MustName{} and \CanName{} to states.
An easy way to do this is to use an expansion function \expandName{}\marginlink[\#Sexpand]{Esterel.Definitions}{0mm} removing the already executed parts of a state, thus translating a state~$\state p$ into its
corresponding statement in the constructive semantics.
Therefore, we need to express ``what is left to execute'' inside a state.
Following~\cite{Berry:ConstructiveSemanticsOfPureEsterel}, we define the expansion (written \expandName{}) of a state~$\state p$ into a statement, erasing the already executed parts to return the parts that still need to be executed inside~$\state p$.
\begin{align*}
  \expand{\activatedpause} &:= \Snothing
  & \expand{\Sloop{\state p}} &:= \expand{\state p}; \Sloop p
  & \expand{\Sparallel{\state p}{\state q}} &:= \Sparallel{\expand{\state p}}{\expand{\state q}} \\
  \expand{\state{\Sawimm s}} &:= \Sawimm s
  & \expand{\Strap{\state p}} &:= \Strap{\expand{\state p}}
  & \expand{\Sparallel{\state p} q} &:= \Sparallel{\expand{\state p}}{\Snothing} \\
  \expand{\Sif s {\state p} q} &:= \expand{\state p}
  & \expand{\Sshift{\state p}} &:= \Sshift{\expand{\state p}}
  & \expand{\Sparallel p {\state q}} &:= \Sparallel{\Snothing}{\expand{\state q}} \\
  \expand{\Sif s p {\state q}} &:= \expand{\state q}
  & \expand{\Ssequence{\state p} q} &:= \expand{\state p}; q
  & \expand{\Ssignaldecl s {\state p}} &:= \Ssignaldecl s {\expand{\state p}} \\
  \expand{\Ssuspend s {\state p}} &:= \Ssequence{\Sawimm{\lnot s}}{\Ssuspend s {\expand{\state p}}}
  & \expand{\Ssequence p {\state q}} &:= \expand{\state q} \\
\end{align*}
Recalling that a term $\term p$ is either an active state $\state p$ or an inert statement $p$, this function is extended to terms $\term p$ by setting $\expand p = \Snothing$: in an inactive statement, nothing is pending execution.

\begin{remark}
  For the $\Spar{\state p} q$ and $\Spar p {\state q}$ cases, keeping the seemingly useless $\Spar \Snothing \_$ is actually technically useful to avoid a mismatch between the derivatives of both semantics.
  Indeed, this simplifies the equivalence between the constructive and state semantics: if we instead set $\expand{\Spar{\state p} q} = \expand{\state p}$, the parallel statement disappears and we would need to prove that $\Spar p \Snothing$ and $p$ are equivalent, which would require introducing bisimulation.
\end{remark}

With this expansion function \expandName, we let $\Must{\state p} E$ be $\Must{\expand{\state p}} E$ and
similarly for \CanName, so that we can directly reuse all the already
proved properties about \MustName{} and \CanName{} and have for free
the following equivalences:
\marginlink[\#sMust_sCan_expand_Must_Can]{Esterel.Semantics.StateMustCan}{5mm}
\[
  s ∈ \MustS{\state p} E \iff s ∈ \MustS{\expand{\state p}} E \hspace{3em}
  s ∈ \CanS m {\state p} E \iff s ∈ \CanS m {\expand{\state p}} E
\]

%% We also need to extend the~$\delta$ function on terms.
%% Keeping the same intuition of killing the derivative if the completion code is not~$1$, we get:
%% \[
%%   \deltafun k {\term p} :=
%%   \begin{cases}
%%     \term p & \text{if } k = 1 \\
%%     p & \text{if } k \neq 1 \\
%%   \end{cases}
%% \]
%% However, unlike the constructive semantics, we need to use it only for the $\Sparallel p q$ statement.
%% Indeed, parallel statements are the only place where an active state may need to be killed, whenever a branch pauses and the other one raises a trap.

%% \begin{remark}
%%   In Kernel Esterel, killing pausing branches within a term raising a trap can be done
%%   from two different places: either by a \id{trap} statement catching
%%   its exits and killing its body, or by a parallel statement in
%%   which one branch pauses and the other one raises an upper trap.  The
%%   first solution is the one originally presented
%%   in~\cite{Berry:ConstructiveSemanticsOfPureEsterel} but the one
%%   implemented in Esterel compilers is the second one because it gives better digital circuits.\footnote{More precisely, it allows for shorter critical paths as there is no need to reach the
%%   catching trap before starting to kill a parallel branch.}
%% %  Lemma~\ref{thm:inert-derivative-state} tells us that when $k=1$, the derivative term is actually a state, so that we can write $\deltafun 1 {\term p} = \state p$
%% \end{remark}

The rules of the state semantics are given in Figure~\ref{fig:start-rules} for the start semantics rules and in Figure~\ref{fig:resumption-rules} for the resumption semantics rules.

\begin{figure}
\begin{prooftree}
  \AXC{}
  \RightLabel{nothing}
  \UIC{$\CSSs \Snothing E {∅} 0 \Snothing$}
  \DP
  \hspace{\rulehspace}
  \AXC{}
  \RightLabel{pause}
  \UIC{$\CSSs \Spause E {∅} 1 \activatedpause$}
  \DP
  \hspace{\rulehspace}
  \AXC{$k \geq 2$}
  \RightLabel{exit}
  \UIC{$\CSSs{\Sexit k} E {∅}{k}{\Sexit k}$}
  \DP \\[\rulevspace]

  \AXC{}
  \RightLabel{emit}
  \UIC{$\CSSs{\Semit s} E {\SingletonEvent s} 0 {\Semit s}$}
  \DP \\[\rulevspace]

  \AXC{$s^+ ∈ E$}
  \RightLabel{awimm${}^+$}
  \UIC{$\CSSs{\Sawimm s} E {∅} 0 {\Sawimm s}$}
  \DP
  \hspace{\rulehspace}
  \AXC{$s^- ∈ E$}
  \RightLabel{awimm${}^-$}
  \UIC{$\CSSs{\Sawimm s} E {∅} 1 {\state{\Sawimm s}}$}
  \DP \\[\rulevspace]

  \AXC{$s^+ ∈ E$}
  \AXC{$\CSSs p E {E'} k {\term{p'}}$}
  \RightLabel{then}
  \BIC{$\CSSs{\Sif s p q} E {E'} k {\Sif s {\term{p'}} q}$}
  \DP
  \hspace{\rulehspace}
  \AXC{$s^- ∈ E$}
  \AXC{$\CSSs q E {E'} k {\term{q'}}$}
  \RightLabel{else}
  \BIC{$\CSSs{\Sif s p q} E {E'} k {\Sif s p {\term{q'}}}$}
  \DP \\[\rulevspace]

  \AXC{$\CSSs p E {E'} k {\term{p'}}$}
  \RightLabel{suspend}
  \UIC{$\CSSs{\Ssuspend s p} E {E'} k {\Ssuspend s {\term{p'}}}$}
  \DP \\[\rulevspace]

  \AXC{$k \neq 0$}
  \AXC{$\CSSs p E {E'} k {\term{p'}}$}
  \RightLabel{loop}
  \BIC{$\CSSs{\Sloop p} E {E'} k {\Sloop{\term{p'}}}$}
  \DP \\[\rulevspace]

  \AXC{$\CSSs p E {E'} k {\term{p'}}$}
  \RightLabel{trap}
  \UIC{$\CSSs{\Strap{\strut p}} E {E'}{↓k}{\Strap{\term{p'}}}$}
  \DP
  \hspace{\rulehspace}
  \AXC{$\CSSs p E {E'} k {\term{p'}}$}
  \RightLabel{shift}
  \UIC{$\CSSs{\Sshift p} E {E'}{\Kup k}{\Sshift{\term{p'}}}$}
  \DP \\[\rulevspace]

  \AXC{$k \neq 0$}
  \AXC{$\CSSs p E {E'} k {\term{p'}}$}
  \RightLabel{seq$_k$}
  \BIC{$\CSSs{\Ssequence p q} E {E'} k {\Ssequence{\term{p'}} q}$}
  \DP
  \hspace{\rulehspace}
  \AXC{$\CSSs p E {E_p'} 0 {p}$}
  \AXC{$\CSSs q E {E_q'} k {\term{q'}}$}
  \RightLabel{seq$_0$}
  \BIC{$\CSSs{\Ssequence p q} E {E_p' ∪ E_q'} k {\Ssequence p {\term{q'}}}$}
  \DP \\[\rulevspace]

  \AXC{$\CSSs p E {E_p'}{k_p}{\term{p'}}$}
  \AXC{$\CSSs q E {E_q'}{k_q}{\term{q'}}$}
  \RightLabel{parallel}
  \BIC{$\CSSs{\Sparallel p q} E {E_p' ∪ E_q'}{\max(k_p,k_q)}{\Sparallel{\term{p'}}{\term{q'}}}$}
  \DP \\[\rulevspace]

  \AXC{$s ∈ \Must p {\addEvent s {⊥} E}$}
  \AXC{$\CSSs p {\addEvent s + E}{E'} k {\term{p'}}$}
  \RightLabel{\CSigP}
  \BIC{$\CSSs{\Ssignaldecl s p} E {\restrictEvent{E'} s} k {\Ssignaldecl s {\term{p'}}}$}
  \DP
  \hfill
  \AXC{$s ∉ \Can + p {\addEvent s {⊥} E}$}
  \AXC{$\CSSs p {\addEvent s - E}{E'} k {\term{p'}}$}
  \RightLabel{\CSigM}
  \BIC{$\CSSs{\Ssignaldecl s p} E {\restrictEvent{E'} s} k {\Ssignaldecl s {\term{p'}}}$}
\end{prooftree}
\caption{Start rules of the state semantics.}
\label{fig:start-rules}
\end{figure}

\begin{figure}
\begin{prooftree}
  \AXC{}
  \RightLabel{pause}
  \UIC{$\CSSr \activatedpause E {∅} 0 \Spause$}
  \DP \\[\rulevspace]

  \AXC{$s^+ ∈ E$}
  \RightLabel{awimm${}^+$}
  \UIC{$\CSSr{\state{\Sawimm s}} E {∅} 0 {\Sawimm s}$}
  \DP
  \hspace{\rulehspace}
  \AXC{$s^- ∈ E$}
  \RightLabel{awimm${}^-$}
  \UIC{$\CSSr{\state{\Sawimm s}} E {∅} 1 {\state{\Sawimm s}}$}
  \DP \\[\rulevspace]

  \AXC{$\CSSr{\state p} E {E'} k {\term{p'}}$}
  \RightLabel{then}
  \UIC{$\CSSr{\Sif s {\state p} q} E {E'} k {\Sif s {\term{p'}} q}$}
  \DP
  \hspace{\rulehspace}
  \AXC{$\CSSr{\state q} E {E'} k {\term{q'}}$}
  \RightLabel{else}
  \UIC{$\CSSr{\Sif s p {\state q}} E {E'} k {\Sif s p {\term{q'}}}$}
  \DP \\[\rulevspace]

  \AXC{$s^+ ∈ E$}
  \RightLabel{suspend${}^+$}
  \UIC{$\CSSr{\Ssuspend s {\state p}} E {∅} 1 {\Ssuspend s {\state p}}$}
  \DP
  \hspace{\rulehspace}
  \AXC{$s^- ∈ E$}
  \AXC{$\CSSr{\state p} E {E'} k {\term{p'}}$}
  \RightLabel{suspend${}^-$}
  \BIC{$\CSSr{\Ssuspend s {\state p}} E {E'} k {\Ssuspend s {\term{p'}}}$}
  \DP \\[\rulevspace]

  \AXC{$\CSSr{\state p} E {E'} k {\term{p'}}$}
  \AXC{$k \neq 0$}
  \RightLabel{loop$_k$}
  \BIC{$\CSSr{\Sloop{\state p} \;} E {E'} k {\Sloop{\term{p'}}}$}
  \DP
  \hspace{\rulehspace}
  \AXC{$\CSSr{\state p} E {E'} 0 {p}$}
  \AXC{$\CSSs p E {E'} k {\term{p'}}$}
  \AXC{$k \neq 0$}
  \RightLabel{loop$_0$}
  \TIC{$\CSSr{\Sloop{\state p}\;} E {E'} k {\Sloop{\term{p'}}}$}
  \DP \\[\rulevspace]

  \AXC{$\CSSr{\state p} E {E'} k {\term{p'}}$}
  \RightLabel{trap}
  \UIC{$\CSSr{\Strap{\state p}} E {E'}{↓k}{\Strap{\term{p'}}}$}
  \DP
  \hspace{\rulehspace}
  \AXC{$\CSSr{\state p} E {E'} k {\term{p'}}$}
  \RightLabel{shift}
  \UIC{$\CSSr{\Sshift{\state p}} E {E'}{\Kup k}{\Sshift{\term{p'}}}$}
  \DP \\[\rulevspace]

  \AXC{$\CSSr{\state p} E {E'} k {\term{p'}}$}
  \AXC{$k \neq 0$}
  \RightLabel{seq$_k$}
  \BIC{$\CSSr{\Ssequence{\state p} q} E {E'} k {\Ssequence{\term{p'}} q}$}
  \DP
  \hfill
  \AXC{$\CSSr{\state p} E {E_p'} 0 p$}
  \AXC{$\CSSs q E {E_q'} k {\term{q'}}$}
  \RightLabel{seq$_0$}
  \BIC{$\CSSr{\Ssequence{\state p} q} E {E_p' ∪ E_q'} k {\Ssequence p {\term{q'}}}$}
  \DP
  \hfill
  \AXC{$\CSSr{\state q} E {E_q'} k {\term{q'}}$}
  \RightLabel{seq$_q$}
  \UIC{$\CSSr{\Ssequence p {\state q}} E {E_q'} k {\Ssequence p {\term{q'}}}$}
  \DP \\[\rulevspace]

  \AXC{$\CSSr{\state p} E {E_p'}{k_p}{\term{p'}}$}
  \AXC{$\CSSr{\state q} E {E_q'}{k_q}{\term{q'}}$}
  \RightLabel{both}
  \BIC{$\CSSr{\Sparallel{\state p}{\state q}} E {E_p' ∪ E_q'}{\max(k_p,k_q)}{\Sparallel{\term{p'}}{\term{q'}}}$}
  \DP \\[\rulevspace]

  \AXC{$\CSSr{\state p} E {E_p'}{k_p}{\term{p'}}$}
  \RightLabel{left}
  \UIC{$\CSSr{\Sparallel{\state p} q} E {E_p'}{k_p}{\Sparallel{\term{p'}}{ q}}$}
  \DP
  \hspace{\rulehspace}
  \AXC{$\CSSr{\state q} E {E_q'}{k_q}{\term{q'}}$}
  \RightLabel{right}
  \UIC{$\CSSr{\Sparallel p {\state q}} E {E_q'}{k_q}{\Sparallel p {\term{q'}}}$}
  \DP \\[\rulevspace]

  \AXC{$s ∈ \Must{\expand p} {\addEvent s {⊥} E}$}
  \AXC{$\CSSr{\state p} {\addEvent s + E}{E'} k {\term{p'}}$}
  \RightLabel{\CSigP}
  \BIC{$\CSSr{\Ssignaldecl s {\state p}} E {\restrictEvent{E'} s} k {\Ssignaldecl s {\term{p'}}}$}
  \DP
  \hfill
  \AXC{$s ∉ \Can + {\expand p} {\addEvent s {⊥} E}$}
  \AXC{$\CSSr{\state p} {\addEvent s - E}{E'} k {\term{p'}}$}
  \RightLabel{\CSigM}
  \BIC{$\CSSr{\Ssignaldecl s {\state p}} E {\restrictEvent{E'} s} k {\Ssignaldecl s {\term{p'}}}$}
\end{prooftree}
\caption{Resumption rules of the state semantics.}
\label{fig:resumption-rules}
\end{figure}

\begin{remark}
  \label{rmk:classical-state-semantics}
  The transformation done to turn the constructive semantics into the constructive state semantics can also be done on the logical semantics, yielding a logical state semantics as done in~\cite{Berry:ConstructiveSemanticsOfPureEsterel}.
  The only difference with the constructive version would be the local signal rules.
\end{remark}

We illustrate the execution of the ABROi program in the CSS semantics in Figure~\ref{fig:ABROi-CSS}.

\begin{figure}
  \[
  \begin{array}{r@{\;}c@{\;}l}
    \Ssequence{\state{\Spause}}{ABROi} & = &
    \Ssequence{\state{\Spause}}{\Sloop{
      (\Strap{
          \Sparallel{
            \big(\Ssequence{\Spause}{
              \Sloop{(\Sif{R}{\Sexit{2}}{\Spause})}}
            \big)}
                  {(\Ssequence{
                     (\Ssequence{
                        \Ssequence{
                          (\Sparallel{\strut\Sawimm{A}}
                                     {\Sawimm{B}})}
                          {\Semit{O}}}
                        {(\Sloop{\Spause})})
                      }{\Sexit 2})}}
    )}} \\
    & \CSSr{}{\{ B \}}{\emptyset} 1 {} &
        \Ssequence{\Spause}{
          \Sloop{
            (\Strap{
              \Sparallel{
                \big(\Ssequence{\state{\Spause}}{
                  \Sloop{(\Sif{R}{\Sexit{2}}{\Spause})}}
                \big)}
                        {(\Ssequence{
                            (\Ssequence{
                              \Ssequence{
                                (\Sparallel{\state{\Sawimm{A}}}
                                           {\Sawimm{B}})}
                                {\Semit{O}}}
                            {(\Sloop{\Spause})})
                          }{\Sexit 2})}}
            )}} \\
    & \CSSr{}{\{ A, B \}}{\{ O \}} 1 {} &
        \Ssequence{\Spause}{
          \Sloop{
            (\Strap{
              \Sparallel{
                \big(\Ssequence{\Spause}{
                  \Sloop{(\Sif{R}{\Sexit{2}}{\state{\Spause}})}}
                \big)}
                        {(\Ssequence{
                            (\Ssequence{
                              \Ssequence{
                                (\Sparallel{\Sawimm{A}}
                                           {\Sawimm{B}})}
                                {\Semit{O}}}
                            {(\Sloop{\state{\Spause}})})
                          }{\Sexit 2})}}
            )}} \\
    & \CSSr{}{\{ B \}}{\emptyset} 1 {} &
        \Ssequence{\Spause}{
          \Sloop{
            (\Strap{
              \Sparallel{
                \big(\Ssequence{\Spause}{
                  \Sloop{(\Sif{R}{\Sexit{2}}{\state{\Spause}})}}
                \big)}
                        {(\Ssequence{
                            (\Ssequence{
                              \Ssequence{
                                (\Sparallel{\Sawimm{A}}
                                           {\Sawimm{B}})}
                                {\Semit{O}}}
                            {(\Sloop{\state{\Spause}})})
                          }{\Sexit 2})}}
            )}} \\
    & \CSSr{}{\{ R \}}{\emptyset} 1 {} &
        \Ssequence{\Spause}{
          \Sloop{
            (\Strap{
              \Sparallel{
                \big(\Ssequence{\state{\Spause}}{
                  \Sloop{(\Sif{R}{\Sexit{2}}{\Spause})}}
                \big)}
                        {(\Ssequence{
                            (\Ssequence{
                              \Ssequence{
                                (\Sparallel{\state{\Sawimm{A}}}
                                           {\state{\Sawimm{B}}})}
                                {\Semit{O}}}
                            {(\Sloop{\Spause})})
                          }{\Sexit 2})}}
            )}} \\
    & \CSSr{}{\{ A, B, R \}}{\{ O \}} 1 {} &
        \Ssequence{\Spause}{
          \Sloop{
            (\Strap{
              \Sparallel{
                \big(\Ssequence{\state{\Spause}}{
                  \Sloop{(\Sif{R}{\Sexit{2}}{\Spause})}}
                \big)}
                        {(\Ssequence{
                            (\Ssequence{
                              \Ssequence{
                                (\Sparallel{\Sawimm{A}}
                                           {\Sawimm{B}})}
                                {\Semit{O}}}
                            {(\Sloop{\state{\Spause}})})
                          }{\Sexit 2})}}
            )}} \\
  \end{array}
  \]
  \caption{Execution of ABROi in the CSS.}
  \label{fig:ABROi-CSS}
\end{figure}

Finally, we can now prove that our initial idea of preserving the underlying statement actually works.
Let us define an erasing function \baseName{} that converts a term into its untagged underlying statement, which amounts to erasing active tags by turning all active ``$\state{\Spause}$''s and ``$\state{\Sawimm s}$''s into inactive ones.
Then, the following lemma holds:
\begin{lemma}[Invariance of the base statement]
  For all $p$, $E$, $E'$, $k$, $p'$, if $\CSSs p E {E'} k {\overline{p'}}$ then $p = \base{\term{p'}}$. \marginlink[\#sCSS_base]{Esterel.Semantics.CSS}{0mm} \\
  For all $\state p$, $E$, $E'$, $k$, $p'$, if $\CSSr{\state p} E {E'} k {\term{p'}}$ then $\base{\state p} = \base{\term{p'}}$. \marginlink[\#rCSS_base]{Esterel.Semantics.CSS}{0mm}
\end{lemma}
%% \begin{proof}
%%   By induction on $\CSSs p E {E'} k {\overline{p'}}$ or $\CSSr{\state p} E {E'} k {\term{p'}}$, with case distinction on~$k$ for the $\Sparallel p q$ case.
%% \end{proof}
Being a rewording of the constructive semantics, the state semantics enjoys the same properties, in particular the state start and resumption semantics are deterministic.
\marginlink[\#sCSS_deterministic]{Esterel.Semantics.CSS}{-5mm}

\begin{coqrmk}[The constructive state semantics in Coq]
  The functions $\expandName$ and $\baseName$ are written respectively \id{expand} and \id{base}.
Furthermore, in order to have a more self-contained definition of the state semantics, the Coq development defines the \MustName{} and \CanName{} functions on states from scratch.
Then, the equivalences $s ∈ \MustS{\state p} E \iff s ∈ \MustS{\expand{\state p}} E$ and $s ∈ \CanS m {\state p} E \iff s ∈ \CanS m {\expand{\state p}} E$ are proved rather than used as definitions.
From these equivalences, the properties of \MustName{} and \CanName{} on states are imported from the versions on statements.
\end{coqrmk}

\subsection{Equivalence between the constructive behavioral semantics and the constructive state semantics}
%^^^^^^^^^^^^^^^^^^^^^^^^^^^^^^^^^^^^^^^^^^^^^^^^^^^^^^^^^^^^^^^^^^^^^^^^^^^^^^^^^^^^^^^^^^^^^^^^^^^^^^^^^^^^^^^
\label{sec:equiv-constructive-state}

To compare the constructive semantics and the state one, we need the expansion function \expandName{} to relate the meaning of the derivatives of both semantics.
Then, we can state the equivalence between the constructive behavioral semantics and the constructive state semantics as follows:
\begin{theorem}[Equivalence between the constructive and state semantics]
  \label{thm:equivalence-constructive-state-semantics}
  \begin{align*}
    \text{For all } p, E, E', k, \term{p'}, \quad& \CSSs p E {E'} k \term{p'} \implies \CBS p E {E'} k {\expand{\term{p'}}}
    \marginlink[\#sCSS_CBS]{Esterel.Semantics.CBS_CSS}{0mm} \\
    \text{For all } \state p, E, E', k, \term{p'}, \quad& \CSSr{\state p} E {E'} k \term{p'} \implies \CBS{\expand{\state p}} E {E'} k {\expand{\term{p'}}}
    \marginlink[\#rCSS_CBS]{Esterel.Semantics.CBS_CSS}{0mm} \\
    \text{For all } p, E, E', k, p', \quad& \CBS p E {E'} k {p'} \implies \exists \term{p''}, \CSSs p E {E'} k \term{p''} \land p' = \expand{\term{p''}}
    \marginlink[\#CBS_sCSS]{Esterel.Semantics.CBS_CSS}{0mm} \\
    \text{For all } \state p, E, E', k, p', \quad& \CBS{\expand{\state p}} E {E'} k {p'} \implies \exists \term{p''}, \CSSr{\state p} E {E'} k \term{p''} \land p' = \expand{\term{p''}}
    \marginlink[\#CBS_rCSS]{Esterel.Semantics.CBS_CSS}{0mm} \\
  \end{align*}
\end{theorem}

\begin{proof}
  The proofs of these four implications all go by induction on the derivation tree in the source semantics.
  They are straightforward, they simply require to match the source semantics rules by case distinction on~$p'$ and~$k$.% to insert the adequate $\deltafun k {p'}$ or through the fact that adding other $\deltafun k \_$ inside a $\deltafun k {p'}$ does not change the statement as the function $\deltafun k \_$ is idempotent.
%  Indeed, if $k=1$, all these $\delta$ vanish and if $k \neq 1$, $\deltafun k p = \base{p}$.
\end{proof}

\begin{coqrmk}
  The actual Coq statement and proof are slightly different because we need to insert $\deltafun k \_$ and use the fact that $\deltafun k {\deltafun k p} = \deltafun k p$.
\end{coqrmk}

\subsection{Going beyond a state}
%^^^^^^^^^^^^^^^^^^^^^^^^^^^^^^^^

In the circuit translation of~\cite{Berry:ConstructiveSemanticsOfPureEsterel}, the state semantics provides a perfect match between states $\state p$ and active circuit states at the end of a clock cycle, that is, those in which at least one register is 1.
We can extend this correspondence to untagged statements $p$ and inactive circuits.
In particular, the registers in the circuit exactly correspond to ``$\Spause$'' (\Tpause) and ``$\Sawimm s$'' (\Tawimm{$s$}) statements.\footnote{Technically, the circuit translation of ``$\Ssuspend s p$'' also contains a register (the $\Sawimm{\lnot s}$ appearing in the CBS rule). We do not add it explicitly to the state semantics because it is only active whenever $p$ is.}
Nevertheless, the computation of semantics states is abstract: we go from one state to the next, without explaining precisely how information flows between these states.
Furthermore, the state semantics is still using recursively the \MustName{} and \CanName{} functions to compute the statuses of local signals.
This is very different from the computation of wire values in digital circuits: instead of performing a two-step recursive evaluation using \MustName/\CanName{}, a digital circuit propagates electrical values in a forward way, following causality chains.

To solve these issues, we now introduce the \emph{microstep semantics}, detailing how computation works \emph{within} an instant as a \emph{sequence} of microsteps, implementing a progressive evolution from one state to the next.
This is the semantic equivalent of the propagation of electrical fronts in a digital circuit, which is internally asynchronous but where each step and the full reaction have predicable time and result.
%With such a semantics, we hope to get a precise explanation between the Esterel semantics and the circuit semantics.

%%% Local Variables: ***
%%% tex-main-file: "EsterelCoq.tex"  ***
%%% ispell-local-dictionary: "american-w_accents"  ***
%%% End: ***

\section{Microstep Semantics
\texorpdfstring{\marginlink{Esterel.Semantics.Microstep}{-2mm}}{}}
%%%%%%%%%%%%%%%%%%%%%%%%%%%%%
\label{sec:microstep-semantics}

With the microstep semantics, our goal is to have an Esterel semantics as precise as the generated circuit, meaning that its resolution should be (roughly) at the gate level, and that it should avoid the mutually recursive definitions of \MustName{} and \CanName.
Furthermore, we want it to be defined directly on the source code, so that the equivalence with the previous semantics is easier to express.
Because of that, like the state semantics, the microstep semantics tags Esterel programs and execution consists in moving tags.
Like digital circuits in which electricity propagates within a clock cycle, we restrict the microstep semantics to a single reaction, meaning that there will be no rule to move past an electrical register
%(either an explicit one as in $\Spause$, or an implicit one as in $\Sawimm s$ or $\Ssuspend s p$),
and that execution is performed starting either from the boot of the generated circuit (in the starting instant) or from active ``$\Spause$'' (\Tpause) or ``$\Sawimm s$'' (\Tawimm{$s$}) statements (in resumption instants) to either termination or reaching an activated ``$\Spause$'' or ``$\Sawimm s$'' statement.
This matches the Register Transfer Level (RTL) evaluation in circuit design.

%% \begin{remark}
%%   An attempt to formalize a microstep semantics was done in \emph{Compiling Esterel}~\cite{CompilingEsterel} but still using the \CanName{} function which completely disappears here.
%%   The microstep semantics of~\cite{Quartz-microstep} is actually targeting single rules of the logical behavioral semantics, as a way to explain a higher-level predicate semantics working at the granularity of whole reactions.
%% \end{remark}

\subsection{Microstate Definition}
%^^^^^^^^^^^^^^^^^^^^^^^^^^^^^^^^^

\subsubsection{Relation with the circuit translation}
The core idea in designing the tags propagated by the microsteps is taken from a simplification of the circuit translation of~\cite{Berry:ConstructiveSemanticsOfPureEsterel}.
Actually, the microstep semantics can be seen as a simulation of this circuit translation on top of the Kernel Esterel statements.
As Kernel Esterel does not feature data, the key point is to transfer control.
In the circuit translation, this is done through a structural translation respecting the circuit interface depicted in Figure~\ref{fig:circuit-interface}.
\begin{figure}[ht]
  \centering
  \includegraphics[width=0.2\linewidth]{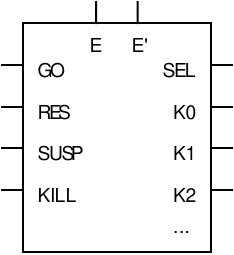}
  \caption{Circuit Translation Interface, taken from~\cite{Berry:ConstructiveSemanticsOfPureEsterel}.}
  \label{fig:circuit-interface}
\end{figure}

This interface has:
\begin{itemize}
  \item input and output events for signals ($E$ and $E'$ respectively);
  \item four input wires (Go, Res, Susp, Kill);
  \item a finite number of output wires (Sel and the K$_i$).
\end{itemize}
All are single wires except for $E$ and $E'$ which are bundles of wires (one per signal).
Their meaning is as follows:
\begin{itemize}
  \item Go is set to start the execution of the (currently inactive) circuit;
  \item Res (for Resume) is set to continue execution of the (currently active) circuit;
  \item Susp (for Suspend) is set to freeze the circuit registers for the current instant, but the combinational logic is still executed, in particular signals may be emitted;
  \item Kill is used to reset the circuit into its inactive state, that is, reset all registers to 0;
  \item Sel represents whether the circuit is active or not, that is, whether one of its registers is 1;
  \item the K$_i$ represent the completion code in one-hot encoding: completion code $k$ is represented by wire K$_k$ being one and all other wires K$_i$ ($i \neq k$) being zero.
\end{itemize}

Looking at Figure~\ref{fig:circuit-translation}, we can see that among these wires of the circuit translation of~\cite{Berry:ConstructiveSemanticsOfPureEsterel}, only Go and Res are used to propagate control within an instant while the Susp and Kill wires are only used as state register inputs, thus never to emit a signal or compute a completion code.
  Therefore, the effect of the latter only becomes visible at the next instant.
  In our semantics, the same effect is obtained by the use of auxiliary functions (see Section~\ref{sec:-microstep-intuition}), which makes it possible to omit the Susp and Kill wires for simplicity.
  The Sel wire denotes which parts of a circuit are active, implementing the $\state{\,\cdot\,}$ tag of the state semantics.
  We must keep it because it is used in the control propagation of $\Ssuspend s {\state p}$ (see Figure~\ref{fig:circuit-translation-suspend}, p. \pageref{fig:circuit-translation-suspend}).
  Since its value does not evolve during an instant, we implement its effect at the start of each instant, independently of the circuit's control or signal inputs.
  Thus, the reader may think of Sel as a constant input and we keep it separate from the input and output colors (see Figure~\ref{fig:def-microstate}), which evolve within an instant.
  %Nevertheless, we must keep it because it is used in control propagation, for instance for $\Ssuspend s p$.

Remember that a statement can only return a single integer completion code; here it is technically simpler to encode this integer by the one-hot encoding represented by the  K$_i$.
This is a classical technique in circuits.

\subsubsection{Microstate Interface}
From the previous analysis, we design a \emph{microstate interface}\marginlink[\#microstate]{Esterel.Semantics.Microstate}{-2mm} to annotate the source command $p$: the notation is $\Sin \, p \, \Sout$, defined as follows.

Inputs $\Sin$ consist in the Go and Res wires (as Scott Booleans that can take values $\bot$, $+$, or $-$); signals are read from an (external) event $E$ but no output event $E'$ is produced (instead it will be read from the evaluated microstate); outputs $\Sout$ consist in the completion code represented as either $\black k$ with $k$ an integer, meaning that wire K$_k$ is 1 and all other wires are 0, or $\white K$ with $K$ a finite set of integers, meaning that wires K$_i$ with $i \in K$ are $\bot$ and all other wires are 0 (and in particular, no wire is 1).
This completion code representation directly denotes the 1-hot encoding, thus avoiding maintaining it as an invariant or dealing with impossible cases.

In fact, $\black k$ represents the fact that the~$k$-th wire is~$+$ (\MustName{} propagation) whereas $\white K$ represents a finite set~$K$ of completion code wires whose value is still unknown, the others being~$-$ (\CanName{} propagation). %and no wire is known to be~$1$. \todoLR{Put this remark in a better place}
The information about Sel is assumed to be already known so we do not need to consider the $\bot$ case and can use a simple Boolean.
We use the $\white{K \setminus i}$ notation to denote setting wire~$i$ to 0 inside $\white K$ instead of the more standard but cumbersome $\white{K \setminus \{i\}}$.
We abbreviate $val = b$ as $val^b$, both when used as an equality or as an assignment.

These $\Sin$ and $\Sout$ annotations are called \emph{colors} and are recursively inserted in the bodies and branches of compound microstates.
For example, $\Ssuspend s p$ becomes $\Sin (\Ssuspend s {(\Sin p \Sout)}) \Sout$.
We usually drop parentheses when ambiguity can be resolved from the context.
The formal definition of microstates is given in Figure~\ref{fig:def-microstate}.
\begin{figure}[htb]
  \[
  \begin{array}{r@{\quad}c@{\quad}l@{\qquad}l}
    sel   & := & + \quad\mid\quad -    & \text{\textbf{sel Boolean}} \\[1em]
    cb    & := & \bot \quad\mid\quad + \quad\mid\quad - & \text{\textbf{constructive Boolean}} \\[1em]
    \Sin  & := & \{ Go: cb \;; Res: cb \}  & \text{\textbf{input color}} \\[1em]
    \Sout & := &                      & \text{\textbf{output color}} \\
          & \mid & \black k \hfill k \in \N & \text{$k$-th wire is $1$} \\
          & \mid & \white K \hfill K \subset_{\mathit{f\!in}} \N & \text{wires outside $K$ are $0$} \\[1em]
    mstate& := & sel \; \Sin \; mstmt \; \Sout & \text{\textbf{microstate}} \\[1em]
    mstmt & :=   & \Snothing \\
          & \mid & \Spause \\
          & \mid & \Semit s \hfill s \text{ signal} \\
          & \mid & \Sawimm s \hfill s \text{ signal} \\
          & \mid & \Sif s {mstate}{mstate} \qquad s \text{ signal} \\
          & \mid & \Ssuspend s {mstate} \hfill s \text{ signal} \\
          & \mid & \Strap{mstate} \\
          & \mid & \Sexit k \hfill k \geq 2 \\
          & \mid & \Sshift{mstate} \\
          & \mid & \Ssequence{mstate}{mstate} \\
          & \mid & \Sloop{mstate} \\
          & \mid & \Sparallel{mstate}{mstate} \\
          & \mid & \Ssignaldecl s {mstate} \hfill s \text{ signal} \\
  \end{array}
  \]
  \caption{Definition of microstates.}
  \label{fig:def-microstate}
\end{figure}

\subsubsection{Notations}
\label{sec:input-color-notations}
We color inputs as follows to have a visual representation of the various values of Go, Res and Sel (whether the square is filled or not).
\begin{center}
  \begin{tabular}{l@{\;\;}|@{\;\;}ccc@{\quad}l}
    \hline
    Colors   & Sel & Go  & Res & Intuition \\
    \hline
    \quad$\Sin$   & $?$ & $?$ & $?$ & We know nothing \\
    \quad$\SWin$  & $-$ & $?$ & $?$ & We know only Sel$⁻$ \\
    \quad$\go$    & $-$ & $+$ & $?$ & Starting an inactive statement \\
    \quad$\nogo$  & $-$ & $-$ & $?$ & Keeping inactive an inactive statement \\
    \quad$\SBin$  & $+$ & $?$ & $?$ & We know only Sel$⁺$ \\
    \quad$\res$   & $+$ & $?$ & $+$ & Resuming an active statement \\
    \quad$\nores$ & $+$ & $?$ & $-$ & Not executing an active statement \\
    \hline
  \end{tabular}
\end{center}
When we have Sel$^-$ (that is, in the $\SWin$, $\go$, and $\nogo$ cases), Theorem~\ref{thm:micro-ignores-Res} will formally justify that the value of Res is irrelevant.
Similarly, an invariant of the circuit translation, called  \texttt{input\_invariant}$(Sel, \Sin)$\marginlink[\#input_invariant]{Esterel.Proofs.ValidColoring}{0mm} and defined as $Go(\Sin)⁺ \implies Sel⁻$, ensures that Sel and Go can never be true simultaneously.\footnote{This requires careful handling of loops to maintain. For instance, the circuit translation of $\Sloop{\Spause}$ will have both Sel and Go wires set to 1 after the first cycle.}
Intuitively, only inert statements can be started, active states are resumed instead.
Thus, the ``$?$''s for Go in the last three cases are actually either $\bot$ or $-$ but they cannot be $+$.
Nevertheless, even for active microstate (having Sel$^+$) for which Go can only become Go$^-$, having Go$^-$ still provide more information than Go$^\bot$ and may be necessary for some microstates.
For instance, in the microstate built from $\Sparallel{\state p} q$, even though the overall statement has Sel$^+$, we need Go$^-$ to know that $q$ is not executed and that its completion code will become $\whiteAll$.
%For example, the starting output color of $\state{\Spause}$ is $\white{\{0,1\}}$ and it may become $\white{\{0\}}$ (that is, its K1 wire is set to 0) only when we have Go$^-$ because Go is directly fed into the K1 wire.

%% Finally, we can notice that the value of Susp does not seem to be taken into account in this table.
%% Indeed, Theorem~\ref{thm:micro-ignores-Susp} shows that the value of Susp is irrelevant for microsteps, except for computing the values of other Susp wires!
%% What is the point of keeping this wire, then?
%% Well, aside from historical reason, it allows for computing the derivative term for suspension without making reference to the input event~$E$.
%% Nevertheless, it might make more sense to remove it entirely, or even to introduce an intermediate semantics explicitating the computation of the wires that are removed in the microstep semantics, namely Sel, Susp, and Kill.

We define two colored input notations as convenient shorthands for several cases:
\begin{itemize}
  \item $\gores$: either $\go$ or $\res$, that is, $Go⁺ \lor (Sel⁺ \land Res⁺)$ \\
    The intuition is that the microstate is executed, either started fresh (if inactive) or resumed (if active).
    The \texttt{input\_invariant} invariant expresses that $Go⁺$ entails $Sel^-$, so that we do not need to add $Sel^-$ in the left disjunct.
  \item $\nogores$: either $\nogo$ or $\nores$, that is, $Go^- \land (Sel^- \lor Res^-)$ \\
    The intuition is that the microstate is not executed: neither started nor resumed.
    This definition is a bit more restrictive than $(Go^- \land Sel^-) \lor (Res^- \land Sel^+)$, which is the direct translation of $\nogo$ or $\nores$, because we also require $Go^-$ in the $Sel^+$ case.
    Nevertheless, in that case we know that $Go$ can only become $Go^-$ and having $Go^-$ may be useful as discussed previously.
\end{itemize}

We let $\inC p$, $\outC p$, $Go(p)$, and $Res(p)$ denote respectively the input color, output color, Go wire, and Res wire of a microstate~$p$; and let $Go(\Sin)$ and $Res(\Sin)$ denote the Go and Res wires of an input color $\Sin$.
To change the input color of a microstate~$p$ into $\Sin$, we write $\changeI{\Sin} p$.
We extend this notation to single wires: $\changeI{Go⁺} p$, $\changeI{Res(q)} p$.
%Similarly, for output colors, we write $\changeO{out} p$. \todoLR{Do we need it?}
For a signal~$s$, we define $E(s)$ to be the value of~$s$ in~$E$ if $s \in \dom(E)$ and $\bot$ otherwise.

\subsection{Intuition of the Microstep Semantics}
%^^^^^^^^^^^^^^^^^^^^^^^^^^^^^^^^^^^^^^^^^^^^^^^^
\label{sec:-microstep-intuition}

A synchronous digital circuit works by propagating electric potentials through wires and logical gates until all wires have a stable~$0$ or~$1$ value.
Similarly, our microstep semantics works by propagating control information through Esterel statements until reaching some maximal state of information, that is, a state where all Go and Res wires are no longer~$\bot$ and where all completion codes are either $\whiteAll$ (no execution, thus no completion code) or $\black k$ for some integer~$k$ (executed and returning completion code~$k$).
After the microstep execution chain completes, the resulting state $\state{p'}$ can be read from this maximal state of information.
This state is unique as we prove the microstep semantics to be confluent (Theorem~\ref{thm:microstep-confluence}, p.~\pageref{thm:microstep-confluence}).
To measure information and account for its propagation, we use a Scott ordering that we define now.

\paragraph*{Scott ordering on microstates}
\label{sec:scott-order-microstate}
The Scott order $p \le q$\marginlink[\#Mle]{Esterel.Semantics.Microstate}{-2mm} between two microstates~$p$ and~$q$ intuitively means that~$p$ contains no more information (is not more defined) than~$q$.
It is defined recursively, by requiring that~$p$ and~$q$ have the same base statement and Sel values and that any input or output color in~$p$ is no larger (contains no more information) than the corresponding one in~$q$.
On input colors, we use a component-wise ordering of the Scott Booleans Go and Res (that is, $\bot < +$ and $\bot < -$).

The definition of Scott ordering on output colors stems from the intuition that $\black k$ (resp., $\white K$) represents the fact that \MustName{} is $k$ (resp., \CanName{} is $K$).
\begin{center}
  \begin{tabular}[t]{l@{$\;\le\;$}l@{\quad:=\quad}l@{\qquad}l}
    $\white K$ & $\white L$ & $L \subseteq K$ & with more information, more wires are set to $0$ \\
    $\white K$ & $\black k$ & $k \in K$ & the wire to $1$ must be among the ones that are not $0$ \\
    $\black k$ & $\black l$ & $k = l$ & only a single wire can be $1$ \\
    $\black k$ & $\white K$ & False & $\black k$ cannot contain less information than $\white K$ \\
  \end{tabular}
\end{center}
We require the Sel values to be the same to enforce that $p \le q$ entails that~$p$ and~$q$ have the same starting term, in particular the same base statement.
Indeed, on microstates from different terms or different instants, the information available in both has no reason to be compatible and the comparison is not meaningful.

We define a \emph{total} microstate as a microstate where the information is maximal, that is, no input color still contains $\bot$ and all output colors are either $\whiteAll$ or $\black k$ for some~$k$.

\paragraph*{Connection between states and microstates}
The microstep semantics can be related to the state semantics as follows:
starting from a term $\term p$ (that is, an inactive statement~$p$ or an active state~$\state p$) we build a starting microstate representing the state of computation at the beginning of the instant.
Then, after setting its input color, we let the microstep semantics make this microstate evolve until reaching maximal information.
Finally, we convert the final microstate back into a term.
This is shown on Figure~\ref{fig:state-microstate}, where dashed arrows are function application and full arrows represent the microstep semantics.

\begin{figure}[htbp]
  \begin{center}
    \begin{tikzpicture}[every node/.style={minimum width=2em}]
      \node (state) {$\state p$};
      \node[right =5em of state] (micro1) {$\SBin p \Sout$};
      \node[right =4em of micro1] (microstart1) {$\res p \Sout$};
      \node[right =5em of microstart1] (microend1) {$\res p' \black k$};
      \node[right =5em of microend1] (term1) {$\term{p'}$};
      % Arrows
      \draw[thick, dashed, ->] (state) -- (micro1) node[midway, above] {{\small \fromstateName}};
      \draw[thick, dashed, ->] (micro1) -- (microstart1) node[midway, above] {{\small \texttt{set\_gr}}};
      \draw[thick, ->] (microstart1) -- (microend1) node[midway, below] {$E$};
      \path (microend1.north west) ++ (0,-0.1) node {$*$};
      \draw[thick, dashed, ->] (microend1) -- (term1) node[midway, above] {{\small \totermName}};
      % inert command
      \node[above =2em of state] (stmt) {$p$};
      \node[above =2em of micro1] (micro2) {$\SWin p \Sout$};
      \node (microstart2) at (microstart1|-stmt) {$\go p \Sout$};
      \node (microend2) at (microend1|-stmt) {$\go p' \black k$};
      \node (term2) at (term1|-stmt) {$\term{p'}$};
      % Arrows
      \draw[thick, dashed, ->] (stmt) -- (micro2) node[midway, above] {{\small \fromstmtName}};
      \draw[thick, dashed, ->] (micro2) -- (microstart2) node[midway, above] {{\small \texttt{set\_gr}}};
      \draw[thick, ->] (microstart2) --(microend2) node[midway, below] {$E$};
      \path (microend2.north west) ++ (0,-0.1) node {$*$};
      \draw[thick, dashed, ->] (microend2) -- (term2) node[midway, above] {{\small \totermName}};
    \end{tikzpicture}
  \end{center}
  \caption{Links between the state and microstep semantics.}
  \label{fig:state-microstate}
\end{figure}

The difference between \fromstmtName{} and \fromstateName\marginlink[\#from_stmt]{Esterel.Semantics.Microstate}{-2mm} only lies in how Sel is computed: in \fromstmtName, it is always Sel$^-$ whereas in \fromstateName{} the active parts of a state~$\state p$ have Sel$^+$.
In particular, the overall structure is the same (the one of $p$) and the output colors are also identical.
These output colors represent the statically-computed possible completion codes for each statement and sub-statement and thus, give the number of completion wires in the final circuit.
Notice that the \fromstmtName{} and \fromstateName{} functions perform the computation of the Sel wire.

The \totermName{} function converts a microstate back into a term.
It is well-defined only for microstates with maximal information, since otherwise we do not know which parts are active or not.
For example, $\go \Tpause \black 1$ should become $\state{\Spause}$ whereas $\nogo \Tpause \whiteAll$ should become $\Spause$, so that we cannot translate $\Sin \Tpause \Sout$ in a meaningful way.
This function also performs the effect of the Susp and Kill wires of~\cite{Berry:ConstructiveSemanticsOfPureEsterel} by either freezing subterms (that is, reverting back to the state at the beginning of the instant) or killing a pausing parallel branch (that is, replacing it with its base statement) whenever the other branch raises a trap.

\begin{remark}
As a microstate does not contain information about suspension, the \totermName{} function must take the input event~$E$ in order to decide whether a suspension $\Ssuspend s p$ is triggered or not.
A different solution could be to add a \emph{Susp} component to the control part of input colors, to track the value of the \emph{Susp} wire in the circuit translation of~\cite{Berry:ConstructiveSemanticsOfPureEsterel}.
\end{remark}

Unlike the previous semantics, the microstep semantics does not produce a completion code~$k$ nor an output event~$E'$.
  Although this would be technically possible, it is of little practical interest because microsteps are too fine-grained.
  Indeed, in most microstep rules, the output event and the completion code are undefined, that is, no signal would be emitted and the completion code would be $\bot$.
Furthermore, the completion code~$k$ and the output event~$E'$ can be read off the final (and total) microstate.
The completion code is simply given by the output color and the output event is computed by a function \toeventName\marginlink[\#to_event]{Esterel.Semantics.Microstate}{0mm} by scanning the \Temit statements of the microstate:
\label{def:to_event}
\begin{enumerate}[label=\textit{(\roman*)}]
  \item if there is an executed emitter of~$s$, that is, a subterm $\go \Semit s \black 0$,~$s$ is emitted so we set $s$'s status to $+$;
  \item if all emitters of~$s$ are not executed, that is, they are all $\nogo \Semit s \whiteAll$,~$s$ is not emitted and its status is $-$;
  \item otherwise, some emitter of~$s$ is neither executed nor not executed, that is, it is neither $\go \Semit s \black 0$ nor $\nogo \Semit s \whiteAll$ (hence it is $\Sin \Semit s \white{\{ 0 \}}$), and the status of~$s$ is~$\bot$.
\end{enumerate}
To distinguish these three cases, we can ignore input colors and only look at output colors: $\black 0$ means case
\begin{enumerate*}[label=\textit{(\roman*)}]
  \item ; $\whiteAll$ means case
  \item ; $\white{\{ 0 \}}$ means case
  \item.
\end{enumerate*}
The \toeventName{} function can be applied to any microstate, not only final and total ones.
In particular, it is used to compute the status of local signals (see Section~\ref{sec:micro-local-signal}), thus replacing the \MustName{} and \CanName{} functions.
%    For $\Sin \Semit s \black 0$ and $\Sin \Semit s \whiteAll$, we can actually prove that the input color is more precise, either $\go$ or $\nogo$, but we do not need it.
%with a non-total output color (that is, some $\Sin \Semit s \white{\{ 0 \}}$)

\subsection{Formal Definition of the Microstep Semantics
\texorpdfstring{\marginlink{Esterel.Semantics.Microstep}{-2mm}}{}}
%^^^^^^^^^^^^^^^^^^^^^^^^^^^^^^^^^^^^^^^^^^^^^^^^^^^^^^^

This (long) section details the 51 rules of the microstep semantics and their meaning.
We give in Figure~\ref{fig:circuit-translation} the circuit translation of~\cite{Berry:ConstructiveSemanticsOfPureEsterel}, as it is a strong inspiration of the microstep rules presented in this section.

\begin{figure}
  \begin{minipage}{0.6\textwidth}
    \begin{subfigure}{\textwidth}
      \includegraphics[width=\textwidth]{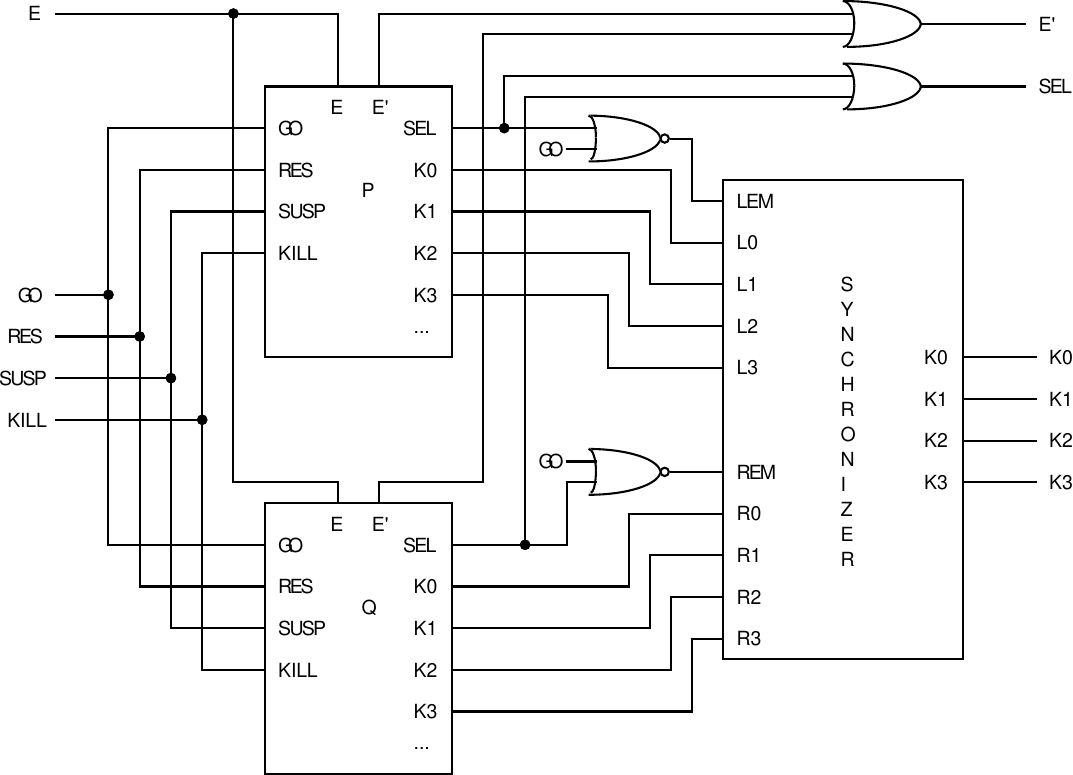}
      \caption{The $\Spar p q$ statement (without synchronizer).}
    \end{subfigure} \\[1em]

    \begin{subfigure}{\textwidth}
      \includegraphics[width=\textwidth]{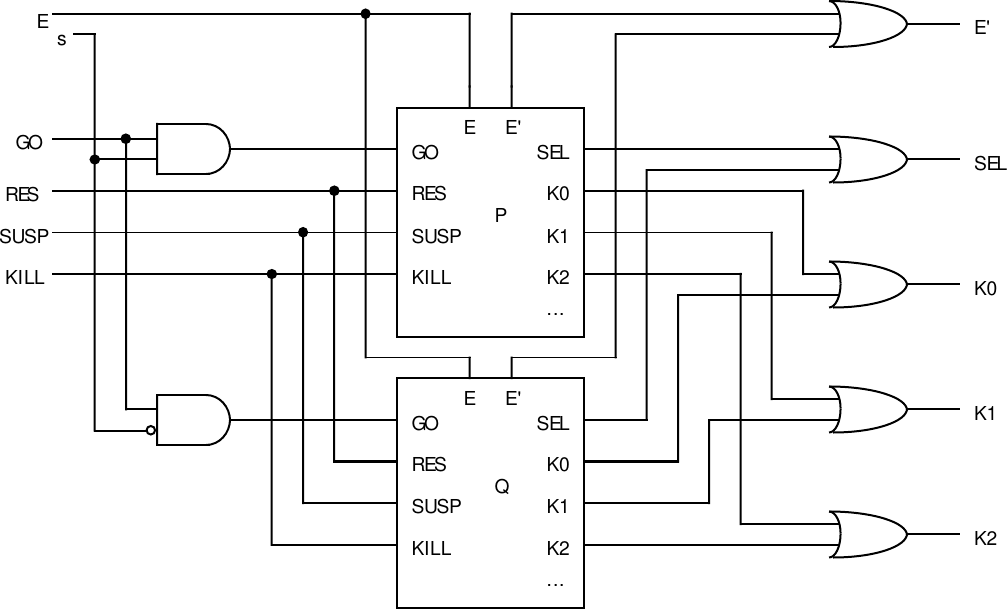}
      \caption{The $\Sif s p q$ statement.}
    \end{subfigure} \\[1em]

    \begin{subfigure}{\textwidth}
      \includegraphics[width=\textwidth]{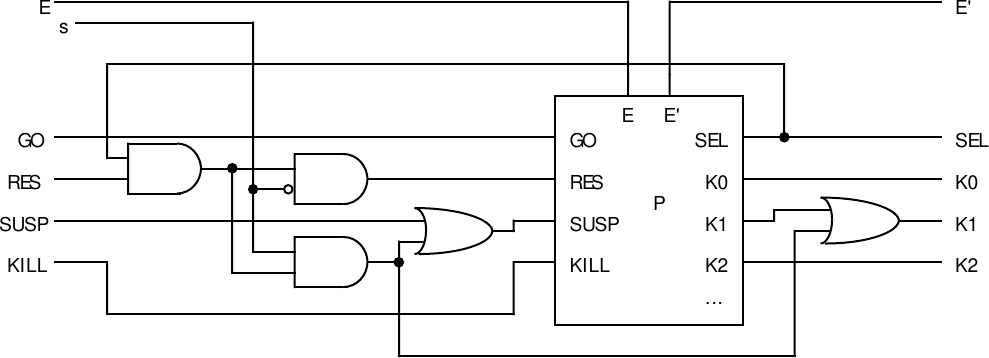}
      \caption{The $\Ssuspend s p$ statement.}
      \label{fig:circuit-translation-suspend}
    \end{subfigure}
  \end{minipage}
  \hfill
  \begin{minipage}{0.35\textwidth}
    \begin{subfigure}{0.45\textwidth}
      \includegraphics[width=\textwidth]{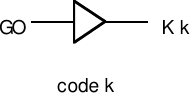}
      \caption{The $\Sexit k$ statement.}
    \end{subfigure}
    \hfill
    \begin{subfigure}{0.47\textwidth}
      \includegraphics[width=\textwidth]{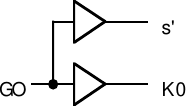}
      \caption{The $\Semit s$ statement.}
    \end{subfigure} \\[1em]

    \begin{subfigure}{0.9\textwidth}
      \includegraphics[width=\textwidth]{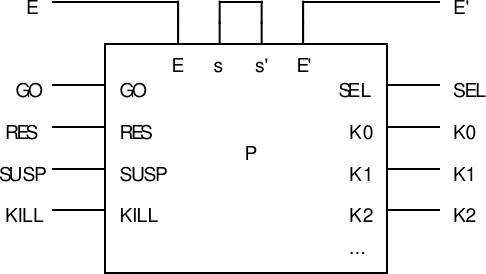}
      \caption{The $\Ssignaldecl s p$ statement.}
    \end{subfigure} \\[1em]

    \begin{subfigure}{0.9\textwidth}
      \includegraphics[width=\textwidth]{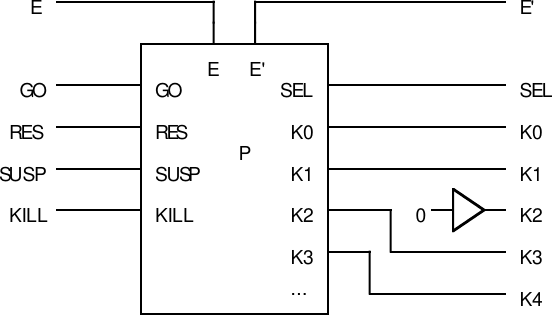}
      \caption{The $\Sshift p$ statement.}
    \end{subfigure} \\[1em]

    \begin{subfigure}{\textwidth}
      \includegraphics[width=\textwidth]{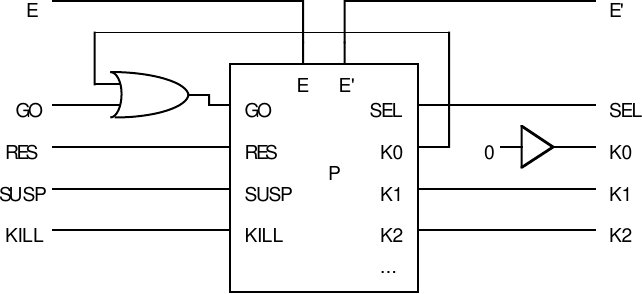}
      \caption{The $\Sloop p$ statement.}
    \end{subfigure} \\[1em]

    \begin{subfigure}{\textwidth}
      \includegraphics[width=\textwidth]{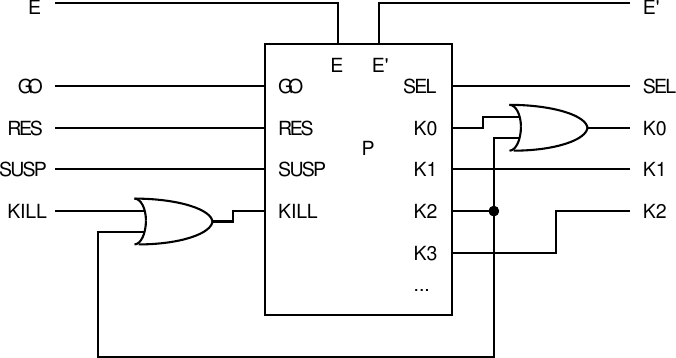}
      \caption{The $\Strap p$ statement.}
    \end{subfigure}
  \end{minipage}
  \vspace{1em}

  \begin{subfigure}{0.5\textwidth}
    \includegraphics[width=\textwidth]{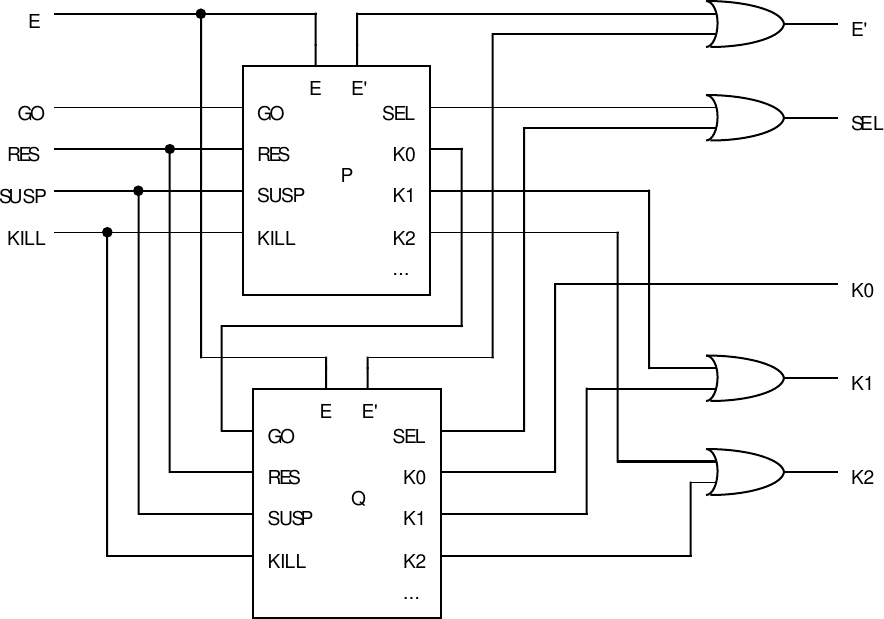}
    \caption{The $\Sseq s p$ statement.}
  \end{subfigure}
  \hfill
  \begin{subfigure}[t]{0.45\textwidth}
    \includegraphics[width=\textwidth]{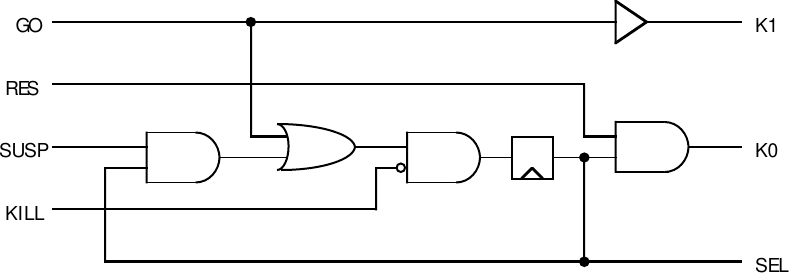}
    \caption{The $\Spause$ statement.}
  \end{subfigure}

  %% \begin{subfigure}{0.5\textwidth}
  %%   %\includegraphics[width=\textwidth]{awimmpict}
  %%   \textcolor{red}{Create the figure for $\Sawimm s$}
  %%   \caption{The $\Sawimm s$ statement.}
  %% \end{subfigure}
  \caption{The Esterel circuit translation, as given by~\cite{Berry:ConstructiveSemanticsOfPureEsterel}.
    The missing $\Sawimm s$ statement can be recovered as a macro: $\Strap{\Sloop{(\Sif s {\Sexit 2} \Spause)}}$.}
  \label{fig:circuit-translation}
\end{figure}
%% The full semantics contains a lot of rules (over 50).
%% As such, the complete definition is given in Appendix~\ref{sec:appendix-microstate-semantics} and we only give here their general construction on atomic and compound microstates, as well as a few illustrating examples.
%% We think that providing such intuition and examples will be enough to convey the underlying key ideas.
%% \todoLR[inline]{\normalsize Should we remove the arbitrary parts of $\Sin p \Sout$ inside microstep rules in order to make them more readable?
%%   Example: $\micro{E}{\Sin (\Sif s p q)}{\Sin (\Sif s {\changeI{Res(\Sin)} p} q)}$ instead of $\micro{E}{\Sin (\Sif s p q) \Sout}{\Sin (\Sif s {\changeI{Res(\Sin)} p} q) \Sout}$.}

\subsubsection{Elementary microstates}

The statements ``\Snothing'', ``\Spause'', ``\Semit{$s$}'', ``\Sexit{$k$}'', and ``\Sawimm{$s$}'' only have input and output colors, and no sub-statement.
Thus, their microstep rules simply compute their output color depending on their input color and, for $\Sawimm s$, the status of~$s$.

\paragraph{The $\Snothing$, $\Sexit k$, $\Semit s$ rules}
These statements being instantaneous, we know that Sel is always false and that in the circuit translation only the Go wire is useful (and actually represented).
Thus, the only possible input colors are $\SWin$, $\go$ and $\nogo$.
If the statement is started (that is, the input is $\go$), we execute the statement.
If the statement is not started (that is, the input is $\nogo$), we do not execute it.
Otherwise, the statement is neither executed nor not executed (that is, the input is $\SWin$, or in the circuit translation the Go wire is $\bot$) and no rule applies.
%The cases of \Temit{$s$} and \Texit{$k$} are identical.

\begin{prooftree}
  \AXC{}
  \RightLabel{nothingGo}
  \UIC{$\micro{E}{\go \Snothing \white{\{ 0 \}}}{\go \Snothing \black 0}$}
  \DP
  \hspace{\rulehspace}
  \AXC{}
  \RightLabel{nothingNoGo}
  \UIC{$\micro{E}{\nogo \Snothing \white{\{ 0 \}}}{\nogo \Snothing \whiteAll}$}
  \DP \\[\rulevspace]

  \AXC{}
  \RightLabel{exitGo}
  \UIC{$\micro{E}{\go \Sexit n \white{\{ n+2 \}}}{\go \Sexit n \black{n+2}}$}
  \DP
  \hspace{\rulehspace}
  \AXC{}
  \RightLabel{exitNoGo}
  \UIC{$\micro{E}{\nogo \Sexit n \white{\{ n+2 \}}}{\nogo \Sexit n \whiteAll}$}
  \DP \\[\rulevspace]

  \AXC{}
  \RightLabel{emitGo}
  \UIC{$\micro{E}{\go \Semit s \white{\{ 0 \}}}{\go \Semit s \black 0}$}
  \DP
  \hspace{\rulehspace}
  \AXC{}
  \RightLabel{emitNoGo}
  \UIC{$\micro{E}{\nogo \Semit s \white{\{ 0 \}}}{\nogo \Semit s \whiteAll}$}
\end{prooftree}

\paragraph{The $\Spause$ rules}
If the $\Sin \Spause \Sout$ statement is started, that is, $Go(\Sin)^+$, then the output should be $\black 1$. In order to trigger this rule only when it increases information, we add the precondition $\Sout < \black 1$.
\begin{prooftree}
  \AXC{$Go(\Sin)^+$}
  \AXC{$\Sout < \black 1$}
  \RightLabel{pauseGo}
  \BIC{$\micro{E}{\Sin \Tpause \Sout}{\Sin \Tpause \black 1}$}
\end{prooftree}
\textbf{Remark:}
%\todoLR{Maybe we should remove this remark?}
Thanks to \texttt{input\_invariant} (that is, $Go(\Sin)⁺ \implies Sel⁻$), $Go(\Sin)^+$ actually means $\go$.
We use the former version because the circuit is (rightfully) not checking Sel. % and because the introduction of loops will break this invariant forcing us to use $Go(\Sin)^+$ anyway.

If the \Tpause{} statement is not started, then its completion code cannot be 1:
\begin{prooftree}
  \AXC{$Go(\Sin)^-$}
  \AXC{$1 \in K$}
  \RightLabel{pauseNoGo}
  \BIC{$\micro{E}{\Sin \Tpause \white K}{\Sin \Tpause \white{K \setminus 1}}$}
\end{prooftree}

Similarly, for Res, we have:
\begin{prooftree}
  \AXC{$\Sout < \black 0$}
  \RightLabel{pauseRes}
  \UIC{$\micro{E}{\res \Tpause \Sout}{\res \Tpause \black 0}$}
  \DP
  \hspace{\rulehspace}
  \AXC{$Res(\Sin)^- \lor Sel^-$}
  \AXC{$0 \in K$}
  \RightLabel{pauseNoRes}
  \BIC{$\micro{E}{\Sin \Tpause \white K}{\Sin \Tpause \white{K \setminus 0}}$}
\end{prooftree}
%% (Recall that $\res$ means $Sel^+ \land Res^+$, which entails that Go cannot be + thanks to the invariant \textt{input\_invariant}.)

\paragraph{The $\Sawimm s$ rules}

If the statement is executed, depending on the presence or absence of the signal~$s$, the completion code of $\Sawimm s$ will be either 0 or 1:
\begin{prooftree}
  \AXC{$\Sin = \gores$}
  \AXC{$s⁻ \in E$}
  \AXC{$\Sout < \black 1$}
  \RightLabel{awimmM}
  \TIC{$\micro{E}{\Sin \Sawimm s \Sout}{\Sin \Sawimm s \black 1}$}
  \DP
  \hspace{\rulehspace}
  \AXC{$\Sin = \gores$}
  \AXC{$s⁺ \in E$}
  \AXC{$\Sout < \black 0$}
  \RightLabel{awimmP}
  \TIC{$\micro{E}{\Sin \Sawimm s \Sout}{\Sin \Sawimm s \black 0}$}
\end{prooftree}

When the statement is not executed or the signal is present (resp. absent), we know that the rule for the absence (resp. presence) cannot be triggered, hence we can remove the corresponding completion code from the output color:
\begin{prooftree}
  \AXC{$\Sin = \nogores \lor s⁻ \in E$}
  \AXC{$0 \in K$}
  \RightLabel{awimmNoGo0}
  \BIC{$\micro{E}{\Sin \Sawimm s \white K}{\Sin \Sawimm s \white{K \setminus 0}}$}
  \DP
  \hfill
  \AXC{$\Sin = \nogores \lor s⁺ \in E$}
  \AXC{$1 \in K$}
  \RightLabel{awimmNoGo1}
  \BIC{$\micro{E}{\Sin \Sawimm s \white K}{\Sin \Sawimm s \white{K \setminus 1}}$}
\end{prooftree}
These two rules permit to make progress using the status of~$s$, even before knowing whether the statement is executed or not.

\subsubsection{Compound microstates}

Compound microstates contain input and output colors and some sub-microstates.
Their microstep rules are split into three parts\footnote{This distinction is actually violated by a single case: in a sequence $\Tsequence p q$, starting~$q$ is decided depending on the output color of~$p$, so that this rule is neither a start rule (which it should intuitively be) nor an end rule.}:
\begin{itemize}
  \item a \emph{start} part which propagates control: it converts the input color of the compound microstate into input colors for its sub-microstates;
  \item a \emph{context} part where sub-microstates evolve on their own;
  \item an \emph{end} part where the output colors of sub-microstates are combined into an output color for the compound microstate.
\end{itemize}

In SOS-style semantics, the context rules simply express that execution can happen inside a subpart: if $x$ executes into $x'$, then any term containing $x$ can execute into the same term where $x$ is replaced by $x'$.
They have the same role here, except in the $\Ssignaldecl s p$ construction where they additionally update the value of the local signal~$s$.

The level of detail we want to reach is roughly the gate level, but not always. More precisely, we want to express the functional dependency of each wire on the values of other wires, without being tied to a particular implementation.
For instance, in $\Spar p q$, the output color is the maximum of the output colors of $p$ and $q$, but we do not want to commit to a given implementation of this maximum.
%For instance, in $\Ssuspend s p$, the Susp wire of~$p$ is computed from the Susp, Sel, Res wires and the status of~$s$, but not necessarily by two AND gates and an OR one, as the original circuit implementation does~\cite{Berry:ConstructiveSemanticsOfPureEsterel}.
This will permit to try other implementations without changing the specification.

\begin{figure}
  \centering
  \includegraphics{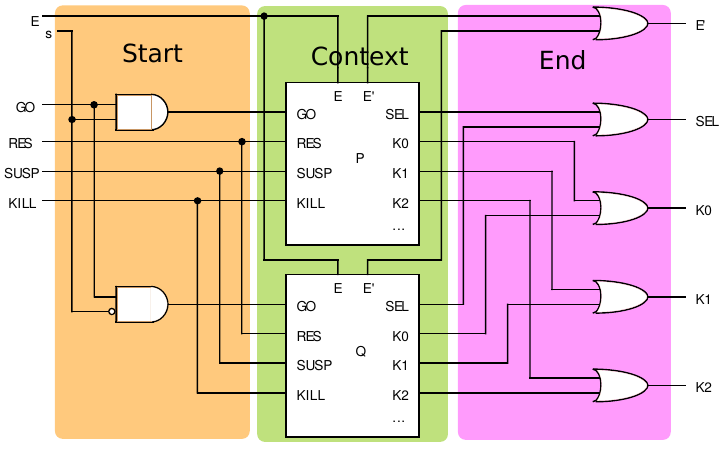}
  \caption{The start, context and end delimitation in the circuit translation of the $\Sif s p q$ statement.}
  \label{fig:if-zones}
\end{figure}

\begin{coqrmk}
  In the Coq formalization, some input rules are split into two: one for Go and another for Res.
  This is merely for convenience: in this case, each input rule only transmits a single input wire (Go or Res), making proofs easier to follow.
  On paper, it seems unnecessary to add more rules, so we keep a single input rule.
  This is the case for rules trapI, shiftI, seqIL, parIL, parIR, signalI.
\end{coqrmk}

\paragraph{The $\Sif s p q$ rules}
The distinction between start, context and end rules is very clear here, see Figure~\ref{fig:if-zones}.
The start part contains two AND gates (one for~$p$ and one for~$q$), and the end part contains a bunch of OR gates for the K$_i$.
(Remember that we ignore the computation of Sel.)
The start rules for~$p$ merely propagate the Res wire from $\Sif s p q$ to~$p$:
\begin{prooftree}
  \AXC{$Res(p) < Res(\Sin)$}
  \RightLabel{ifResL}
  \UIC{$\micro{E}{\Sin (\Sif s p q) \Sout}{\Sin (\Sif s {\changeI{Res(\Sin)} p} q) \Sout}$}
  %% \DP
  %% \hfill
  %% \AXC{$Susp(p) < Susp(\Sin)$}
  %% \UIC{$\micro{E}{\Sin (\Sif s p q) \Sout}{\Sin (\Sif s {\changeI{Susp(\Sin)} p} q) \Sout}$}
\end{prooftree}
The use of preconditions containing $<$ ensures that a rule can only be triggered if it increases information inside the microstate.

The statement~$p$ is started ($Go(p) = +$) if the overall statement is started ($Go(\Sin (\Sif s p q) \Sout) = +$) and~$s$ is present ($E(s)=+$).
More generally, $Go(p)$ is the conjunction of $Go(\Sin (\Sif s p q) \Sout) = Go(\Sin)$ and $E(s)$.
\begin{prooftree}
  \AXC{$Go(p) < Go(\Sin) \land E(s)$}
  \RightLabel{ifGoL}
  \UIC{$\micro{E}{\Sin (\Sif s p q) \Sout}{\Sin (\Sif s {\changeI{Go^{Go(\Sin) \land E(s)}} p} q) \Sout}$}
\end{prooftree}
%The rules for Res and Susp merely propagate wire values.
The start rules for~$q$ are identical, except that we negate the value of~$s$ in the rule for Go:
\begin{prooftree}
  \AXC{$Go(q) < Go(\Sin) \land \lnot E(s)$}
  \RightLabel{ifGoR}
  \UIC{$\micro{E}{\Sin (\Sif s p q) \Sout}{\Sin (\Sif s p {\changeI{Go^{Go(\Sin) \land \lnot E(s)}} q}) \Sout}$}
  \DP
  \hspace{\rulehspace}
  \AXC{$Res(q) < Res(\Sin)$}
  \RightLabel{ifResR}
  \UIC{$\micro{E}{\Sin (\Sif s p q) \Sout}{\Sin (\Sif s p {\changeI{Res(\Sin)} q}) \Sout}$}
\end{prooftree}

The two context rules permit executing either~$p$ or~$q$ ; the end rule combines the completion codes of~$p$ and~$q$ by taking their union.
\begin{prooftree}
  \AXC{$\micro E p {p'}$}
  \RightLabel{ifCL}
  \UIC{$\micro E {\Sin (\Sif s p q) \Sout}{\Sin (\Sif s {p'} q) \Sout}$}
  \DP
  \hspace{\rulehspace}
  \AXC{$\micro E q {q'}$}
  \RightLabel{ifCR}
  \UIC{$\micro E {\Sin (\Sif s p q) \Sout}{\Sin (\Sif s p {q'}) \Sout}$}
  \DP \\[\rulevspace]

  \AXC{$\Sout < \outC p \cup \outC q$}
  \RightLabel{ifE}
  \UIC{$\micro{E}{\Sin (\Sif s p q) \Sout}{\Sin (\Sif s p q) (\outC p \cup \outC q)}$}
\end{prooftree}

\paragraph{The $\Ssuspend s p$ rules}

When started, the $\Ssuspend s p$ statement starts its sub-statement~$p$.
Thus, Go is simply transmitted to the sub-statement.
When resumed, the $\Ssuspend s p$ statement resumes its sub-statement~$p$ only when~$s$ is absent.
In order not to resume an inactive statement, $Sel(p)$ is also checked.
\begin{prooftree}
  \AXC{$Go(p) < Go(\Sin)$}
  \RightLabel{suspendGo}
  \UIC{$\micro{E}{\Sin \Ssuspend s p \Sout}{\Sin \Ssuspend s {(\changeI{Go(\Sin)} p)} \Sout}$}
  \DP %\\[\rulevspace]
  \hspace{\rulehspace}
  \AXC{$Res(p) < \overbrace{Res(\Sin) \land Sel(p) \land \lnot E(s)}^{res}$}
  \RightLabel{suspendRes}
  \UIC{$\micro{E}{\Sin (\Ssuspend s p) \Sout}{\Sin \Ssuspend s {(\changeI{Res^{res}} p)} \Sout}$}
  %% \DP
  %% \hspace{\rulehspace}
  %% \AXC{$Susp(p) < \overbrace{Susp(\Sin) \lor (Res(\Sin) \land Sel(p) \land E(s))}^{susp}$}
  %% \UIC{$\micro{E}{\Sin (\Ssuspend s p) \Sout}{\Sin \Ssuspend s {(\changeI{Susp^{susp}} p)} \Sout}$}
\end{prooftree}

The context rule executes~$p$:
\begin{prooftree}
  \AXC{$\micro E p {p'}$}
  \RightLabel{suspendC}
  \UIC{$\micro E {\Sin (\Ssuspend s p) \Sout}{\Sin (\Ssuspend s {p'}) \Sout}$}
\end{prooftree}

For the end rule, we first define an auxiliary function $\SuspNow{bo}{\Sout}$ to compute the completion code:
\[
  \SuspNow{bo}{\Sout} =
  \begin{cases}
    \black 1                   & \text{if } bo = + \\
    \Sout                      & \text{if } bo = - \\
    \white{\{ 1 \}} \cup \Sout & \text{if } bo = \bot \\
  \end{cases}
\]
  %% SuspNow bo out :=
  %%   match bo with
  %%     | - => out
  %%     | + => if out = $\whiteAll$ then $\black 1$
  %%            else if out = $\black k$ then $\black{\max(1, k)}$
  %%            else $\white{1 \cup K \setminus 0}$
  %%     | $\bot$ => if out = $\black 0$ then $\white{\{ 0, 1 \}}$
  %%                 else $\white{\{ 1 \}} \cup$ out
Essentially, $bo$ represent suspension of the sub-statement~$p$: if $bo$ is $+$,~$p$ is suspended and the completion code is 1;
%it replaces 0 with 1 and preserves higher completion codes (with weak suspend, the microstate would still run and may raise a trap).
if $bo$ is $-$,~$p$ is executed normally;
if $bo$ is $\bot$, it adds the possibility of returning 1 to the completion code of~$p$.
The suspension of~$p$ only happens when~$p$ should have been resumed, that is, $Res(\Sin) \land Sel(p) = +$, but~$s$ was present, that is, $E(s) = +$.
\begin{prooftree}
  \AXC{$\Sout < \SuspNow{\overbrace{Res(\Sin) \land Sel(p) \land E(s)}^{susp}}{\outC p}$}
  \RightLabel{suspendE}
  \UIC{$\micro E {\Sin (\Ssuspend s p) \Sout}{\Sin (\Ssuspend s {p}) \SuspNow{susp}{\outC p}}$}
\end{prooftree}
For simplicity, we introduce a particular case of this rule, where $bo$ is set to~$\bot$:
\begin{prooftree}
  \AXC{$\Sout < \SuspNow{\bot}{\outC p}$}
  \RightLabel{suspendEKO}
  \UIC{$\micro E {\Sin (\Ssuspend s p) \Sout}{\Sin (\Ssuspend s {p}) \SuspNow{\bot}{\outC p}}$}
\end{prooftree}
Without this rule, we cannot ignore the value of~$s$ in~$E$, so that compatibility with ordering on events (last property of Theorem~\ref{thm:microstep-properties}) is lost.
More importantly, the context rule for $\Ssignaldecl s p$ becomes false.
%With this rule, it is always possible to ignore the value of~$s$ and consider it to be~$\bot$.

\paragraph{The $\Strap p$ and $\Sshift p$ rules}
%\paragraph{Rules for \texorpdfstring{$\Strap p$}{\{ p \}} and \texorpdfstring{$\Sshift p$}{shift p}}

The $\Strap p$ and $\Sshift p$ statements only have an effect on the completion code of~$p$, thus the input color is simply transmitted.
\begin{prooftree}
  \AXC{$in(p) < \Sin$}
  \RightLabel{trapI}
  \UIC{$\micro{E}{\Sin \Strap p \Sout}{\Sin \Strap{\changeI \Sin p} \Sout}$}
  \DP
  \hfill
  \AXC{$\micro{E} p {p'}$}
  \RightLabel{trapC}
  \UIC{$\micro{E}{\Sin \Strap{\strut p} \Sout}{\Sin \Strap{p'} \Sout}$}
  \DP
  \hfill
  \AXC{$\Sout < \Kdown{\outC p}$}
  \RightLabel{trapE}
  \UIC{$\micro{E}{\Sin \Strap p \Sout}{\Sin \Strap p \Kdown{\outC p}}$}
  \DP \\[\rulevspace]

  \AXC{$in(p) < \Sin$}
  \RightLabel{shiftI}
  \UIC{$\micro{E}{\Sin (\Sshift p) \Sout}{\Sin (\Sshift{\changeI \Sin p}) \Sout}$}
  \DP
  \hfill
  \AXC{$\micro{E} p {p'}$}
  \RightLabel{shiftC}
  \UIC{$\micro{E}{\Sin (\Sshift p) \Sout}{\Sin (\Sshift{p'}) \Sout}$}
  \DP
  \hfill
  \AXC{$\Sout < \Kup{\outC p}$}
  \RightLabel{shiftE}
  \UIC{$\micro{E}{\Sin (\Sshift p) \Sout}{\Sin (\Sshift p) \Kup{\outC p}}$}
\end{prooftree}

\paragraph{The $\Tsequence p q$ rules}

In the start rules, Res is forwarded to both~$p$ and~$q$ whereas Go is transmitted only to~$p$.
\begin{prooftree}
  \AXC{$in(p) < \Sin$}
  \RightLabel{seqIL}
  \UIC{$\micro E {\Sin (\Ssequence p q) \Sout}{\Sin (\Ssequence {(\changeI \Sin p)} q) \Sout}$}
  %% \DP \\[\rulevspace]
  \DP
  \hspace{\rulehspace}
  \AXC{$Res(q) < Res(\Sin)$}
  \RightLabel{seqResR}
  \UIC{$\micro E {\Sin (\Ssequence p q) \Sout}{\Sin (\Ssequence p {\changeI{Res(\Sin)} q}) \Sout}$}
  %% \DP
  %% \hspace{\rulehspace}
  %% \AXC{$Susp(q) < Susp(\Sin)$}
  %% \UIC{$\micro E {\Sin (\Ssequence p q) \Sout}{\Sin (\Ssequence p {\changeI{Susp(\Sin)} q}) \Sout}$}
\end{prooftree}

The sub-statement~$q$ is started depending on the completion code of~$p$, that is, the value of $Go(q)$ is set to $+$ or $-$ respectively when $\outC p$ is $\black 0$ or cannot become $\black 0$.
\begin{prooftree}
  \AXC{$\outC p = \black 0$}
  \AXC{$Go(q) = \bot$}
  \RightLabel{seqGoR}
  \BIC{$\micro{E}{\Sin (\Ssequence p q) \Sout}{\Sin (\Ssequence p {\changeI{Go⁺} q}) \Sout}$}
  \DP
  \hspace{\rulehspace}
  \AXC{$\outC p \not\le \black 0$}
  \AXC{$Go(q) = \bot$}
  \RightLabel{seqNoGoR}
  \BIC{$\micro{E}{\Sin (\Ssequence p q) \Sout}{\Sin (\Ssequence p {\changeI{Go⁻} q}) \Sout}$}
\end{prooftree}
In the second rule, we know that~$q$ cannot be started as soon as the completion code of~$p$ cannot be~$0$, regardless of other information we might know about $\outC p$.
Hence, we use $\outC p \not\le \black 0$.

As for all statements, the context rules permit execution of sub-statements.
\begin{prooftree}
  \AXC{$\micro E p {p'}$}
  \RightLabel{seqCL}
  \UIC{$\micro E {\Sin (\Ssequence p q) \Sout}{\Sin (\Ssequence{p'} q) \Sout}$}
  \DP
  \hspace{\rulehspace}
  \AXC{$\micro E q {q'}$}
  \RightLabel{seqCR}
  \UIC{$\micro E {\Sin (\Ssequence p q) \Sout}{\Sin (\Ssequence p {q'}) \Sout}$}
\end{prooftree}

For the end rule, we remove~$0$ from the possible completion codes of~$p$, written $\outC p \setminus 0$, before taking the union with the ones from~$q$.
More precisely, if $\outC p$ is $\white K$, we return $\white{K \setminus 0}$, if $\outC p$ is $\black 0$, we return $\whiteAll$, and if $\outC p$ is $\black k$ with $k\neq 0$, we return it unchanged.
\begin{prooftree}
  \AXC{$\Sout < (\outC p \setminus 0) \cup \outC q$}
  \RightLabel{seqE}
  \UIC{$\micro{E}{\Sin (\Sif s p q) \Sout}{\Sin (\Sif s p q) (\outC p \setminus 0) \cup \outC q)}$}
\end{prooftree}

\begin{coqrmk}
  In the Coq development, the operation $\Sout \setminus 0$ is written \texttt{SEQrestrict}$(\Sout)$.
\end{coqrmk}

\paragraph{The $\Sloop p$ rules}

A loop $\Sloop p$ starts its body either when the loop itself starts or when the body finishes an iteration, that is, when $\outC p = \black 0$.
Getting inspiration from the circuit translation, we use an OR gate between the Go wire of $\Sloop p$ and the 0-th component of the output color of~$p$, written $\outC p[0]$.
Resumption is simply propagated.
The end rule propagates the output color of~$p$ after forcing the 0-th component to false.
\begin{prooftree}
  \AXC{$Go(p) < Go(\Sin) \lor \outC p[0]$}
  \RightLabel{loopGo}
  \UIC{$\micro{E}{\Sin \Sloop p \Sout}{\Sin \Sloop{\changeI{Go(\Sin) \lor \outC p[0]} p} \Sout}$}
  \DP
  \hspace{\rulehspace}
  \AXC{$Res(p) < Res(\Sin)$}
  \RightLabel{loopRes}
  \UIC{$\micro{E}{\Sin \Sloop p \Sout}{\Sin \Sloop{\changeI{Res(\Sin)} p} \Sout}$}
  \DP \\[\rulevspace]

  \AXC{$\micro E p p'$}
  \RightLabel{loopC}
  \UIC{$\micro{E}{\Sin \Sloop p \Sout}{\Sin \Sloop{p'} \Sout}$}
  \DP
  \hspace{\rulehspace}
  \AXC{$\Sout < \outC p \setminus 0$}
  \RightLabel{loopE}
  \UIC{$\micro{E}{\Sin \Sloop p \Sout}{\Sin \Sloop p \; (\outC p \setminus 0)}$}
\end{prooftree}

\paragraph{The \texorpdfstring{$\Sparallel p q$}{p || q} rules}
Input and context rules simply propagate control.
\begin{prooftree}
  \AXC{$in(p) < \Sin$}
  \RightLabel{parIL}
  \UIC{$\micro E {\Sin (\Sparallel p q) \Sout}{\Sin (\Sparallel{\changeI \Sin p} q) \Sout}$}
  \DP
  \hspace{\rulehspace}
  \AXC{$in(q) < \Sin$}
  \RightLabel{parIR}
  \UIC{$\micro E {\Sin (\Sparallel p q) \Sout}{\Sin (\Sparallel p {\changeI \Sin q}) \Sout}$}
  \DP \\[\rulevspace]

  \AXC{$\micro E p {p'}$}
  \RightLabel{parCL}
  \UIC{$\micro E {\Sin (\Sparallel p q) \Sout}{\Sin (\Sparallel{p'} q) \Sout}$}
  \DP
  \hspace{\rulehspace}
  \AXC{$\micro E q {q'}$}
  \RightLabel{parCR}
  \UIC{$\micro E {\Sin (\Sparallel p q) \Sout}{\Sin (\Sparallel p {q'}) \Sout}$}
\end{prooftree}

For the end rule, the synchronizer performs a max of the completion codes of~$p$ and~$q$, except when one of them is $\whiteAll$, in which case it returns the other one since this corresponds to the case when one branch is dead and the other one is alive.
\begin{prooftree}
  \AXC{$\Sout < \overbrace{\synchronize{Sel(p)}{Sel(q)}{\outC p}{\outC q}}^{synch}$}
  \RightLabel{parE}
  \UIC{$\micro E {\Sin (\Sparallel p q) \Sout}{\Sin (\Sparallel p q) \mathop{synch}}$}
\end{prooftree}
with
\[
  \synchronize{Sel(p)}{Sel(q)}{\Sout_p}{\Sout_q} =
  \begin{cases}
    \MAX{\Sout_p}{\Sout_q} & \text{when \quad} Sel(p) = Sel(q) \\
    \Sout_p & \text{when \quad} Sel(q) = + \text{\; and\; } Sel(q) = - \\
    \Sout_q & \text{when \quad} Sel(p) = - \text{\; and\; } Sel(q) = + \\
  \end{cases}
\]
and
\[
  \begin{array}{l@{\quad=\quad}l}
    \MAX{\black{k}}{\black{k'}} & \black{\max(k, k')} \\
    \MAX{\white K}{\black k}    & \white{\MAX{K}{\{k\}}} \\
    \MAX{\black k}{\white K}    & \white{\MAX{\{k\}}{K}} \\
    \MAX{\white K}{\white{K'}}  & \white{\MAX{K}{K'}} \\
  \end{array}
\]

\paragraph{The $\Ssignaldecl s p$ rules}
\label{sec:micro-local-signal}
The input and output colors are simply propagated by the start and end rules.
\begin{prooftree}
  \AXC{$\inC p < \Sin$}
  \RightLabel{signalI}
  \UIC{$\micro E {\Sin \Ssignaldecl s p \Sout}{\Ssignaldecl s {(\changeI \Sin \Sin p)} \Sout}$}
  \DP
  \hspace{\rulehspace}
  \AXC{$\Sout < \outC p$}
  \RightLabel{signalE}
  \UIC{$\micro E {\Sin (\Ssignaldecl s p) \Sout}{\Sin (\Ssignaldecl s p) \outC p}$}
\end{prooftree}

For the context rule, we first check the emitters of~$s$ inside~$p$ to deduce the status of~$s$.
In the circuit, this only amounts to wire propagation and a big OR gate to combine all emitters into the signal status.
Here, we use the \toeventName{} function (see the end of Section~\ref{def:to_event}) which scans a microstate and identifies emitters and whether they are executed, not executed or still pending.
\begin{prooftree}
  \AXC{$b := (\toevent{p}{ s^\bot*E})(s)$}
  \AXC{$\micro{\addEvent s b E} p {p'}$}
  \RightLabel{signalC}
  \BIC{$\micro E {\Sin (\Ssignaldecl s p) \Sout}{\Sin (\Ssignaldecl s {p'}) \Sout}$}
\end{prooftree}

\subsubsection{ABROi in the microstep semantics}

\newcommand\marked[1]{\underline{#1}}
As the microstep semantics requires a lot more steps than previous semantics, for the sake of space, we will consider only the first instant (where B is present), which still requires 46 microsteps!
We combine several microsteps at once when this does not hinder understanding too much and, to improve readability, we underline the parts that change and we define two abbreviations:
\begin{align*}
  \text{check} &=
  \Ssequence{\SWin \Spause \white{\{0, 1\}}}
            {\SWin \Sloop{\big( \SWin
                (\Sif{R}{\SWin \Sexit{2} \white{\{2\}}}
                        {\SWin \Spause \white{\{0, 1\}}}
                ) \white{\{0, 1, 2\}}
            \big) } \white{\{1, 2\}}} \\
  \text{body} &= \Ssequence{\SWin \big(
                        \Ssequence{\SWin (
                          \Sparallel{\SWin \Sawimm{A}  \white{\{0, 1\}}}
                                    {\SWin \Sawimm{B} \white{\{0, 1\}}}
                          \,) \white{\{0, 1\}} }
                          {\SWin \Semit{O} \white{\{0\}}}
                        \big) \white{\{0, 1\}}}
                        {\SWin \Sloop{(\SWin \Spause \white{\{1\}})} \white{\{1\}}}
\end{align*}

\[
  \begin{array}{@{}r@{\;}l@{}}
    %% check
    \text{check} &=
      \Ssequence{\SWin \Spause \white{\{0, 1\}}}
            {\SWin \Sloop{
            \big( \SWin
            (\Sif{R}{\SWin \Sexit{2} \white{\{2\}}}
                    {\SWin \Spause \white{\{0, 1\}}}
            ) \white{\{0, 1, 2\}}
                    \big) } \white{\{1, 2\}}} \\
    &\microexp{2}{\{ B \}}{}{
       \Ssequence{\SWin \Spause \marked{\white{\{1\}}}}
            {\SWin\Sloop{
            \big( \SWin
            (\Sif{R}{\marked{\nogo} \Sexit{2} \white{\{2\}}}
                    {\SWin \Spause \white{\{0, 1\}}}
            ) \white{\{0, 1, 2\}}
                    \big) } \white{\{1, 2\}} }} \\
    &\micro{\{ B \}}{}{
       \Ssequence{\SWin \Spause \white{\{1\}}}
            {\SWin\Sloop{
            \big( \SWin
            (\Sif{R}{\nogo \Sexit{2} \marked{\whiteAll}}
                    {\SWin \Spause \white{\{0, 1\}}}
            ) \white{\{0, 1, 2\}}
                    \big) } \white{\{1, 2\}} }} \\
    &\microexp{2}{\{ B \}}{}{
       \Ssequence{\SWin \Spause \white{\{1\}}}
            {\SWin\Sloop{
            \big( \SWin
            (\Sif{R}{\nogo \Sexit{2} \whiteAll}
                    {\SWin \Spause \white{\{0, 1\}}}
            ) \marked{\white{\{0, 1\}}}
                    \big) } \marked{\white{\{1\}}} }} = \text{check'} \\
    %% body
    \text{body} &= \Ssequence{\SWin \big(
                        \Ssequence{\SWin (
                          \Sparallel{\SWin \Sawimm{A}  \white{\{0, 1\}}}
                                    {\SWin \Sawimm{B} \white{\{0, 1\}}}
                          \,) \white{\{0, 1\}} }
                          {\SWin \Semit{O} \white{\{0\}}}
                        \big) \white{\{0, 1\}}}
         {\SWin \Sloop{(\SWin \Spause \white{\{1\}})} \white{\{1\}}} \\
         &\microexp{2}{\{ B \}}{}{
             \Ssequence{\SWin \big(
                        \Ssequence{\SWin (
                          \Sparallel{\SWin \Sawimm{A}  \marked{\white{\{1\}}}}
                                    {\SWin \Sawimm{B} \marked{\white{\{0\}}}}
                          \,) \white{\{0, 1\}} }
                          {\SWin \Semit{O} \white{\{0\}}}
                        \big) \white{\{0, 1\}}}
                       {\SWin \Sloop{(\SWin \Spause \white{\{1\}})} \white{\{1\}}}
         } = \text{body'} \\
     %% ABROi
    \text{ABROi} &=
    \Sloop{
      \left( \SWin \Strap{ \SWin
        \big( \Sparallel{ \SWin \text{check}
           \white{\{1, 2\}}
        }
            {\SWin (\Ssequence{\SWin \text{body} \white{\{1\}}
              }
                     {\SWin \Sexit{2} \white{\{2\}}}
                    ) \white{\{1, 2\}} } \big) \white{\{1, 2\}}
      } \white{\{0, 1\}} \right)
    } \\
    &\microexp{7}{\{ B \}}{}{
    \Sloop{
      \left( \SWin \Strap{ \SWin
        \big( \Sparallel{ \SWin \marked{\text{check'}}
           \white{\{1, 2\}}
        }
            {\SWin (\Ssequence{\SWin \marked{\text{body'}} \white{\{1\}}
              }
                     {\SWin \Sexit{2} \white{\{2\}}}
                    ) \white{\{1, 2\}} } \big) \white{\{1, 2\}}
      } \white{\{0, 1\}} \right)
    }} \\
    &\microexp{2}{\{ B \}}{}{
    \Sloop{
      \left( \SWin \Strap{ \SWin
        \big( \Sparallel{ \SWin \text{check'}
           \marked{\white{\{1\}}}
        }
            {\SWin (\Ssequence{\SWin \text{body'} \white{\{1\}}
              }
                     {\marked{\nogo} \Sexit{2} \white{\{2\}}}
                    ) \white{\{1, 2\}} } \big) \white{\{1, 2\}}
      } \white{\{0, 1\}} \right)
    }} \\
    &\micro{\{ B \}}{}{
    \Sloop{
      \left( \SWin \Strap{ \SWin
        \big( \Sparallel{ \SWin \text{check'}
           \white{\{1\}}
        }
            {\SWin (\Ssequence{\SWin \text{body'} \white{\{1\}}
              }
                     {\nogo \Sexit{2} \marked{\whiteAll}}
                    ) \white{\{1, 2\}} } \big) \white{\{1, 2\}}
      } \white{\{0, 1\}} \right)
    }} \\
    &\microexp{3}{\{ B \}}{}{
    \Sloop{
      \left( \SWin \Strap{ \SWin
        \big( \Sparallel{ \SWin \text{check'}
           \white{\{1\}}
        }
            {\SWin (\Ssequence{\SWin \text{body'} \white{\{1\}}
              }
                     {\nogo \Sexit{2} \whiteAll}
                    ) \marked{\white{\{1\}}} } \big) \marked{\white{\{1\}}}
      } \marked{\white{\{1\}}} \right)
    }} = \text{ABROi'} \\
  \end{array}
\]

Notice that at this point, we have only used the statuses of signals (A, B and R) but the input color of ABROi has not been set yet.
%And yet, this is still enough to deduce that the loop will not terminate 
Let us use now the fact that execution is started.
\[
  \begin{array}{@{}r@{\;}l@{}}
    %% ABROi'
    \go \text{ABROi'} \white{\{1\}}
    &\microexp{5}{\{ B \}}{}{
    \go \Sloop{
      \left( \marked{\go} \Strap{ \marked{\go}
        \big( \Sparallel{ \marked{\go} \text{check'}
           \white{\{1\}}
        }
            {\marked{\go} (\Ssequence{\marked{\go} \text{body'} \white{\{1\}}
              }
                     {\nogo \Sexit{2} \whiteAll}
                    ) \white{\{1\}} } \big) \white{\{1\}}
      } \white{\{1\}} \right)
    } \white{\{1\}}
    } \\
    %% check'
    \go \text{check'} \white{\{1\}}
    &\micro{\{ B \}}{}{
      \go \left (
      \Ssequence{\marked{\go} \Spause \white{\{1\}}}
            {\SWin\Sloop{
            \big( \SWin
            (\Sif{R}{\nogo \Sexit{2} \whiteAll}
                    {\SWin \Spause \white{\{0, 1\}}}
            ) \white{\{0, 1\}}
                    \big) } \white{\{1\}} }
    \right) \white{\{1\}}} \\
    &\micro{\{ B \}}{}{
      \go \left (
      \Ssequence{\go \Spause \marked{\black{1}}}
            {\SWin\Sloop{
            \big( \SWin
            (\Sif{R}{\nogo \Sexit{2} \whiteAll}
                    {\SWin \Spause \white{\{0, 1\}}}
            ) \white{\{0, 1\}}
                    \big) } \white{\{1\}} }
    \right) \white{\{1\}}} \\
    &\microexp{3}{\{ B \}}{}{
      \go \left (
      \Ssequence{\go \Spause \black{1}}
            {\marked{\nogo} \Sloop{
            \big( \marked{\nogo}
            (\Sif{R}{\nogo \Sexit{2} \whiteAll}
                    {\marked{\nogo} \Spause \white{\{0, 1\}}}
            ) \white{\{0, 1\}}
                    \big) } \white{\{1\}} }
    \right) \white{\{1\}}} \\
    &\microexp{3}{\{ B \}}{}{
      \go \left (
      \Ssequence{\go \Spause \black{1}}
            {\nogo \Sloop{
            \big( \nogo
            (\Sif{R}{\nogo \Sexit{2} \whiteAll}
                    {\nogo \Spause \marked{\whiteAll}}
            ) \marked{\whiteAll}
                    \big) } \marked{\whiteAll}}
    \right) \white{\{1\}}} \\
    &\micro{\{ B \}}{}{
      \go \left (
      \Ssequence{\go \Spause \black{1}}
            {\nogo \Sloop{
            \big( \nogo
            (\Sif{R}{\nogo \Sexit{2} \whiteAll}
                    {\nogo \Spause \whiteAll}
            ) \whiteAll
                    \big) } \whiteAll}
    \right) \marked{\black{1}}} \\
  \end{array}
\]
\[
  \begin{array}{@{}r@{\;}l@{}}
    %% body'
    \go \text{body'} \white{\{1\}}
    &\microexp{4}{\{ B \}}{}{
      \go \big( \Ssequence{\marked{\go} \big(
                        \Ssequence{\marked{\go} (
                          \Sparallel{\marked{\go} \Sawimm{A}  \white{\{1\}}}
                                    {\marked{\go} \Sawimm{B} \white{\{0\}}}
                          \,) \white{\{0, 1\}} }
                          {\SWin \Semit{O} \white{\{0\}}}
                        \big) \white{\{0, 1\}}}
        {\SWin \Sloop{(\SWin \Spause \white{\{1\}})} \white{\{1\}}}
        \big) \white{\{1\}}
        } \\
    &\microexp{2}{\{ B \}}{}{
      \go \big( \Ssequence{\go \big(
                        \Ssequence{\go (
                          \Sparallel{\go \Sawimm{A} \marked{\black{1}}}
                                    {\go \Sawimm{B} \marked{\black{0}}}
                          \,) \white{\{0, 1\}} }
                          {\SWin \Semit{O} \white{\{0\}}}
                        \big) \white{\{0, 1\}}}
        {\SWin \Sloop{(\SWin \Spause \white{\{1\}})} \white{\{1\}}}
        \big) \white{\{1\}}
        } \\
    &\micro{\{ B \}}{}{
      \go \big( \Ssequence{\go \big(
                        \Ssequence{\go (
                          \Sparallel{\go \Sawimm{A} \black{1}}
                                    {\go \Sawimm{B} \black{0}}
                          \,) \marked{\black{1}} }
                          {\SWin \Semit{O} \white{\{0\}}}
                        \big) \white{\{0, 1\}}}
        {\SWin \Sloop{(\SWin \Spause \white{\{1\}})} \white{\{1\}}}
        \big) \white{\{1\}}
        } \\
    &\micro{\{ B \}}{}{
      \go \big( \Ssequence{\go \big(
                        \Ssequence{\go (
                          \Sparallel{\go \Sawimm{A} \black{1}}
                                    {\go \Sawimm{B} \black{0}}
                          \,) \black{1} }
                          {\marked{\nogo} \Semit{O} \white{\{0\}}}
                        \big) \white{\{0, 1\}}}
        {\SWin \Sloop{(\SWin \Spause \white{\{1\}})} \white{\{1\}}}
        \big) \white{\{1\}}
        } \\
    &\micro{\{ B \}}{}{
      \go \big( \Ssequence{\go \big(
                        \Ssequence{\go (
                          \Sparallel{\go \Sawimm{A} \black{1}}
                                    {\go \Sawimm{B} \black{0}}
                          \,) \black{1} }
                          {\nogo \Semit{O} \marked{\whiteAll}}
                        \big) \white{\{0, 1\}}}
        {\SWin \Sloop{(\SWin \Spause \white{\{1\}})} \white{\{1\}}}
        \big) \white{\{1\}}
        } \\
    &\micro{\{ B \}}{}{
      \go \big( \Ssequence{\go \big(
                        \Ssequence{\go (
                          \Sparallel{\go \Sawimm{A} \black{1}}
                                    {\go \Sawimm{B} \black{0}}
                          \,) \black{1} }
                          {\nogo \Semit{O} \whiteAll}
                        \big) \marked{\black{1}}}
        {\SWin \Sloop{(\SWin \Spause \white{\{1\}})} \white{\{1\}}}
        \big) \white{\{1\}}
        } \\
    &\microexp{2}{\{ B \}}{}{
      \go \big( \Ssequence{\go \big(
                        \Ssequence{\go (
                          \Sparallel{\go \Sawimm{A} \black{1}}
                                    {\go \Sawimm{B} \black{0}}
                          \,) \black{1} }
                          {\nogo \Semit{O} \whiteAll}
                        \big) \black{1}}
        {\marked{\nogo} \Sloop{(\marked{\nogo} \Spause \white{\{1\}})} \white{\{1\}}}
        \big) \white{\{1\}}
        } \\
    &\microexp{2}{\{ B \}}{}{
      \go \big( \Ssequence{\go \big(
                        \Ssequence{\go (
                          \Sparallel{\go \Sawimm{A} \black{1}}
                                    {\go \Sawimm{B} \black{0}}
                          \,) \black{1} }
                          {\nogo \Semit{O} \whiteAll}
                        \big) \black{1}}
        {\nogo \Sloop{(\nogo \Spause \marked{\whiteAll})} \marked{\whiteAll}}
        \big) \white{\{1\}}
        } \\
    &\micro{\{ B \}}{}{
      \go \big( \Ssequence{\go \big(
                        \Ssequence{\go (
                          \Sparallel{\go \Sawimm{A} \black{1}}
                                    {\go \Sawimm{B} \black{0}}
                          \,) \black{1} }
                          {\nogo \Semit{O} \whiteAll}
                        \big) \black{1}}
        {\nogo \Sloop{(\nogo \Spause \whiteAll)} \whiteAll}
        \big) \marked{\black{1}}
        } \\
  \end{array}
\]
We observe that the final microstates for check' and body' are total, so cannot execute further.
Let us call them check'' and body'' respectively.
Finally, we get for ABROi:
\[
  \begin{array}{@{}r@{\;}l@{}}
    %% ABROi'
    \go \text{ABROi'} \white{\{1\}}
    &\microexp{5}{\{ B \}}{}{
    \go \Sloop{
      \left( \marked{\go} \Strap{ \marked{\go}
        \big( \Sparallel{ \marked{\go} \text{check'}
           \white{\{1\}}
        }
            {\marked{\go} (\Ssequence{\marked{\go} \text{body'} \white{\{1\}}
              }
                     {\nogo \Sexit{2} \whiteAll}
                    ) \white{\{1\}} } \big) \white{\{1\}}
      } \white{\{1\}} \right)
    } \white{\{1\}}
    } \\
    &\microexp{24}{\{ B \}}{}{
    \go \Sloop{
      \left( \go \Strap{ \go
        \big( \Sparallel{ \go \marked{\text{check''}}
           \black{1}
        }
            {\go (\Ssequence{\go \marked{\text{body''}} \black{1}
              }
                     {\nogo \Sexit{2} \whiteAll}
                    ) \white{\{1\}} } \big) \white{\{1\}}
      } \white{\{1\}} \right)
    } \white{\{1\}}
    } \\
    &\micro{\{ B \}}{}{
    \go \Sloop{
      \left( \go \Strap{ \go
        \big( \Sparallel{ \go \text{check''}
           \black{1}
        }
            {\go (\Ssequence{\go \text{body''} \black{1}
              }
                     {\nogo \Sexit{2} \whiteAll}
                    ) \marked{\black{1}} } \big) \white{\{1\}}
      } \white{\{1\}} \right)
    } \white{\{1\}}
    } \\
    &\microexp{3}{\{ B \}}{}{
    \go \Sloop{
      \left( \go \Strap{ \go
        \big( \Sparallel{ \go \text{check''}
           \black{1}
        }
            {\go (\Ssequence{\go \text{body''} \black{1}
              }
                     {\nogo \Sexit{2} \whiteAll}
                    ) \black{1} } \big) \marked{\black{1}}
      } \marked{\black{1}} \right)
    } \marked{\black{1}}
    } \\
  \end{array}
\]
From this final microstate, we can read that:
\begin{itemize}
  \item O is not emitted because its only emitter inside body'' is $\nogo \Semit{O} \whiteAll$,
  \item the activated $\Spause$ and $\Sawimm s$ statements are $\go \Spause \black{1}$ inside check'' and $\go \Sawimm{A} \black{1}$ inside body''.
\end{itemize}

\subsection{Properties of the Microstep Semantics
\texorpdfstring{\marginlink[\#micro_lt]{Esterel.Semantics.Microstep}{-2mm}}{}}
%^^^^^^^^^^^^^^^^^^^^^^^^^^^^^^^^^^^^^^^^^^^^^^^^

As it is lower-level than previous semantics, the reader my wonder which properties of the previous semantics are preserved by the microstep semantics.
In fact, it enjoys most of the properties of other Kernel Esterel semantics we have studied so far:
\begin{theorem}
  \label{thm:microstep-properties}
  The microstep semantics enjoys the following properties:
  \begin{itemize}
    \item The base statement and Sel values are unchanged by execution:
    \marginlink[\#micro_base]{Esterel.Semantics.Microstep}{0mm} \\
      \hfill $\forall E, p, p'\!, \; \micro E p {p'} \implies \base{p'} = \base p \land Sel(p') = Sel(p)$;
    \item The input color is unchanged by execution:
      \marginlink[\#micro_toplevel_input]{Esterel.Semantics.Microstep}{0mm}
      $\forall E, p, p'\!, \; \micro E p {p'} \implies \inC p = \inC{p'}$;
    \item Each microstep increases information:
      \marginlink[\#micro_lt]{Esterel.Semantics.Microstep}{0mm}
      $\forall E, p, p'\!, \; \micro E p {p'} \implies p < p'$; \\
      Corollary: Total microstates cannot execute.
    %% \item The execution in a microstep is local, that is, execution either happens in a sub-microstate or in the output color:
    %%   $\forall E, \Sin, p, \Sout, \, p'\!, \Sout'\!, \; \micro E {\Sin p \Sout} {\Sin p' \Sout'} \implies p = p' \lor \Sout = \Sout'$;
    %% \item Microstates are compatible with $<$ (and $\le$) on input colors: \\
    %%   \hfill $\forall E, p, \Sin, \; \inC p < \Sin \implies \micro E p {p'} \implies \exists p'', \micro E {\changeI \Sin p} p'' \land p' < p'' \land \micros E {\changeI \Sin p'}{p''}$;
  \end{itemize}
\end{theorem}
\noindent
Since there is no output event~$E'$\!, the event domain preservation property does not make sense.

The reader may have noticed that an important property of previous semantics is missing: determinism.
And in fact, we do lose it, because microsteps are local and may happen in several places in parallel, for instance when starting both branches of $\Spar p q$ (rules parIL and parIR).
This may seem like a serious issue as the Esterel language is meant to program critical systems, in which determinism is often paramount.
Hopefully, we have the next best results: confluence and termination.

\paragraph*{Determinism, confluence, and termination}
The reason why determinism is so important for a semantics is to ensure uniqueness of the execution path and therefore of the result.
\emph{Confluence} is a property that allows to recover uniqueness of the result for non-deterministic semantics: it expresses that for any two execution paths, we can execute them further to reach the same state.
In particular, the final result (if any) must be the same (as it cannot execute further).
When all possible execution paths are finite, Newman's lemma~\cite{Newman-lemma} shows that \emph{local confluence}, where only single step executions need to be considered, is enough.
\begin{center}
  %% \begin{subfigure}
    % TODO: find how to draw good diagrams with TikZ
    \begin{tikzpicture}[node distance=5em]
      \node (p) {$p$};
      \node[right of= p]  (p1) {$\; p_1$};
      \node[below of= p]  (p2) {$p_2$};
      \node[below of= p1] (p') {$\; p'$};
      \draw[->] (p) -- node[at end] {\raisebox{0.3em}{$\;^*$}} (p1);
      \draw[->] (p) -- node[at end] {$\;\;\;_*$} (p2);
      \draw[->, dashed] (p1) -- node[at end] {$\;\;\;_*$} (p');
      \draw[->, dashed] (p2) -- node[midway, label=below:Confluence] {} node[at end] {\raisebox{0.3em}{$\;^*$}} (p');
    \end{tikzpicture}
    \hspace{10em}
  %%   \subcaption{Confluence}
  %% \end{subfigure}
  %% \begin{subfigure}
    \begin{tikzpicture}[node distance=5em]
      \node (p) {$p$};
      \node[right of= p] (p1) {$p_1$};
      \node[below of= p] (p2) {$p_2$};
      \node[below of= p1] (p') {$p'$};
      \draw[->] (p) -- (p1);
      \draw[->] (p) -- (p2);
      \draw[->, dashed] (p1) -- node[at end] {$\;\;\;_*$} (p');
      \draw[->, dashed] (p2) -- node[midway, label=below:Local Confluence] {} node[at end] {\raisebox{0.3em}{$\;^*$}} (p');
    \end{tikzpicture}
\end{center}

For Esterel, termination of the microstep semantics is quite easy to prove as this semantics is meant to mimic the circuit execution in which every execution step sets the value of (at least) one more wire.
And indeed, if we define the $\N$-valued measure \texttt{Mmeasure}\marginlink[\#Mmeasure]{Esterel.Semantics.Microstate}{0mm} over microstates counting the number of wires still having a value~$\bot$, we can prove that the measure strictly decreases with every microstep.
Thus, from any microstate, only a finite number of microsteps are possible.
\begin{remark}
This measure is actually a different way to define an order between microstates but it is coarser than the Scott ordering of Section~\ref{sec:scott-order-microstate} as it allows comparison between any two microstates, including microstates not having the same underlying base statement or where information evolve in different locations.
\end{remark}

On the other hand, local confluence does not hold for arbitrary microstates.
For example, the (meaningless) microstate ``$\Sif s {\go \Snothing \white{\{ 0 \}}}{\go \Spause  \white{\{ 1 \}}}$'' can produce output color $\black 0$ and $\black 1$ depending on which branch we execute, and cannot be made confluent.
Nevertheless, the execution of an Esterel program will never exhibit such meaningless microstates.
Thus, we need to restrict the proof of local confluence to microstates satisfying a well-formation invariant.
Once we have that invariant called $\VC E p$ (see below), we can prove that the microstep semantics is confluent:
\begin{theorem}[Confluence]
  \marginlink{Esterel.Proofs.MicroConfluence}{-2mm}
  \label{thm:microstep-confluence}
  The microstep semantics is locally confluent from any well-formed starting microstate.
  As it is also terminating, by Newman's lemma~\cite{Newman-lemma}, it is then confluent (from any well-formed starting microstate).
\end{theorem}
%% \begin{proof} \marginlink{Esterel.Proofs.MicroConfluence}{-2mm}
%%   Local confluence is proved by case analysis (280 cases to consider).
%% \end{proof}

\paragraph*{The \VC E p predicate\marginlink[\#valid_coloring]{Esterel.Proofs.ValidColoring}{-2mm}}
\label{sec:valid-coloring}

The predicate $\VC E p$ represents the fact that microstates appearing along the evaluation of an Esterel program are not arbitrary: they satisfy some invariants coming from the circuit semantics and from the circuit translation of Esterel.
More precisely, it expresses that
\begin{enumerate}
  \item the input and output colors inside the microstate~$p$ are coherent with the information they have access to, that is, the output value of a gate contains no more information than its input values allow (circuit semantics invariant);
  \item Sel and Go are exclusive and so are branches of a test or a sequence (Esterel circuit translation invariant);
  \item all signals used in~$p$ are declared in~$E$ (well-formation invariant).
\end{enumerate}
The last point is added only to ensure the well-formation condition $\VD E p$ without which the execution of an Esterel program does not makes sense.
In particular, $\VC E p$ then implies $\VD E {\base p}$.\marginlink[\#valid_coloring_valid_dom]{Esterel.Proofs.ValidColoring}{-2mm}

The set of constraints for a microstate $\Sin p \Sout$ is defined by case analysis over~$p$ and is given in Figure~\ref{fig:valid-coloring}.
The interested reader is invited to look at the Coq code, which is more explicit and commented and less likely to contain a mistake.
\newcommand\spand{\quad\!\land\quad\!}
\begin{figure}
  The $\VC E {\Sin p \Sout}$ property is defined by induction over~$p$.
  For readability, we omit two parts in these cases:
  \begin{itemize}
    \item all cases contain \texttt{input\_invariant}$(E, p)$ and
    \item all compound microstates contain recursive calls for sub-microstates. \\
      This recursive call is explicitly shown for $\Ssignaldecl s p$ because of the change in the environment~$E$.
  \end{itemize}
\[
  \begin{array}{@{}l@{\quad}c@{\quad}l@{}}
    \Snothing        & \mapsto & sel = false \spand \WIRE 0 \Sin \Sout \\[0.5em]
    \Spause          & \mapsto & \white{\{0, 1\}} \le \Sout \spand \PAUSE{sel}{\Sin}{\Sout} \\[0.5em]
    \Sexit k         & \mapsto & sel = false \spand \WIRE{k+2}{\Sin}{\Sout} \\[0.5em]
    \Semit s         & \mapsto & s ∈ E \spand sel = false \spand \WIRE 0 \Sin \Sout \\[0.5em]
    \Sawimm s        & \mapsto & s ∈ E \spand \white{\{0, 1\}} \le \Sout \\
                     &         & \land\quad (\Sin = \nogores \implies \Sout \le \whiteAll)
                                 \spand (\Sin \neq \nogores \land s^\bot \in E \implies \Sout = \white{\{0, 1\}}) \\
                     &         & \land\quad (\Sin = \gores \land s^+ \in E \implies \Sout \le \black 0)
                                 \spand (\Sin \neq \gores \land \Sin \neq \nogores \land s^+ \in E \implies \Sout \le \white{\{0\}}) \\
                     &         & \land\quad (\Sin = \gores \land s^- \in E \implies \Sout \le \black 1)
                                 \spand (\Sin \neq \gores \land \Sin \neq \nogores \land s^- \in E \implies \Sout \le \white{\{1\}}) \\[0.5em]
    \Strap p         & \mapsto & \inC p \le \Sin \spand sel = Sel(p) \, \spand ↓\outC{\fromstmt{\base p}} \le \Sout \le ↓\outC p \\[0.5em]
    \Sshift p        & \mapsto & \inC p \le \Sin \spand sel = Sel(p) \, \spand ↑\outC{\fromstmt{\base p}} \le \Sout \le ↑\outC p \\[0.5em]
    \Ssuspend s p    & \mapsto & s ∈ E \spand sel = false \spand Go(p) \le Go(\Sin) \spand Res(p) \le Res(\Sin) \band Sel(p) \band \lnot E(s) \\
                     &         & \land\quad \{1\} \cup \outC{\fromstmt{\base p}} \le \Sout \le \SuspNow{Res(\Sin) \band Sel(p) \band E(s)}{\outC p} \\[0.5em]
    \Sif s p q       & \mapsto & s^b \in E \spand sel = Sel(p) \mathbin{||} Sel(q) \spand Sel(p) \band Sel(q) = false \\
                     &         & \land\quad Go(p) \le Go(\Sin) \band b \spand Go(q) \le Go(\Sin) \band \lnot b \\
                     &         & \land\quad Res(p) \le Res(\Sin) \spand Res(q) \le Res(\Sin) \\
                     &         & \land\quad \outC{\fromstmt{\base p}} \cup \outC{\fromstmt{\base p}} \le \Sout \le \outC p \cup \outC q \\[0.5em]
    \Ssequence p q   & \mapsto & sel = Sel(p) \mathbin{||} Sel(q) \spand Sel(p) \band Sel(q) = false \\
                     &         & \land\quad \inC p \le \Sin \spand Go(q) \le (\outC p)[0] \spand Res(p) \le Res(\Sin) \\
                     &         & \land\quad (\outC{\fromstmt{\base p}} \setminus 0) \cup \outC{\fromstmt{\base q}} \le \Sout \le (\outC p \setminus 0) \cup \outC q \\[0.5em]
    \Sparallel p q   & \mapsto & sel = Sel(p) \mathbin{||} Sel(q) \spand \inC p \le \Sin \spand \inC q \le \Sin \\
                     &         & \land\quad \outC{\fromstmt{\base p}} \cup \outC{\fromstmt{\base p}} \le \Sout \le \mathrm{synch}_{Sel(p), Sel(q)}(\outC p, \outC q) \\[0.5em]
    \Ssignaldecl s p & \mapsto & sel = Sel(p) \spand \inC p \le \Sin \spand \outC{\fromstmt{\base p}} \le \Sout \le \outC p \\
                     &         & \VC{\addEvent s {\toevent{p}{\addEvent s \bot E}} E}{p}\\
  \end{array}
\]
where
\begin{itemize}
  \item $\WIRE k \Sin \Sout$ expresses in terms of input/output relation that $Go(\Sin)$ is directly fed into the $k$-th component of $\Sout$, which is the case for the $\Snothing$, $\Sk k$, and $\Semit s$ statements:
    \begin{center}
    $\WIRE k \Sin \Sout =
    \begin{cases}
      \Sout = \white{\{k\}} & \text{when\quad} Go(\Sin) = \bot \\
      \Sout \leq \whiteAll & \text{when\quad} Go(\Sin) = - \\
      \Sout \leq \black k & \text{when\quad} Go(\Sin) = + \\
    \end{cases}$
    \end{center}
  \item $\PAUSE{Sel}{\Sin}{\Sout}$ expresses the input/output relation for the $\Spause$ (\Tpause) statement:
    \begin{center}
    $\PAUSE{Sel}{\Sin}{\Sout} =
    \begin{cases}
      \Sout \leq \black 1 & \text{when\quad} Go(\Sin) = + \\
      \Sout \leq \black 0 & \text{when\quad} Res(\Sin) = + \text{ and } Sel = + \\
      \Sout \leq \whiteAll & \text{when\quad} \Sin = \nogores \\
      \Sout \leq \white{\{0\}} & \text{when\quad} Go(\Sin) = - \text{ and } Sel = + \text{ and } Res(\Sin) \neq + \\
      \Sout \leq \white{\{1\}} & \text{when\quad} Go(\Sin) = \bot \text{ and } (Res(\Sin) = - \text{ or } Sel = -) \\
      \Sout \leq \white{\{0, 1\}} & \text{otherwise} \\
    \end{cases}$
    \end{center}
\end{itemize}
  \caption{The $\VC E p$ invariant.}
  \label{fig:valid-coloring}
\end{figure}

For illustration purposes, we detail the case $\Sif s p q$ and how it should be understood:
\begin{itemize}
  \item the input color must satisfy its Esterel invariant:
    \hfill $\texttt{input\_invariant} \, sel \, \Sin$;
  \item the signal~$s$ must exist in the environment~$E$; let~$b$ be its value:
    \hfill $s^b \in E$;
  \item $\Sif s p q$ is active iff one of~$p$ and~$q$ is:
    \hfill $sel = Sel(p) \mathbin{||} Sel(q)$;
  \item both~$p$ and~$q$ cannot be active at the same time:
    \hfill $Sel(p) \band Sel(q) = false$;
  \item $Go(p)$ and~$G(q)$ are computed from $\Sin$ and $b$:
    \hfill $Go(p) \le  Go(\Sin) \band b$ and $Go(q) \le  Go(\Sin) \band \lnot b$;
  \item the rest of the input color is transmitted to~$p$ and~$q$:
    \hfill $Res(p) \le Res(\Sin)$ and $Res(q) \le Res(\Sin)$;
  \item $out$ contains at least the static information:
    \hfill $\outC{\fromstmt{\base p}} \cup \outC{\fromstmt{\base p}} \le \Sout$;
  \item \dots and at most the information of~$p$ and~$q$:
    \hfill $\Sout \le \outC p \cup \outC q$;
  \item recursively,~$p$ and~$q$ must be well-colored:
    \hfill $\VC E p$ and $\VC E q$.
\end{itemize}

To ensure that the $\VC E p$ property is indeed an invariant, we prove that it holds at the start of execution and is preserved by the microstep semantics:

\begin{theorem}[Invariant of the circuit translation] \hfill
  \label{thm:microstep-invariants}
  \begin{itemize}
    \item For any well-formed statement~$p$, $\VC E {\fromstmt{p}}$ holds\marginlink[\#from_stmt_invariant]{Esterel.Proofs.ValidColoring}{0mm} ;
    \item For any well-formed state~$\state p$, $\VC E {\fromstate{\state p}}$ holds\marginlink[\#from_state_invariant]{Esterel.Proofs.ValidColoring}{0mm} ;
    \item The property \texttt{valid\_coloring} is preserved by the microstep semantics\marginlink[\#valid_coloring_inductive]{Esterel.Proofs.ValidColoring}{0mm}.
  \end{itemize}
\end{theorem}

This invariant directly entails the following properties of the control flow in the microstep semantics, which are rather intuitive and strengthen our confidence in its definition:
\begin{itemize}
  %% \item This invariant is compatible with information increase about signals: \\
  %%   $\begin{array}{r@{\;}l}\forall E E' p, & dom(E) = dom(E') \implies E \le E' \implies \\ & \VC E p \implies \VC{E'} p\text{;} \end{array}$
  %%   \marginlink[\#valid_coloring_Dom_le_compat]{Esterel.Proofs.ValidColoring}{0mm}
  \item Control is never created, only propagated:
    \begin{itemize}
      \item $\VC E {\Sin p \black k} \implies \Sin = \gores$; \marginlink[\#valid_coloring_Black_is_exec]{Esterel.Proofs.ValidColoring}{0mm}
      \item $\VC E {\Sin p \whiteAll} \implies \Sin = \nogores$; \marginlink[\#valid_coloring_empty_is_not_exec]{Esterel.Proofs.ValidColoring}{0mm}
    \end{itemize}
  \item Branches in a test and a sequence are not simultaneously active:
    \begin{itemize}
      \item$\VC E {\Sin (\Sif s p q) \Sout} \implies \inC p \neq \gores \lor \inC q \neq \gores$;
          \marginlink[\#valid_coloring_if_exclusive]{Esterel.Proofs.ValidColoring}{0mm}
      \item $\VC E {\Sin (\Ssequence p q) \Sout} \implies \outC p \setminus 0 \neq \black{k} \lor \inC q \neq \gores$;
          \marginlink[\#valid_coloring_seq_out_exclusive]{Esterel.Proofs.ValidColoring}{0mm}
    \end{itemize}
  %% \item A well-formed microstate contains more information than its starting version: \\
  %%     %% $\VC E p \implies \texttt{from\_term}\,(back\_to\_term(p)) \le p$ \\
  %%     %% where $back\_to\_term$ is a fonction mapping a microstate to the term it started from.
  %%     $\VC E p \implies \outC{\texttt{from\_stmt}(\,\base{p})} \le \outC p$ \\
  %%   \marginlink[\#valid_coloring_out_le]{Esterel.Proofs.ValidColoring}{0mm}
\end{itemize}

\paragraph*{Useful properties for optimization and reasoning}
%% In addition to the general results above, we have other invariants of microstep execution or results about executions from specific microstates.
Using our microstep semantics, we can prove properties of control flow which can lead to better simulation or reasoning, as well as justify some design choices.
We illustrate this on two examples.

First, dead code does not create a completion code:
%% fact that non-executed microstate gets output color $\whiteAll$ or in inactive microstate, the Res wire does not matter.}

\begin{lemma}[Execution of non-executed microstates]
  \label{thm:no-exec}
  A non-executed microstate gets output color $\whiteAll$. \marginlink[\#microsteps_dead_stmt_strong]{Esterel.Semantics.Microstep}{0mm} Formally, for all~$p$,~$\Sout$ and~$E$:
  \begin{itemize}
    \item $\micros E {\nogo p \Sout} {\nogo p \whiteAll}$ ;
    \item For $\nores$ (that is, when Sel$⁺$), we merely have $\micros E {\nores p \Sout} {\nores p' \white K}$, as the Go value may be required to eliminate some completion codes (\eg, 1 for pause).
  If we additionally have Go$^-$, then we recover $\micros E {\nores p \Sout} {\nores p \whiteAll}$.
  \end{itemize}
\end{lemma}
\noindent
This property is the (partial) converse of the fact that control is never created, only propagated and can speed up reasoning about $\Sif s p q$ statements.
Indeed, it means that there is no need to simulate the branch not taken in a branching test, we already know that all wires, both control and completion codes, will eventually be set to 0.

%% \begin{lemma}[Invariance by Susp]
%%   \label{thm:micro-ignores-Susp}
%%   Microstep execution is invariant by changes to the Susp wires: any two  well-formed microstates equal up to Susp values have the same executions up to Susp value computation.

%%   If we denote by $\equiv_{Susp}$ the equality of microstates up to Susp wires, we can write it:
%%   \[
%%     \begin{array}{l@{\,}l}
%%       \forall p_1, p_2, p_1',
%%       & \VC E {p_1} \implies \VC E {p_2} \implies p_1 \equiv_{Susp} p_2 \implies \\
%%       & \micro E {p_1}{p_1'} \implies p_1' \equiv_{Susp} p_2 \lor \Big( \exists p_2', \micro E {p_2}{p_2'} \land  p_1' \equiv_{Susp} p_2' \Big)~.
%%     \end{array}
%%   \]
%% \end{lemma}

Second, when the starting microstate is not active (that is, it has Sel$^-$), the Res wire has no impact on the execution of the current reaction.
In other words, in inactive microstates, only Go matters, Res does not.
\begin{lemma}[Invariance by Res]
  \label{thm:micro-ignores-Res}
  \marginlink[\#micro_surface_ignores_Res]{Esterel.Semantics.SurfaceIgnoresRes}{0mm}
  In an inactive microstate, the Res wire does not matter.
  In other words, an inactive microstate is invariant by changes to the Res wires: any two equal (up to Res values) well-formed microstates have the same executions (up to Res value computation).

  If we denote by $\equiv_{Res}$ the equality of microstates up to Res wires, we can write it:
  \[
  \noindent
    \begin{array}{l@{\,}l}
      \forall p, q, p',
      & \VC E p \implies \!\VC E q \implies \\
      & Sel(p) = - \implies p \equiv_{Res} q \implies \\
      & \micro E {p}{p'} \implies p' \equiv_{Res} q \lor \Big( \exists q', \micro E q {q'} \land  p' \equiv_{Res} q' \Big)~.
    \end{array}
  \]
\end{lemma}
We do not need the hypothesis $Sel(q) = -$ because $p \equiv_{Res} q$ entails $Sel(p) = Sel(q)$.
The disjunction in the conclusion can be understood as follows: either the microstep from $p$ to $p'$ computed a Res value, in which case we have $p' \equiv_{Res} p \equiv_{Res} q$ (left disjunct), or it computed something else and $q$ can mimic it (right disjunct).
This result justifies why the notation introduced in Section~\ref{sec:input-color-notations} do not consider $Res$ when $Sel = -$.
%% \begin{corollary}[Execution of inert statements]
%%   \label{thm:inert-statement-no-Res-Susp}
%%   The execution of inert microstates only depends on their Go wire, the Res and Susp ones are not relevant.
%% \end{corollary}

\subsection{Refinement between the constructive state semantics and the microstep semantics}
%^^^^^^^^^^^^^^^^^^^^^^^^^^^^^^^^^^^^^^^^^^^^^^^^^^^^^^^^^^^^^^^^^^^^^^^^^^^^^^^^^^^^^^^^^^^^^^^^^^
\label{sec:refinement-state-microstep}

In order to reduce the proof burden and focus on control and completion codes, some parts of the circuit are not modeled in the microstep semantics, namely the computation of signal values and the synchronizer for the parallel statement.
In the first case, it amounts to a big OR gate over the Go component of the input color of all emitters.
In the second case, we work with the specification of the synchronizer rather than a given implementation, thus permitting to reason about what it \emph{should} do and allowing us to swap and compare implementations.
%In particular, we can express the maximum amount of information it can return depending on its inputs, which justifies the fact that the current one is optimal. \todoLR{Maybe I should be more precise about this? Probably later, so give a reference to a later section?}

Unlike the previous semantics which tackle reactions, the microstep one deals with steps \emph{within} one reaction.
This discrepancy makes it harder to relate both semantics because the microstep semantics is much more precise and thus can distinguish states that cannot be expressed in the state semantics and whose translations in the state semantics are identical.
Nevertheless, the biggest obstacle when connecting the constructive state semantics and the microstep semantics comes from the \CanName{} and \MustName{} functions.

The state semantics uses the auxiliary functions \CanName{} and \MustName{} to compute the value of a local signal then performs the evaluation using this value.
Thus, the body of the statement is evaluated twice: first partially to get the value of the local signal and later from scratch again to compute the full step using the value of the local signal.
On the contrary, in the microstep semantics, once the value of the local signal is computed, we continue the evaluation of the microstate using this value \emph{without restarting from scratch}, exactly as a circuit does.
This is the main difference with the microstep semantics presented in the \emph{Compiling Esterel} book~\cite{CompilingEsterel}.

Therefore, for the simulation we need to translate the condition for applying a rule for local signals, $s \in \MustS p E$ or $s \notin \CanS + p E$, into the corresponding condition in the microstep semantics, that is: there exists a microstep sequence $\micros E p {p'}$ for some~$p'$ emitting~$s$ or some~$p'$ surely not emitting it.
This is actually the hardest part about the simulation proof, which makes sense as it is also an essential difference between the semantics.
Because the definitions of \MustName{} and \CanName{} are mutually recursive with recursive calls for $\Sseq p q$ requiring information about completion codes and recursive calls for $\Sif s p q$ changing the tag $+$ or $\bot$ on \CanName, we actually need six mutually recursive properties: three for signals and three for completion codes.
\begin{lemma}[Interpretation of \MustName/\CanName{} as microsteps]
  \label{thm:MustCan-micro}
  \marginlink[\#Must_Can_to_event_start]{Esterel.Proofs.MicroMustCan}{0mm}

  For any statement~$p$ and any event~$E$ containing all signals used by~$p$, \ie satisfying the $\VD E p$ predicate, we have the following properties:
  \begin{itemize}
    \item Any signal that must be emitted is indeed emitted after enough microsteps: \\
      $\forall s, s \in \MustS p E \implies \exists p', \micros E {\changeI{\go}{\fromstmt{p}}}{p'} \land s^+ \in \toevent{p'}{E}$ ;
    \item Any signal that cannot be emitted is indeed not emitted after enough microsteps: \\
      $\forall s, s \notin \CanS + p E \implies \exists p', \micros E {\changeI{\go}{\fromstmt{p}}}{p'} \land s^- \in \toevent{p'}{E}$ ;
    \item Any signal that can never be emitted is indeed not emitted no matter what the input color is: \\
      $\forall s, s \notin \CanS{\bot} p E \implies \forall \Sin, \exists p', \micros E {\changeI{\Sin}{\fromstmt{p}}}{p'} \land s^- \in \toevent{p'}{E}$ ;
    \item If the completion code must be $k$, then it is indeed $k$ after enough microsteps: \\
      $\forall k, k \in \MustK p E \implies \exists p', \micros E {\changeI{\go}{\fromstmt{p}}}{p'} \land \outC{p'} = \black k$ ;
    \item After enough microsteps, the possible completion codes are a subset of what they can be: \\
      $\exists p', \micros E {\changeI{\go}{\fromstmt{p}}}{p'} \land \outtoC{\outC{p'}} \subseteq \CanK + p E$ ;
    \item Regardless of input color, after enough microsteps, the possible completion codes are a subset of what they can be when not knowing if the statement is evaluated or not: \\
      $\forall \Sin, \exists p', \micros E {\changeI{\Sin}{\fromstmt{p}}}{p'} \land \outtoC{\outC{p'}} \subseteq \CanK{\bot} p E$.
  \end{itemize}
  where $\outtoCsymbol$ is a function converting an output color to a set of completion codes, defined by
  $\outtoC{\black k} = \{ k \}$ and $\outtoC{\white K} = K$.

  Six similar properties hold when resuming an active state rather than starting an inactive statement.\marginlink[\#Must_Can_to_event_resume]{Esterel.Proofs.MicroMustCan}{0mm}
\end{lemma}

Once we can translate \MustName{} and \CanName{} into a sequence of microsteps, we can prove the simulation theorem: any step in the constructive state semantics can be performed by a sequence of steps in the microstep semantics.
  Furthermore, the final microstate is total and produces the same completion code and the same output event: 
\begin{theorem}
  \label{thm:micro-refine-CSSmicro}
  For all $p$, $E$, $E'$, $k$, and $\term{p'}$, if $\CSSs p E {E'} k {\term{p'}}$ then there exists a total microstate $p''$ such that
  \marginlink[\#sCSS_microsteps]{Esterel.Proofs.CSS_Micro}{5mm}
  \[
    \micros E {\changeI{\go}{(\fromstmt{p})} \,} {p''}, \quad
    \toterm{p''}{E} = \term{p'}, \quad
    out(p'') = k \quad \text{and} \quad \toevent{p''}{E} = E'.
  \]
  For all $\state p$, $E$, $E'$, $k$, and $\term{p'}$, if $\CSSr{\state p} E {E'} k {\term{p'}}$ then there exists a total microstate $p''$ such that
  \marginlink[\#rCSS_microsteps]{Esterel.Proofs.CSS_Micro}{7mm}
  \[
    \micros E {\changeI{\res}{(\fromstate{\state p})} \,} {p''}, \quad
    \toterm{p''}{E} = \term{p'}, \quad
    out(p'') = k \quad \text{and} \quad \toevent{p''}{E} = E'.
  \]
\end{theorem}

\begin{proof}
  The proof amounts to finding an evaluation order leading from the state at the beginning of the instant to the one at the end of the instant.
  This order corresponds to input-to-output evaluation in circuits: first uses start rules, then context ones,\footnote{For $\Tsequence p q$, the rule transmitting the output of~$p$ to the input of~$q$ is naturally used between the evaluations of~$p$ and~$q$.} and finally end ones.
  Lemma~\ref{thm:MustCan-micro} and confluence are critical for the $\Ssignaldecl s p$ case.
  We use Lemma~\ref{thm:no-exec} to handle the parts of the terms which are not executed.
\end{proof}

Notice that this theorem is not an equivalence between the semantics, only a simulation.
The converse direction cannot hold because the microstep semantics is local whereas the state one is global.
For example, if we take a valid statement~$p$ and a non constructive one~$q$ (say, $\Ssignaldecl s {(\Sif s {\Snothing}{\Semit s})}$), then $\Spar p q$ cannot execute under the state semantics whereas the microstep one can execute~$p$ independently of~$q$, so that the final microstate would correspond to some $\Spar{p'} q$ in the state semantics.

%% \begin{remark}
%%   We suspect that this order is optimal in the number of evaluation steps.
%%   Intuitively, each wire needs a rule to set its value, which gives a lower bound on the number of reduction steps.
%%   The inputs-to-outputs order intuitively reaches this lower bound, since we always have the maximal information available.
%%   This intuition may not be completely accurate because of local signals which may need partial evaluation of a term, including using rules which does not produce a maximal value for a wire.
%% \end{remark}

%%  LocalWords:  microstep microstate microstates synchronizer
%%  LocalWords:  shorthands

%%% Local Variables: ***
%%% tex-main-file: "EsterelCoq.tex"  ***
%%% ispell-local-dictionary: "american-w_accents"  ***
%%% End: ***

%% \subsection{Translation to constructive circuits} %GB
%% %^^^^^^^^^^^^^^^^^^^^^^^^^^^^^^^^^^^^^^^^^^^^^^^^

\section{Properties and interpreters of Esterel semantics in Coq} %LR
%%%%%%%%%%%%%%%%%%%%%%%%%%%%%%%%%%%%%%%%%%%%%%%%%%%%%%%%%%%%%%%%%
\label{sec:coq-proofs}

\subsection{Summary of the properties of Esterel semantics}
%^^^^^^^^^^^^^^^^^^^^^^^^^^^^^^^^^^^^^^^^^^^^^^^^^^^^^^^^^^
\label{sec:Esterel-semantics-properties}

Most semantics satisfy a common body of properties, inherent to the Kernel Esterel language, that we recall now.
These properties are presented below only for the constructive semantics but can be adapted to other semantics in a straightforward way (see the Coq files of each semantics).
Table~\ref{tab:Esterel-semantics-properties} gives an overview of which properties are satisfied by which semantics.
\newcommand\yes{Yes}
\newcommand\no{No}
\newcommand\NA{N/A}

\begin{table}
  \centering
  \begin{tabular}{l|cccc}
    \hline
    \hfill Semantics  &  Logical   & Constructive &  Constructive   & Microstep \\
    Properties        & Semantics  &  Semantics   & State Semantics & Semantics \\
    \hline
    Deterministic     &    \no     &     \yes     &      \yes       &    \no    \\
    Confluent         &    \no     &     \yes     &      \yes       &    \yes   \\
    %% Domain invariance &    \yes    &     \yes     &      \yes       &    \NA    \\
    Total output      &    \yes    &     \yes     &      \yes       &    \NA    \\
    Inactive derivative &  weak    &     weak     &     strong      &    \NA    \\
    Statement invariance & \no     &     \no      &      \yes       &    \yes   \\
    \hline
  \end{tabular}

  \caption{Properties of the various Esterel Semantics.}
  \label{tab:Esterel-semantics-properties}
\end{table}
\begin{description}
  \item[Determinism]
    \marginlink[\#CBS_deterministic]{Esterel.Semantics.CBS}{0mm}
    The semantics is deterministic:
    \[
      \forall p \, E \, E_1' \, E_2' \, k_1 \, k_2 \, p_1' \, p_2', \quad
      \CBS p E {E_1'}{k_1}{p_1'} \implies \CBS p E {E_2'}{k_2}{p_2'} \implies
      p_1' = p_2' \land E_1' = E_2' \land k_1 = k_2
    \]
    Notice that this property does not hold for the logical semantics!
    For example, the program $\Ssignaldecl s {(\Sif{s}{\Semit s}{\Snothing})}$ presented in the introduction is non-deterministic.
    The microstep semantics is also non-deterministic because execution can happen in various places of a microstate.
    Nevertheless, it is confluent so that in practice this non-determinism is not really an issue.
  %% \item[Domain invariance]
  %%   \marginlink[\#CBS_dom]{Esterel.Semantics.CBS}{0mm}
  %%   The domain of the input event~$E$ and the output event~$E'$ are the same:
  %%   \[
  %%     \forall p \, E \, E' \, k \,p', \CBS p E {E'} k {p'} \implies \dom(E) = \dom(E')
  %%   \]
  \item[Total output]
    \marginlink[\#CBS_Total]{Esterel.Semantics.CBS}{0mm}
    Output events are always total, that is, they contain only mappings to $+$ or $-$ and none to~$\bot$.
    This is trivially true for the logical semantics, as there is no~$\bot$ involved.
    \[
      \forall p \, E \, E' \, k \,p', \CBS p E {E'} k {p'} \implies \Total{E'} \quad \text{with} \quad \Total{E'} := \forall s \in \dom(E'), E'(s) \neq \bot
    \]
    This property does not make sense for the microstep semantics as there is no output event.
  \item[Inactive derivative]
    \marginlink[\#CBS_inert_derivative]{Esterel.Semantics.CBS}{0mm}
    If the completion code is not~$1$, then the derivative is~$\Snothing$:
    \[
      \forall p \, E \, E' \, k \,p', \CBS p E {E'} k {p'} \implies k \neq 1 \implies p' = \Snothing
    \]
    This result means that if a statement is not pausing, then its execution has finished (either by normal termination or raising an exit).
    This result is the motivation for introducing the $\deltafun k p$ function.
  \item[Underlying statement invariance]
    \marginlink[\#sCSS_base]{Esterel.Semantics.CSS}{0mm}
    The underlying statement does not change during execution:
    \[
      \forall p \, E \, E' \, k \,p', \CSSr p E {E'} k {p'} \implies \base{p} = \base{p'}
    \]
    This property is true only for the state and microstep semantics.
\end{description}

%\paragraph{Properties specific to some semantics}
Of course, not all properties are shared between all semantics, some properties are still specific to each semantics.
For example, the result ``Inactive derivative'' above can be strengthened into an equivalence for the constructive state semantics, as the derivative of $\Spause$ (\Tpause) is no longer an inactive statement but the active state $\state{\Spause}$.

\begin{lemma}[Strong Inactive derivative]
  \label{thm:inactive-derivative-state}
    \marginlink[\#sCSS_inert_derivative]{Esterel.Semantics.CSS}{0mm}
    For all $p$, $E$, $E'$, $k$, and $p'$, if $\CSSs p E {E'} k {\term{p'}}$ then we have $k \neq 1 \iff \term{p'} = p$. \\
    \marginlink[\#rCSS_inert_derivative]{Esterel.Semantics.CSS}{0mm}
    For all $\state p$, $E$, $E'$, $k$, and $p'$, if $\CSSr{\state p} E {E'} k {\term{p'}}$ then we have $k \neq 1 \iff \term{p'} = \base{\state p}$.
\end{lemma}

%% For example, the constructive semantics and the constructive state semantics are deterministic whereas the logical semantics and the microstep semantics are not.
%% This is a real issue of the logical semantics but the microstep semantics compensates for it by being confluent and terminating (Theorem~\ref{thm:microstep-confluence}).
%% \begin{lemma}[Invariance of the set of visible signals]
%%   \label{thm:domain-invariance}
%%   \marginlink[\#LBS_dom]{Esterel.Semantics.LBS}{0mm} For all $p$, $E$,
%%   $E'$, $k$, and $p'$, if $\LBS p E {E'} k {p'}$ then the domains
%%   (that is, the sets of signals that are in scope) of~$E$ and~$E'$ are
%%   the same.
%% \end{lemma}

%% \begin{lemma}[Inert derivative]
%%   \label{thm:inert-derivative}
%%   \marginlink[\#LBS_inert_derivative]{Esterel.Semantics.LBS}{0mm}
%%   For all $p$, $E$, $E'$, $k$, and $p'$, if $\LBS p E {E'} k {p'}$ and $k \neq 1$
%%   then $p' = \Snothing$.
%% \end{lemma}
%\begin{remark}
%  We almost have an equivalence between $k \neq 1$ and $p' = \Snothing$, except for the \Tpause{} case.
%  The state semantics (see Section~\ref{sec:constructive-state-semantics}) will correct this mismatch.
%\end{remark}

\subsection{Interpreters}
%^^^^^^^^^^^^^^^^^^^^^^^^

For all semantics, we have defined interpreters in Coq and proven their correctness and completeness with respect to the corresponding SOS semantics.
These interpreters can be used as an executable semantics of Kernel Esterel programs.
They take as input a program (either as a statement, a state, or a microstate depending on the interpreter) and an input event; they output either nothing (\texttt{None}) when the input program cannot execute, or the output event, completion code and derivative of a possible execution step.

As an illustration, below is the correctness statement for the interpreter of the constructive behavioral semantics.
The same type of result holds for both variants (start and resumption) of the constructive state semantics.
\begin{theorem}[Correctness of the CBS interpreter]
  \marginlink[\#CBS_interp_correct]{Esterel.Proofs.CBS}{0mm}
  For all $p$, $E$, $E'$, $k$, and $p'$, if all signals of~$p$ belong to~$E$ (that is, $\VD p E$),
  then we have $\texttt{CBS\_interp}(p, \, E) = \texttt{Some} (E', k, p') \iff \CBS p E {E'} k {p'}$.
\end{theorem}
\begin{proof}
  All such proofs are straightforward (if sometimes long) and follow the structure of the interpreter or of the semantics in the premise.
  See the Coq files of each semantics for details.
\end{proof}

Since the microstep semantics may execute in several parts of a microstate, we need to choose an evaluation order (even though the actual order does not matter because of confluence).
We use the start-context-end order: first start rules, then context rules, then end rules.
Because of this choice, the completeness result no longer holds (one could pick a different order that the interpreter does not follow) but instead we have a weaker result: if a microstep execution step is possible, then the interpreter is not stuck.
\begin{lemma}[Correctness and partial completeness of the microstep interpreter]
  \[
  \begin{array}{ll}
    \forall p, E, p', \;
    \texttt{micro\_interp}(p, E)= \texttt{Some}\, p' \implies \micro E p p'
    \marginlink[\#micro_interp_correct]{Esterel.Proofs.Microstep}{0mm} \\
    \forall p, E, p', \;
      \micro E p p' \implies \texttt{micro\_interp}(p, E) \neq None
    \marginlink[\#micro_interp_complete]{Esterel.Proofs.Microstep}{0mm} \\
  \end{array}
  \]
\end{lemma}

We can recover a complete interpreter by using the iterated microstep semantics, that is, the reflexive transitive closure of the microstep semantics.
Then, all microstates can execute (thanks to reflexivity) so we can drop \texttt{Some}/\texttt{None} and have a simpler correctness statement:
\[
  \forall p, E, \micros E p {\texttt{micros\_interp}(p, E)}
  %\;\not\!\micro E {}{}
  \quad \land \forall p'\!, \; \lnot  \Big(\micro E {\texttt{micros\_interp}(p, E)} p' \Big)
  \marginlink[\#micros_interp_correct]{Esterel.Proofs.Microstep}{0mm}
\]
Confluence ensures that $\texttt{micros\_interp}(p, E)$ is the final state of any execution chain starting from~$p$ in the environment~$E$.

Finally, the interpreter for the logical semantics is only correct but not complete as the logical semantics is non deterministic and non confluent.
Indeed, given a statement~$p$ that can non-deterministically pause or terminate and a statement~$q$ that cannot execute, then on $\Sseq p q$ the interpreter will choose to make~$p$ terminate and fail to execute~$q$ making the overall $\Sseq p q$ statement non executable.
On the contrary, the LBS semantics allows~$p$ to pause, which results in a valid reaction as~$q$ is not considered.
For instance, one may take~$p$ to be $\Ssignaldecl s {(\Sif s {\Semit s} \Spause)}$ and~$q$ to be $\Ssignaldecl s {(\Sif s \Snothing {\Semit s})}$.
\begin{remark}
It would be possible to create a complete interpreter for the LBS semantics by returning the set of possible executions instead of just choosing one.
This seems to be of little practical interest as the LBS semantics is not very useful anyway.
\end{remark}

\subsection{Proof effort}

Table~\ref{tab:proof-effort} presents the size of the Coq formalization.
%\todoLR{Add a graph of dependency between files?}
In the semantics definitions, we can observe that the first three semantics (LBS, CBS, CSS) are defined with their main properties in a few hundred lines, the CSS being a bit bigger because it features two sets of rules (start and resumption).
On the other hand, the microstep semantics is twice as big, and even the definition of microstates is thrice bigger than the one of statements and states.
Half of these numbers are due to the interpreters and their proofs of correctness.

Among proofs, the largest ones are by far local confluence (1400 l.) and the link between \MustName{}/\CanName{} and microsteps for the start and resumption cases (1500 l. each).
They are used in the simulation proof between the CSS and the microstep semantics, which is comparably short (450 l. for each of the start and resume variants).
The proof of local confluence is done very naively by a big case analysis on each possible execution steps, in total 280 cases to consider.
It is very likely that it could be considerably shortened by a smarter decomposition: half of the cases are symmetric of the other half and many rules simply commute.

\begin{table}[btp]
  \begin{tabular}{rrr@{\qquad}l}
    \hline
     Spec  &  Proof & Comme\rlap{nts} & \qquad File \\ \hline
      141  &   104  &    18  &  Util/Coqlib.v \\
       28  &   130  &    25  &  Util/MapSig.v \\
      274  &   651  &     9  &  Util/MapList.v \\
       45  &     0  &     5  &  Util/Notations.v \\
      171  &   201  &    23  &  Util/Events.v \\
      198  &   251  &    38  &  Util/SemanticsCommon.v \\
      857  &  1337  &   118  &  \textbf{Total for Util/} \\ \hline
      262  &   200  &    40  &  Semantics/LBS.v \\
       60  &   605  &   120  &  Semantics/MustCan.v \\
      163  &   515  &    94  &  Semantics/CBS.v \\
       52  &   160  &    42  &  Semantics/StateMustCan.v \\
      318  &   812  &   115  &  Semantics/CSS.v \\
       58  &    85  &    11  &  Semantics/InputColor.v \\
      248  &   213  &    43  &  Semantics/OutputColor.v \\
      552  &   740  &    89  &  Semantics/Microstate.v \\
      396  &   380  &   139  &  Semantics/Microstep.v \\
      158  &   983  &   118  &  Semantics/Microsteps.v \\
     2267  &  4693  &   811  &  \textbf{Total for Semantics/} \\ \hline
       58  &   348  &    53  &  Proofs/CBS\_LBS.v \\
        9  &   274  &    87  &  Proofs/CBS\_CSS.v \\
      245  &  1101  &   193  &  Proofs/ValidColoring.v \\
       23  &  1382  &    81  &  Proofs/MicroConfluence.v \\
       17  &    27  &     0  &  Proofs/MicrostepsFacts.v \\
       80  &  3032  &   211  &  Proofs/MicroMustCan.v \\
       11  &   821  &    97  &  Proofs/CSS\_Micro.v \\
       98  &   200  &    16  &  Proofs/SurfaceIgnoresRes.v \\
      541  &  7185  &   738  &  \textbf{Total for Proofs/} \\ \hline
      311  &    98  &    29  &  Definitions.v \\
%       17  &    51  &     5  &  Others \\ %(AxiomCheck.v, Graph.v, Semantics.v) \\
     3976  & 13313  &   1696 &  \textbf{Total} \\ \hline
  \end{tabular}
  \caption{Size of the Coq formalization, as given by \texttt{coqwc}, ordered by dependency within each folder.}
  \label{tab:proof-effort}
\end{table}

%%% Local Variables: ***
%%% tex-main-file: "EsterelCoq.tex"  ***
%%% ispell-local-dictionary: "american-w_accents"  ***
%%% End: ***

% LocalWords:  microstep
 % LR

\section{Conclusion and future work} %GB %%%%%%%%%%%%%%%%%%%%
\label{sec:conclusion}

This paper has presented new work on the Coq-based formal
verification of the Kernel Esterel semantics chain presented in the
web draft book~\cite{Berry:ConstructiveSemanticsOfPureEsterel} written
by the second author 20 years ago, augmented by the definition of a
new fine-grain operational semantics developed and formally verified
by the first author.  Unlike the coarser-grain operational semantics
introduced by Dumitru Potop in~\cite{CompilingEsterel}, this new
operational semantics closely mimics the circuit translation
introduced in~\cite{Berry:ConstructiveSemanticsOfPureEsterel}, which
was the basis of both the academic Esterel v5 compiler to C and the
industrial Esterel v7 compiler to hardware circuits or
software code. Its main advantage compared to circuits is its SOS
inductive logical rules presentation, which makes the formal proof of its
adequacy vs. the constructive semantics much simpler with current Coq
technology.

Let us present a short history of this work. When he completed the
book~\cite{Berry:ConstructiveSemanticsOfPureEsterel}, the second author
did not even include sketches of correctness proofs, for two reasons.
First, he was completely involved in the industrial
development and applications of the Esterel v7 compiler, which handles
a much more expressive and modular language than v5. The v7 language
and compiler have been successfully used by several hardware and
software companies, before being taken over by Synopsys (they are not
available any more). Second, he thought that only machine-checked
correctness proofs would be fully convincing, but the formal verifiers
were not yet mastered enough in the beginning of the 2000’s for such
an endeavor.

The real game changer in the semantics/compiling field was the
Coq-based construction and formal verification of the CompCert
Coq-verified compiler by Xavier Leroy and his
team~\cite{Leroy-Compcert-Coq}. This large-scale work was itself
inspired by the Coq-based formal proof of a major mathematical theorem
by Gonthier and his team~\cite{Gonthier2013-OddOrderTheorem} (Gonthier
himself had played a major role in the early development of Esterel).
Our formal proofs, technically developed by the first author, followed
a similar track. No mistakes were found in the book’s claims, apart
from the occasional typo, although it did not even contain informal
correctness proofs.

Nevertheless, it must be noted that this paper does not finish
the job, for three reasons: first, it limits the kernel language to
loop-free programs; second, the new operational semantics is not yet
formally related to the Boolean circuits generation described
in~\cite{Berry:ConstructiveSemanticsOfPureEsterel}, which is the heart of
the v5 and v7 compilers. We briefly detail these points below.
Third, but less importantly, we did not yet study the
full Esterel language, which also
supports data definitions and calculations. As shown by the way
the v5 and v7 compilers handle data, this only requires adding data dependencies to
the signal dependency structure, which should raise no difficult
issues.

\subsection{Future work: statement and signal reincarnation}
\label{sec:reincarnation}

We chose to postpone the study of loops because they lead to a subtle
phenomenon: \emph{statement and signal reincarnation}. When embedded
in possibly nested loop statements, a given Esterel statement can be
reincarnated (\ie re-executed) several times in the same instant,
but in a clean way: a new incarnation can only occur when the previous
one has terminated. Consequently, a locally scoped signal may take
several simultaneous incarnation statuses in the same instant when its
declaration is embedded in loops. But these are taken in strict
succession of scope entering and exiting, as are the reincarnations of
its declaration statements; this makes only one signal incarnation
status visible at a time in any statement.
Thus, several incarnations can appear but each will disappear before the next one appears.
Reincarnation occurs naturally when
programming, and raises no difficult problems in practice.

Three quite different technical solutions have been developed to deal
with reincarnation. The first naive one consists of syntactically
duplicating the body of each loop in an inside-out way, which
completely suppresses reincarnation since Esterel requires that the
body of a loop cannot terminate instantaneously.
In $p\!*$ the loop body $p$ may terminate and be restarted at the same instant.
But in $(p;p)\!*$, since $p$ cannot terminate instantaneously, the
first and second $p$'s cannot be executed in the same instant.
But this trivial solution is impractical since inside-out copying
may exponentially increase the source code size.

In~\cite{Berry:ConstructiveSemanticsOfPureEsterel}, the second author
used a more semantic way based on \emph{incarnation indices} that tag
in a unique way each occurrence of a given statement or signal in
execution proof trees. Since restarting a statement $p$ can only
execute its \emph{surface}, \ie the statements instantaneously
reachable when $p$ starts, such indices only lead to worst-case
proof tree and generated circuit sizes that are quadratic compared to the statement
size. Furthermore, the size increase is only quasi-linear in practice for
most useful programs.

In~\cite{TardieuDeSimone:LoopsInEsterel}, Tardieu introduced an
alternative approach based on a semantically-guided partial syntactic
duplication of statements, which copies the surfaces of source
statements also in a quadratic worst-case and most often quasi-linear way.
The price to pay is the addition of a {\Tgotopause} statement, \ie a
{\Tgoto} statement limited to {\Tpause} targets. Both approaches are
compatible, since Tardieu’s approach can also be viewed as a way of
duplicating statements based on a static conservative calculation of
possible incarnation indices.

It is not yet clear which approach should be preferred to complete our
Coq-based verification. Incarnation indices have a simple and clean
algebraic structure, which should make them easy to add to the current
semantic rules. But the proofs will become heavier since they will
require the addition of an extra decreasing incarnation index-related
ordinal. Tardieu's approach may look simpler at first glance, because
the proof might require less changes, but it has another drawback: the
addition of {\Tgotopause} breaks locality of programs because a
{\Tgotopause} may activate a {\Tpause} arbitrarily far away, making
SOS semantics and inductive reasoning less adequate.

\subsection{Relating the operational semantics to the circuit translation}

The relation between the microstep semantics and the circuit semantics is the only missing step to prove the compiler correct: by combining the simulation results of this paper and this last missing simulation, one gets a proof of semantic preservation between the original Esterel program seen in the constructive semantics and its translation as a synchronous digital circuit seen in the circuit semantics.

Let us now finally analyze the technical difficulty of relating SOS
semantics to circuits.
Intuitively, the circuit generated by an Esterel program by the translation
of~\cite{Berry:ConstructiveSemanticsOfPureEsterel} can be viewed as a
folding of all possible proof trees of the new microstep operational semantics
into a single and possibly cyclic graph of Boolean gates and wires (note
that static cycles yield no dynamic electrical problems for
constructive circuits,
see~\cite{MendlerShipleBerry:ConstructiveCircuits}). Registers
implement {\Tpause} statements, with value 1 when a pause is
active in a given program state or 0 when it is inactive. Specific
gate clusters represent proof-tree nodes, and
specific wire groups represent proof-tree branches, taking value sets
that precisely encode the input and output colors attached to statements in the new
operational semantics. In theory, making formal this informal
correspondence should raise no real problem. But, as the saying goes: "In
theory, theory and practice are the same and in practice, they are
not".
The unfortunate practical problem is that finding a circuit representation in Coq well-suited to this formalization work is critical to its success.
 % GB

\bibliographystyle{plainurl}% the recommended bibstyle
\bibliography{esterel} % file containing references

\end{document}